\def\BibTeX{{\rm B\kern-.05em{\sc i\kern-.025em b}\kern-.08em
    T\kern-.1667em\lower.7ex\hbox{E}\kern-.125emX}}
\setlist[itemize]{label={$\triangleright$}}
\newtheorem{theorem}{Theorem}
\newtheorem{lemma}[theorem]{Lemma}
\newtheorem{corollary}[theorem]{Corollary}
\newtheorem{fact}[theorem]{Fact}
\newtheorem{assumption}[theorem]{Assumption}
\crefname{assumption}{Assumption}{Assumptions}
\newtheorem{question}[theorem]{Question}
\theoremstyle{definition}
\newtheorem{definition}[theorem]{Definition}
\newcommand{\bM}{{\bm M}}
\newcommand{\bP}{{\bm P}}
\newcommand{\bQ}{{\bm Q}}
\newcommand{\bbH}{\mathbb{H}}
\newcommand{\cC}{\mathcal{C}}
\newcommand{\cX}{\mathcal{X}}
\newcommand{\cY}{\mathcal{Y}}
\newcommand{\cM}{\mathcal{M}}
\newcommand{\cE}{\mathcal{E}}
\newcommand{\cR}{\mathcal{R}}
\newcommand{\cA}{\mathcal{A}}
\newcommand{\eps}{\varepsilon}
\newcommand{\N}{\mathbb{N}}
\newcommand{\Z}{\mathbb{Z}}
\newcommand{\R}{\mathbb{R}}
\DeclareMathOperator{\E}{\mathbb{E}}
\newcommand{\cD}{\mathcal{D}}
\newcommand{\td}{\tilde{d}}
\newcommand{\tx}{\tilde{x}}
\newcommand{\tr}{\tilde{r}}
\newcommand{\bu}{{\bm u}}
\newcommand{\bx}{{\bm x}}
\newcommand{\balpha}{{\boldsymbol{\alpha}}}
\newcommand{\beps}{{\boldsymbol{\varepsilon}}}
\newcommand{\bdelta}{{\boldsymbol{\delta}}}
\renewcommand{\phi}{\varphi}
\newcommand{\hC}{\widehat{C}}
\newcommand{\hM}{\widehat{M}}
\newcommand{\hT}{\widehat{T}}
\newcommand{\hL}{\widehat{L}}
\newcommand{\hV}{\widehat{V}}
\newcommand{\hP}{\widehat{P}}
\newcommand{\oD}{\overline{D}}
\newcommand{\ou}{\overline{u}}
\newcommand{\oM}{\overline{M}}
\newcommand{\obM}{\overline{\bM}}
\newcommand{\obu}{\overline{\bu}}
\DeclareMathOperator{\Lap}{Lap}
\DeclareMathOperator{\negl}{negl}
\DeclareMathOperator{\poly}{poly}
\DeclareMathOperator{\inds}{ind}
\DeclareMathOperator{\diam}{diam}
\newcommand{\indicator}{\mathds{1}}
\newcommand{\NP}{\mathsf{NP}}
\newcommand{\set}[1]{\left \{ #1 \right \}}
\newcommand{\bit}{\{0,1\}}
\newcommand{\inparen}[1]{\left ( #1 \right )}
\newcommand{\seq}[1]{\set{#1}}  %
\definecolor{Gred}{RGB}{219, 50, 54}
\definecolor{Ggreen}{RGB}{60, 186, 84}
\definecolor{Gblue}{RGB}{72, 133, 237}
\definecolor{Gyellow}{RGB}{247, 178, 16}
\definecolor{ToCgreen}{RGB}{0, 128, 0}
\definecolor{myGold}{RGB}{231,141,20}
\definecolor{myBlue}{rgb}{0.19,0.41,.65}
\definecolor{myPurple}{RGB}{175,0,124}
\newcommand{\diO}{\mathsf{di}\mathcal{O}}
\newcommand{\pcS}{\mathsf{pcS}}%
\newcommand{\diOpcS}{\diO\text{-}\mathsf{for}\text{-}\pcS}
\newcommand{\plaindiO}{\mathsf{plain}\text{-}\diO}
\newcommand{\pcdiO}{\mathsf{pc}\text{-}\diO}
\newcommand{\diOgenS}{\diO\text{-}\mathsf{for}\text{-}\mathsf{genS}}
\newcommand{\gendiO}{\mathsf{gen}\text{-}\diO}
\newcommand{\iO}{\mathsf{i}\mathcal{O}}
\newcommand{\aux}{\mathsf{aux}}
\newcommand{\Sampler}{\mathsf{Sampler}}
\newcommand{\LDSSampler}{\mathsf{LDS}\text{-}\mathsf{Sampler}}
\newcommand{\CDP}{\mathsf{CDP}}
\newcommand{\SDP}{\mathsf{SDP}}
\newcommand{\RR}{\mathsf{RR}}
\newcommand{\LDS}{\mathsf{LDS}}
\newcommand{\eval}{\mathsf{eval}}
\newcommand{\VLDS}{\mathsf{VLDS}}
\newcommand{\NBP}{\mathsf{NBP}}
\newcommand{\Mdio}{\cM_{\mathrm{diO}}}
\newcommand{\Mdioaux}{\cM_{\mathrm{diO}}^{\mathrm{aux}}}
\newcommand{\Mcdp}{\cM_{\mathrm{cdp}}}
\newcommand{\Mtuning}{\cM_{\mathrm{tuning}}}
\newcommand{\Mbase}{\cM_{\mathrm{base}}}
\newcommand{\ADI}{\mathcal{A}^{\mathrm{DI}}}
\newcommand{\ACRH}{\mathcal{A}^{\mathrm{CRH}}}
\newcommand{\Bin}{\mathrm{Bin}}
\newcommand{\myparagraph}[1]{\noindent {\em #1}}
\begin{document}

\title{
	Towards Separating Computational and Statistical Differential Privacy 
}

\author{
\IEEEauthorblockN{Badih Ghazi}
\IEEEauthorblockA{\textit{Google Research} \\
badihghazi@gmail.com}
\and
\IEEEauthorblockN{Rahul Ilango}
\IEEEauthorblockA{\textit{MIT$^*$\thanks{$^*$Part of the work done during an internship at Google Research.}} \\
rilango@mit.edu}
\and
\IEEEauthorblockN{Pritish Kamath}
\IEEEauthorblockA{\textit{Google Research} \\
pritish@alum.mit.edu}
\and
\IEEEauthorblockN{Ravi Kumar}
\IEEEauthorblockA{\textit{Google Research} \\
ravi.k53@gmail.com}
\and
\IEEEauthorblockN{Pasin Manurangsi}
\IEEEauthorblockA{\textit{Google Research} \\
pasin@google.com}
}

\maketitle

\begin{abstract}\boldmath
Computational differential privacy (CDP) is  a natural relaxation of the standard notion of (statistical) differential privacy (SDP) proposed by Beimel, Nissim, and Omri (CRYPTO 2008) and Mironov, Pandey, Reingold, and Vadhan (CRYPTO 2009). In contrast to SDP, CDP only requires privacy guarantees to hold against computationally-bounded adversaries rather than computationally-unbounded statistical adversaries. Despite the question being raised explicitly in several works (e.g., Bun, Chen, and Vadhan, TCC 2016), it has remained tantalizingly open whether there is \emph{any} task achievable with the CDP notion but not the SDP notion. Even a candidate such task is unknown. Indeed, it is even unclear what the truth could be!

In this work, we give the first construction of a task achievable with the CDP notion but not the SDP notion, under the following strong but plausible cryptographic assumptions:
\begin{itemize}[leftmargin=*]
\item Non-Interactive Witness Indistinguishable Proofs,
\item Laconic Collision-Resistant Keyless Hash Functions,
\item Differing-Inputs Obfuscation for Public-Coin Samplers.
\end{itemize}
In particular, we construct a task for which there exists an $\varepsilon$-CDP mechanism with $\varepsilon = O(1)$ achieving $1-o(1)$ utility, but any $(\varepsilon, \delta)$-SDP mechanism, including computationally-unbounded ones, that achieves a constant utility must use either a super-constant $\varepsilon$ or an inverse-polynomially large $\delta$. 

To prove this, we introduce a new approach for showing that a mechanism satisfies CDP:
first we show that a mechanism is ``private'' against a certain class of decision tree adversaries, and then we use cryptographic constructions to ``lift'' this into privacy against computationally bounded adversaries.
We believe this approach could be useful to devise further tasks separating CDP from SDP.
\end{abstract}

\begin{IEEEkeywords}
differential privacy, computational differential privacy, indistinguishability obfuscation
\end{IEEEkeywords}

\section{Introduction}
\label{sec:intro}

The framework of differential privacy (DP) \cite{DworkMNS06, DworkKMMN06} gives formal privacy guarantees on the outputs of randomized algorithms. It has been the subject of a significant body of research, leading to numerous practical deployments including the US census \cite{abowd2018us}, and industrial applications \cite{erlingsson2014rappor,CNET2014Google, greenberg2016apple,dp2017learning, ding2017collecting, LinkedINDP1, LinkedInDP2}.

The definition of DP requires privacy against
computationally unbounded, i.e., statistical, adversaries.
A natural modification is to instead only require privacy against computationally bounded adversaries. In cryptography, considering computationally bounded adversaries instead of statistical ones enables a vast array of applications, like public-key cryptography. Could the same be true for DP?  Despite Beimel, Nissim, and Omri~\cite{BeimelNO08} defining computational differential privacy (CDP) in 2008 (definitions that were further extended by Mironov, Pandey, Reingold, and Vadhan~\cite{MironovPRV09}), the central question of separating it from statistical differential privacy (SDP)\footnote{See \Cref{sec:prelim} for the formal definitions of CDP and SDP.  A good survey of the area can be found in \cite[Section 10]{Vadhan17}.}, in the standard client-server model, remains open:
\begin{question}
\cite[Open Problem 10.6]{Vadhan17}
\label{question:main}
\textsl{
Is there a computational task solvable by a single curator with computational differential privacy but is impossible to achieve with information-theoretic differential privacy?}%
\end{question}

There have been several positive and negative results towards resolving this question. In the positive direction, it is known that in the multi-party setting,  CDP is stronger than SDP~\cite{McGregorMPRTV10, MironovPRV09}. Roughly speaking, this is because secure multi-party computation enables many data curators to simulate acting as a single central curator, without compromising privacy. Still, the multi-party setting seems very different than the single-curator (aka central) setting. Indeed, \cite{McGregorMPRTV10} remark\footnote{This remark is also quoted by Groce, Katz, and Yerukhimovich~\cite{GroceKY11}.} that their ``\textsl{strong separation between (information-theoretic) differential privacy and computational differential privacy  ... stands in sharp contrast with the client-server setting where ... there are not even candidates for
a separation}.''

In the central setting, Bun, Chen, and Vadhan~\cite{BunCV16} show there is a task for which there is a CDP mechanism, but any SDP mechanism for this task must be inefficient (modulo certain cryptographic assumptions). We stress that the task they consider \emph{does} have an inefficient SDP mechanism (with parameters that match their CDP mechanism), so it does not resolve \cref{question:main}. While this may seem like a minor technical point, we emphasize that it is of crucial importance. Perhaps the main practical motivation behind studying CDP is the hope that there are CDP mechanisms for natural tasks with parameters that \emph{beat} the lower bounds against SDP mechanisms. But if, as in the case of the result in \cite{BunCV16}, there exists (even an inefficient) SDP mechanism matching the parameters of the CDP mechanism, then there is no hope of the CDP mechanism's parameters beating SDP lower bounds.

In the negative direction, Mironov, Pandey, Reingold, and Vadhan~\cite{MironovPRV09}  (building on Green and Tao~\cite{GreenT08}, Tao and Ziegler~\cite{TaoZ08}, and Reingold, Trevisan, Tulsiani, and Vadhan~\cite{ReingoldTTV08}) show a ``dense model theorem'' for pairs of random variables with ``pseudodensity'' with each other.  
Mironov et al.~\cite{MironovPRV09} note that (roughly speaking) extending this dense model theorem to handle multiple pairs of random variables would prove that any CDP mechanism could be converted into an SDP mechanism; such an extension is still open~\cite[Open Problem 10.8]{Vadhan17}.

Groce, Katz, and Yerukhimovich~\cite{GroceKY11} show that CDP mechanisms for certain tasks where the output is low-dimensional imply SDP mechanisms. Many natural statistical tasks fall into this category, and consequently, such tasks cannot separate CDP from SDP.  (This result was further strengthened by~\cite{BunCV16}.)  Furthermore,~\cite{GroceKY11} show that CDP mechanisms constructed in a black-box way from a variety of cryptographic objects, such as one-way functions, random oracles, trapdoor permutations, and cryptographic hash functions, cannot separate CDP from SDP.\\

\noindent In summary, there are at least two \emph{barriers} to separate CDP from SDP:
\begin{enumerate}[nosep,leftmargin=*]
    \item \textbf{High-dimensionality:} One needs to consider (perhaps non-natural) tasks with high dimensional outputs;
    \item \textbf{Exotic cryptography:} One needs to use cryptography somewhat specially (perhaps either an exotic primitive or in a non-black-box manner).
\end{enumerate}

\medskip
\noindent In light of these both positive and negative results as well as the lack of a candidate separation, it is not even clear what the truth could be: is there \emph{any} task for which there is a CDP mechanism but no SDP mechanism?\\[-3mm] 

\myparagraph{Our Contributions.}
We show, under plausible cryptographic hypotheses, that there are indeed tasks for which there exist CDP mechanisms but no SDP mechanisms.
 This not only positively answers \Cref{question:main} but also negatively answers the dense model extension question~\cite[Open Problem 10.8]{Vadhan17}.
 We state this result now informally and formalize it later in \Cref{sec:proof-overview}. We also delay discussing our precise cryptographic assumptions to \Cref{subsec: plausibility of crypto}, where we discuss their plausibility in detail.

\begin{theorem}\label{thm:cdp-vs-sdp-main}[Informal version of \Cref{thm:main-formal}]
Under cryptographic assumptions, there exists a task for which there is a $\CDP$ mechanism but no $\SDP$ mechanism.
\end{theorem}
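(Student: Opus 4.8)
The plan is to exhibit an explicit (and necessarily high-dimensional, somewhat artificial) task $\mathcal{T}$ together with a candidate mechanism $M$, and then establish three things: (i) $M$ attains utility $1-o(1)$ on $\mathcal{T}$ with $\eps = O(1)$; (ii) $M$ is $\CDP$; and (iii) no mechanism --- even computationally unbounded --- can be $(\eps,\delta)$-$\SDP$ on $\mathcal{T}$ with constant utility unless $\eps$ is super-constant or $\delta$ is non-negligible (at least $1/\poly(n)$). I would take $\mathcal{T}$ to be defined over datasets $x = (x_1,\dots,x_n)$ in which each record $x_i$ bundles a user's raw data bit together with cryptographic material: a Laconic keyless hash of a private seed and a non-interactive witness-indistinguishable proof that the record is well-formed. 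The mechanism $M$, on input $x$, outputs a differing-inputs obfuscation $\diO(C_x)$ of a circuit $C_x$ engineered so that on a single, publicly computable ``utility input'' it returns a Laplace-noised aggregate statistic of $x$, whereas on a cryptographically hidden family of ``reconstruction inputs'' (reachable only by exhibiting a hash preimage or collision against a hardwired digest) it leaks the individual bits $x_i$. The artificiality and high dimension of $\mathcal{T}$ are deliberate: they are exactly what is needed to evade the known barriers against such a separation.

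Utility (i) is then immediate: evaluate the obfuscated circuit on the utility input; correctness of $\diO$ together with a standard tail bound on Laplace noise gives $1-o(1)$ utility at $\eps = O(1)$. For the $\CDP$ claim (ii), the conceptual core --- which I would isolate as a reusable lemma --- is a two-stage argument. First, show that the circuit family is \emph{private against decision-tree adversaries}: for neighboring datasets $x$ and $x'$, any adversary interacting with $C_x$ through a bounded number of black-box queries (a bounded-depth decision tree) cannot distinguish $C_x$ from $C_{x'}$, because the noised statistic is differentially private by the usual Laplace argument and the only \emph{other} inputs on which $C_x$ and $C_{x'}$ differ are the reconstruction inputs, which a low-depth query strategy cannot find. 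Second, \emph{lift} this to computational indistinguishability of $\diO(C_x)$ and $\diO(C_{x'})$: the pair $(C_x,C_{x'})$ is produced by a public-coin sampler whose coins are exactly the internal randomness of $M$, and decision-tree privacy, combined with collision resistance of the keyless hash, shows that the differing inputs of this sampler remain infeasible to locate even given its coins; differing-inputs obfuscation for public-coin samplers then yields $\diO(C_x) \approx_c \diO(C_{x'})$ and hence $O(1)$-$\CDP$.

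For the impossibility claim (iii), I would argue that constant utility on $\mathcal{T}$ forces the mechanism's output to contain --- up to functional equivalence --- a circuit that agrees with $C_x$ on the relevant inputs, so that an information-theoretic adversary can brute-force the hidden reconstruction inputs (information that is present but only \emph{cryptographically}, not statistically, protected) and recover $\Omega(n)$ of the raw bits $x_i$. Plugging this reconstruction capability into a standard fingerprinting/packing lower bound for differential privacy then forces $\eps = \omega(1)$ or $\delta \ge 1/\poly(n)$ for every $\SDP$ mechanism with constant utility, which is precisely the separation; the point is that the obfuscation-based mechanism is safe against \emph{bounded} adversaries only because de-obfuscation and hash inversion are computationally hard, so an unbounded adversary breaks it.

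I expect the main obstacle to be the lifting step inside (ii): instantiating differing-inputs obfuscation for \emph{public-coin} samplers so that the adversary receives all of the sampler's randomness yet still cannot locate a differing (reconstruction) input. This is exactly why a \emph{keyless} hash is needed --- a public-coin sampler cannot hide a hash key, so the usual punctured-programming arguments must be carried out with a hash that has no key at all --- and why the \emph{laconic} (extreme-compression) property is required to keep the circuit size and the hidden-input set within the bounds demanded by the utility and privacy analyses; the non-interactive witness-indistinguishable proofs enter to certify record well-formedness without revealing the seeds, closing off malformed-input attacks on the obfuscation. The technical heart of the argument is thus turning the information-theoretic decision-tree query lower bound into the cryptographic ``hard to find a differing input'' statement that the obfuscation assumption consumes, and making all of the cryptographic ingredients compose without one primitive's setup leaking what another needs to hide.
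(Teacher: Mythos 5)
Your high-level strategy matches the paper's: define a high-dimensional circuit-output task, prove privacy against decision-tree adversaries, lift to $\CDP$ via $\diO$ for public-coin samplers (using the keyless hash to make adaptive queries effectively non-adaptive), and prove an information-theoretic $\SDP$ lower bound by a packing argument. However, the specific task you propose does not support the $\SDP$ impossibility step, and this is a genuine gap. You define utility by evaluating the output circuit on a single ``utility input'' that returns a Laplace-noised aggregate, with ``reconstruction inputs'' that leak the raw bits $x_i$. You then assert that constant utility forces any $\SDP$ mechanism's output to agree with $C_x$ on the reconstruction inputs. But nothing in the utility function checks those inputs: an unbounded $\SDP$ mechanism can output a circuit that returns the correct noised aggregate on the utility input and is identically zero everywhere else, achieving the same utility with no leakage at all. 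Your lower-bound argument conflates what \emph{your} $\CDP$ mechanism chooses to put in the circuit with what \emph{an arbitrary competing mechanism} is forced to put there. And if you instead try to repair this by having the utility function evaluate the circuit on reconstruction inputs and check they encode $x$, then the utility function must know those inputs (as non-uniform advice), and a $\CDP$ adversary may take the very same advice, defeating your own privacy proof.

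The paper sidesteps this by choosing a utility requirement that is \emph{intrinsic to the output circuit} and does not reference any secret inputs: the circuit must accept $x$ and have diameter at most $\tau$ on its accepting set (restricted to a dense hash-preimage set $\cR$). A circuit satisfying this necessarily pins down $x$ to within $\tau$, so post-processing any useful output (e.g., returning the lexicographically first element of $C^{-1}(1)$) reduces the task to blatant non-privacy on $\cR$, against which a block-wise packing argument applies regardless of computational power. This is also why the paper's use of NIWI differs from yours: NIWI certifies that the output circuit has low diameter (making the utility function efficiently checkable) rather than certifying per-record well-formedness. Your mechanism and decision-tree-to-$\diO$ lift are in the right spirit, but the task itself needs the diameter-style self-certifying constraint, not a ``hidden leaky inputs'' constraint, for the $\SDP$ side to go through.
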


Let us take a step back to discuss the implications of \Cref{thm:cdp-vs-sdp-main}. Although (as we will see in a moment) our task is specifically constructed for the purpose of separating CDP and SDP, the fact that we can separate them at all opens up a possibility that such a separation even holds for some ``natural'' tasks. Indeed, some of the current lower bound techniques for SDP---such as the ubiquitous ``packing lower bounds''\footnote{Specifically, when the packing lower bound requires the use of super-polynomially many datasets, the corresponding adversary does not necessarily run in polynomial time.} (see~\cite{HardtT10})---do not necessarily rule out CDP mechanisms. It seems prudent to carefully reexamine the current lower bound techniques to see whether they also apply to CDP. The ultimate hope for this program would be to employ CDP to overcome the known SDP lower bounds for some more ``natural'' tasks. (Of course, such tasks would also give a more ``natural'' separation of CDP and SDP.)

In fact, the technical approach we use in our construction already suggests a general approach for constructing non-trivial CDP mechanisms that could apply to more tasks. We discuss this in more detail in \Cref{sec:proof-overview}, but the idea is as follows. In order to show a task has a CDP mechanism, first show there is a mechanism for that task that is ``private'' against a certain class of decision tree adversaries. Then, second, use cryptographic assumptions to ``lift'' this into privacy against computational adversaries.\\[-3mm]

\myparagraph{Organization.}
The rest of the paper is organized as follows. \Cref{sec:proof-overview} provides a high-level overview of our techniques as well as a discussion of our cryptographic assumptions and their plausibility. \Cref{sec:prelim} contains the background material and \Cref{sec:prob} formally defines the problems.  We provide our CDP mechanism in \Cref{sec:CDP}, and prove lower bounds against SDP mechanisms in \Cref{sec:SDP}. These two components are put together to prove the main result in \Cref{sec:proof-main-thm}. Finally, we discuss the open problems and future directions in \Cref{sec:conclusion}.

\section{Overview of the Results}
\label{sec:proof-overview}

We will next discuss the high-level overview of our results and techniques. We will sometimes have to be informal here, but all details are formalized later. We first recall how a ``task'' is defined.\footnote{Refer to~\Cref{subsec:prelim-utility} for a more formal definition.} Following~\cite{GroceKY11,BunCV16}, a \emph{task} is defined by an efficiently computable \emph{utility} function $u$ that takes in an input dataset $D$ and a response $y$ such that $u(D, y) = 1$ if $y$ is considered ``useful'' for $D$ and $u(D, y) = 0$ otherwise. A mechanism $M$ is said to be \emph{$\alpha$-useful} for $u$ iff $\E[u(D, M(D))] \geq \alpha$ for all input datasets $D$; 
we will refer to $\alpha$ as the \emph{usefulness} of $M$.
We remark that many well-studied problems---such as linear queries with various error metrics---can be written in this form.

One of our main conceptual contributions is to define a class of tasks that seems to naturally circumvent the two earlier-mentioned barriers---tasks where one needs to output a \emph{circuit.}

\subsection{The Low Diameter Set Problem}

Before we detail why tasks that output a circuit might evade the two barriers, let us describe a concrete example. We call the following the {\em low diameter set} $\LDS_{\tau}$ problem (defined for some parameter $\tau \in \N$): 
\begin{itemize}[leftmargin=*]
    \item \textbf{Given:} dataset $D$ represented as $n$ bits (adjacent datasets differ on a single bit)\footnote{Refer to \Cref{sec:prob} for formal details.}
    \item \textbf{Output:} circuit $C$ mapping $n$ bits to $1$ bit
    \item \textbf{Utility:} $C$ is considered useful %
    if it outputs \begin{itemize}[nosep,label=$\triangleright$]
        \item $1$ on $D$, and
        \item $0$ on all points at distance greater than $\tau$ from $D$.%
    \end{itemize}
\end{itemize}
Informally, this problem asks to output a circuit $C$ such that $C^{-1}(1) \subseteq \bit^n$ contains $D$ and has diameter at most $\tau$.
While this utility function is not efficiently computable, we will address this in \Cref{para:utility}.
Looking ahead, we will ultimately separate CDP from SDP under cryptographic assumptions by considering a ``verifiable'' version of this problem where we only care about datasets in a cryptographically special set.\\[-3mm]

\noindent We now revisit the two barriers and discuss how the distance problem might circumvent them. 
\begin{enumerate}[nosep,leftmargin=*]
\item \textbf{High-dimensionality:} The output of this task is a circuit, which is high-dimensional.
\item \textbf{Exotic cryptography:} Because the output of the task is a circuit, it lends itself to a powerful class of cryptographic objects: \emph{circuit obfuscators} \cite{BarakGIRSVY12}. Roughly speaking, circuit obfuscators take as input a circuit $C$ and output a scrambled, obfuscated circuit $C'$ that computes the same function as $C$ but which, ideally, has the property that ``anything you could do with access to the circuit $C'$, you could do with only black-box access to the function the circuit computes.'' Importantly, obfuscation is \emph{not} in the list of primitives ruled out by the barrier in~\cite{GroceKY11}.
\end{enumerate}

\subsection{SDP Lower Bound}

Our starting point for separating CDP from SDP is the low diameter set problem described above. Indeed, we show that there is no SDP mechanism for this problem for any $\tau$ that is essentially sub-linear in $n$.
\begin{lemma}\label{lem:blatant-lb}
For all $0 < \tau \leq n^{0.9}$ and constant $\eps, \alpha > 0$ and $\delta = 1/n^{c}$ (for some $c > 1$), there is no $(\eps, \delta)$-SDP mechanism for $\LDS_{\tau}$ that is $\alpha$-useful.
\end{lemma}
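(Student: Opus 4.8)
The plan is to use a packing argument, which is the standard template for blatant-non-privacy lower bounds. First I would identify a large family of datasets $D_1, \dots, D_m \in \bit^n$ that are pairwise "far" — concretely, pairwise at Hamming distance strictly greater than $2\tau$ — while still being within the diameter constraints so that the utility function is meaningful. A random code, or an explicit combinatorial design, gives $m = 2^{\Omega(n)}$ such points for any $\tau \le n^{0.9}$ (in fact for any $\tau$ bounded away from $n/2$); the key point is that $m$ is super-polynomial in $n$, which is exactly why this packing does not translate into a polynomial-time adversary and hence does not rule out CDP. The crucial observation is that if $C$ is $\alpha$-useful for $D_i$, then $C$ outputs $1$ on $D_i$ and $0$ on everything at distance $> \tau$ from $D_i$; since any $D_j$ with $j \ne i$ is at distance $> 2\tau > \tau$ from $D_i$, such a $C$ must output $0$ on $D_j$. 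Thus a circuit cannot be simultaneously useful for two different packing points — the "useful sets" $U_i := \{ C : C \text{ useful for } D_i\}$ are pairwise disjoint.

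Next I would run the group-privacy / union-bound argument. Fix the all-zeros dataset $D_0$ (or any fixed reference point); each $D_i$ differs from $D_0$ in at most $n$ bits, so by group privacy $(\eps, \delta)$-SDP gives $\Pr[M(D_i) \in U_i] \le e^{n\eps}\Pr[M(D_0) \in U_i] + \frac{e^{n\eps}-1}{e^\eps - 1}\,\delta$. Wait — the $e^{n\eps}$ blowup is catastrophic here, so instead I would not route through a single fixed dataset. The cleaner route: connect $D_1, \dots, D_m$ pairwise. Actually the standard packing lower bound proceeds differently: pick the packing points so that they are mutually at distance $> 2\tau$ but one can still argue directly. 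Let me reconsider — the right move is to take the $D_i$ to be mutually adjacent-reachable in few steps only if we want small $\eps$-blowup, but they are far apart by construction, so group privacy is lossy. The correct standard argument (as in Hardt–Talwar) does use the $e^{k\eps}$ factor but wins because $m$ is doubly large: choose $D_i$ pairwise at distance $> 2\tau$ but *also* all within a Hamming ball of radius $k$ of each other for $k$ as small as possible. Since we need pairwise distance $> 2\tau$, we have $k > 2\tau$, and we can take $k = O(\tau)$ by choosing all $D_i$ inside a radius-$O(\tau)$ ball; a volume bound gives $m \ge \binom{O(\tau)}{\tau} / (\text{stuff}) $ — but for $\tau$ small this is not $2^{\Omega(n)}$. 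The honest fix: we only need $m$ large enough that $\alpha > 1/m + (\text{error term})$, i.e. $m = \omega(1/\alpha)$ many points suffice when $\eps, \alpha$ are constants and $\delta = n^{-c}$, provided the group-privacy radius $k$ satisfies $e^{k\eps}\cdot\frac{\text{something}}{m} + \frac{e^{k\eps}}{e^\eps-1}\delta < \alpha$. With $k = O(\tau) \le O(n^{0.9})$, we'd get $e^{k\eps}$ super-polynomial, killing the $\delta$ term only if $\delta$ is super-polynomially small — contradicting $\delta = n^{-c}$.

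So the genuinely correct approach must avoid large group-privacy radius: I would instead connect any two packing points through a single *bad* intermediate argument using only the $\delta$-slack, not the $\eps$-slack. Concretely: for $(\eps,\delta)$-SDP with $\eps = O(1)$, two datasets at Hamming distance $1$ satisfy $\Pr[M(D)\in S] \le e^\eps \Pr[M(D')\in S] + \delta$. Chain this along a path of length $n$ between any $D_i$ and $D_0$: $\Pr[M(D_0) \in U_i] \ge e^{-n\eps}(\Pr[M(D_i)\in U_i] - n\delta) \ge e^{-n\eps}(\alpha - n\delta)$. Summing over the $m$ disjoint events $U_i$: $1 \ge \sum_i \Pr[M(D_0)\in U_i] \ge m \cdot e^{-n\eps}(\alpha - n\delta)$, so we need $m \le e^{n\eps}/(\alpha - n\delta)$, which is satisfiable — the argument as I've set it up does *not* yield a contradiction. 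The resolution, and the real content of the lemma, is that one must use a packing where the points are *mutually adjacent* or near-adjacent, which forces $\tau$ small, OR — and this is what I'd actually do — observe that the lemma only needs *sub-linear* $\tau$ and use a packing of $m = n^{\omega(1)}$ points pairwise at distance $> 2\tau$ that are simultaneously all within distance $O(\tau \log n)$ or so, OR simply accept the known-but-subtle point that for this particular utility one packs with points at distance exactly $2\tau + 1$ in a structured arrangement giving $m = 2^{\Omega(\tau)} \cdot \text{poly}$, which beats the $e^{n\eps}/\alpha$ bound precisely when... hmm, $2^{\Omega(\tau)}$ vs $e^{n\eps}$ — loses for $\tau \ll n$.

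**The main obstacle**, which I now see clearly, is reconciling the parameters: a naive packing does not immediately contradict $(\eps,\delta)$-SDP with constant $\eps$ unless the packing simultaneously has super-polynomial size *and* small mutual distance. I expect the actual proof exploits the *structure* of the low-diameter utility more cleverly: rather than requiring pairwise far packing points, it likely uses the fact that a useful circuit $C$ for $D$ has $C^{-1}(1)$ of diameter $\le \tau$, so $|C^{-1}(1)| \le \binom{n}{\le \tau/2}$-ish, and then runs a *reconstruction/fingerprinting* style argument: the output circuit reveals $\approx \tau$ coordinates of $D$, and a fingerprinting-codes lower bound (à la Bun–Ullman–Vadhan) forces $\delta \gtrsim 1/n$ or $\eps \gtrsim$ super-constant. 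The key steps I would carry out are therefore: (1) show a useful circuit localizes $D$ to a small Hamming ball, hence "leaks" many bits; (2) instantiate a fingerprinting code on $\{D \text{ restricted to a coordinate block}\}$; (3) invoke the standard fingerprinting-codes SDP lower bound to conclude that accuracy forces either $\eps = \omega(1)$ or $\delta = \Omega(1/n^c)$. I would double-check the exponents to land exactly at $\tau \le n^{0.9}$, $\delta = 1/n^c$ with $c > 1$; getting the fingerprinting-code length to match the sub-linear $\tau$ regime is the delicate quantitative step.
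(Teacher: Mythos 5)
Your opening move is sound: you correctly observe that for packing points at pairwise distance $> 2\tau$ the ``useful'' output sets are disjoint, and you correctly trace the task back to a reconstruction problem (the paper formalizes that step as a reduction from $\LDS_\tau$ to the nearby-point problem $\NBP_\tau$ in \Cref{lem:low-diameter-to-nearbypoint}, by having the new mechanism output any point in $C^{-1}(1)$). You also correctly diagnose the core obstruction: a single global packing at pairwise distance $\Theta(\tau)$ forces a group-privacy radius of $\Theta(\tau)$, whose $e^{\Theta(\tau)\eps}$ blowup destroys the bound for constant $\eps$ and polynomial $\tau$. But your proposed resolution --- routing through a fingerprinting-codes lower bound --- is not what the paper does, and it is not clear it applies here: the dataset is a single $n$-bit string under bit-flip adjacency (equivalently, $n$ users holding one bit each), not the ``$n$ users holding $d$-bit codewords'' structure that fingerprinting/tracing arguments need, and the quantitative step you yourself flag as ``delicate'' is exactly where that route breaks down.

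The paper's actual argument resolves the obstruction you found by packing \emph{per block} rather than globally, so the group-privacy radius stays logarithmic. Split $[n]$ into $b' = n^{0.95}$ blocks of size $n' = n^{0.05}$. Inside each block, run a matching/packing argument on $\bit^{n'}$ with packing radius $2d{+}1$ where $d = \Theta((\log n)/\eps)$ (\Cref{thm:each-block}): the independence number of the distance-$(2d{+}1)$ hypercube graph is at most $2^{n'}/\binom{n'}{\leq d}$, so almost all of $\bit^{n'}$ is covered by a matching of pairs at distance $\leq 2d{+}1$, and group privacy only costs $e^{(2d{+}1)\eps} = \poly(n)$ per pair. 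This forces the mechanism to misreport a block with probability $\Omega(n^{-0.04})$ on average. Aggregating over blocks (\Cref{thm:large-dist-err}) yields an input on which the mechanism incurs Hamming error $\gg n^{0.9} \geq \tau$ with probability $\Omega(n^{-0.04})$. That alone does \emph{not} contradict constant usefulness, since $n^{-0.04} = o(1)$; the second ingredient you are missing is an amplification step (\Cref{thm:boosting-usefulness}): using DP hyperparameter tuning~\cite{LT19} (\Cref{thm:dp-tuning}), any $\alpha$-useful $(\eps,\delta)$-$\SDP$ mechanism with constant $\alpha$ can be converted into a $(1 - n^{-\Theta(1)})$-useful $(O(\eps), \poly(n)\cdot\delta)$-$\SDP$ mechanism at only an additive $O(\log n)$ cost in $\tau$. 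For $\delta = 1/n^{c}$ with $c$ a sufficiently large constant, the boosted $\delta$ is still $o(1)$, and the boosted mechanism contradicts the $\Omega(n^{-0.04})$ failure lower bound. The two ideas you never reach --- small per-block packing radius to tame group privacy, and private-selection amplification --- are precisely the content of \Cref{thm:each-block,thm:large-dist-err,thm:boosting-usefulness}.
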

\noindent In fact, this lower bound is straightforward (\Cref{lem:low-diameter-to-nearbypoint}) from the well-known \emph{blatant non-privacy} notion (see, e.g.,~\cite{De12}): no DP algorithm can output a dataset that is (with large probability) close to the input dataset.
Crucially, our lower bounds are non-constructive, and do not yield an efficient adversary (which would imply a similar lower bound against CDP mechanisms).  Thus, to separate CDP from SDP it suffices to come up with a CDP mechanism for, say, $\LDS_{n^{0.9}}$. 

\subsection{A CDP Mechanism}

By \Cref{lem:blatant-lb}, a positive answer to the following question would demonstrate a separation between CDP and SDP.

\begin{question}\label{ques:distance-main}
For constant $\eps = O(1)$, does there exist an $\eps$-$\CDP$ mechanism for $\LDS_{n^{0.9}}$ with constant usefulness?
\end{question}

\noindent A key step in our approach is to reduce the above question to whether there is a mechanism for $\LDS_{n^{0.9}}$ that is differentially private against query (a.k.a. decision-tree) adversaries. In order to construct such a CDP mechanism $M$, our main idea is to use obfuscation. In particular, we will consider mechanisms where the returned circuit is obfuscated. Recall that in order to prove a mechanism $M$ that outputs a circuit $C$ is CDP, one needs to argue that no efficient adversary that gets $C$ as input can break the privacy guarantee. By considering mechanisms that return obfuscated circuits, we can drastically simplify the type of adversaries we need to prove privacy against. Instead of proving privacy against adversaries that see the circuit $C$ (i.e., white-box setting), sufficiently strong obfuscation means we only need to prove privacy against decision tree adversaries that can query the function computed by the circuit (i.e., black-box setting). In other words, if we have a mechanism that satisfies DP against black-box adversaries (decision trees) with a polynomial number of queries, we can then hope to use sufficiently strong obfuscation to ``lift'' this into a mechanism that is secure against (white-box) computational adversaries with polynomial running time.

Of course, one needs to be careful about whether such ``sufficiently strong obfuscation'' is even possible, but, putting that aside for the moment, the question of whether there is a CDP mechanism for $\LDS_{n^{0.9}}$ (\Cref{ques:distance-main} above) appears to reduce to whether there is a mechanism for $\LDS_{n^{0.9}}$ that is DP against query (a.k.a. decision-tree) adversaries.

While we do not resolve \Cref{ques:distance-main}, we (roughly speaking) show that there is a mechanism that is DP against \emph{non-adaptive} decision tree adversaries, whose queries are fixed a priori. It turns out a relatively simple mechanism based on randomized response~\cite{warner1965randomized} works for these less powerful adversaries.

\subsubsection{From Non-Adaptive Lower Bound to Computational Lower Bound}
This switch from the usual adaptive query adversaries to non-adaptive query adversaries comes at a price however. It is not clear how to use obfuscation to lift a mechanism that is private against non-adaptive queries into one that is private against computational adversaries. Indeed, a polynomial-time algorithm with even black-box access to a function seems to be an inherently adaptive adversary! 

Surprisingly, we get around this by using another cryptographic object introduced by Bitansky, Kalai, and Paneth~\cite{BitanskyKP18}: collision-resistant keyless hash functions. Informally speaking, a hash function 
being collision-resistant and keyless means that ``any efficient adversary can only generate a number of hash collisions that is at most polynomially larger than the advice the adversary gets.''

We then modify the $\LDS_{\tau}$ problem to only consider datasets that belong to a specific set $\cR \subseteq \bit^n$; in particular, we will specify it the as set of all strings that hash to, say the all zeroes string. Formally, $\LDS_{\tau, \cR}$ is the following problem (defined for parameters $\tau \in \N$ and $\cR \subseteq \bit^n$).
\begin{itemize}[leftmargin=*]
    \item \textbf{Given:} dataset $D$ that consists of $n$ bits
    \item \textbf{Output:} circuit $C$ mapping $n$ bits to $1$ bit 
    \item \textbf{Utility:} $C$ is considered useful if $D \notin \cR$ or both of the following hold:
    \begin{itemize}[nosep,label=$\triangleright$]
        \item it outputs 1 on $D$
        \item it outputs 0 on all points in $\cR$ at distance greater than $\tau$ from $D$
    \end{itemize}
\end{itemize}
In other words, the utility function now completely ignores all points outside of $\cR$.\\[-3mm]

\noindent The high-level intuition behind this change is the following:
\begin{enumerate}[leftmargin=*]
    \item Our CDP mechanism can output a circuit $C$ such that the only inputs where $C(x)$ reveals information are those $x$ in the set $\cR$. %
    \item Any polynomial-time adversary $\cA$ can only generate fixed polynomial number of elements of $\cR$ by the collision-resistance property of the hash function.
    \item Combining the above makes the inputs $\cA$ can query $C$ on, effectively ``non-adaptive''.
\end{enumerate}

\noindent Finally, in order to ``lift'' the query separation into the computational realm we use another cryptographic tool: differing-inputs obfuscation ($\diO$)~\cite{BarakGIRSVY01,BarakGIRSVY12,AnanthBGSZ13}. Roughly speaking, $\diO$ is an obfuscator with the following guarantee: if any efficient adversary can distinguish the obfuscation of two circuits $C_1$ and $C_2$, then an efficient adversary can find an input $x$ on which $C_1(x) \neq C_2(x)$. In particular, the specific assumption we use is even weaker than public-coin $\diO$~\cite{IshaiPS15}, which is already considered to more plausible than general $\diO$.\footnote{See \Cref{as:diO} for formal statement of the assumption and \Cref{apx:diO-comparison} for comparison with other $\diO$ assumptions in literature.}

In summary, $\diO$ allows us to reduce computational adversaries to adaptive query adversaries and collision-resistant keyless hash functions allows us to reduce adaptive query adversaries to non-adaptive query adversaries. Interestingly, to the best of our knowledge, this is the first time collision-resistant keyless hash functions are being used together with any obfuscation assumption.

\subsubsection{Making the Utility Function Efficiently Computable}\label{para:utility}
We need to address one final issue: utility functions that we have considered so far are not necessarily efficiently computable. Specifically, a trivial way to implement the utility function would be to enumerate all points at distance at least $\tau$, feed it into the circuit, and check that the output is as expected; this would take $2^{n^{\Omega(1)}}$ time.

To overcome the above problem, we restrict circuits to only those that are relatively simple, so that there is a small ``witness'' $w$ that certifies that the circuit outputs zero at all points that are $\tau$-far from $D$. A naive idea is then to let the CDP mechanism output the circuit $C$ together with such a witness $w$. The utility function can then just efficiently check that $w$ is a valid witness (and that $C(D) = 0$ or $x \in \cR$). This makes the utility function efficient but unfortunately compromises privacy because the witness $w$ itself can leak additional information. To avoid this, we instead use non-interactive witness indistinguishable (NIWI) proofs (e.g., \cite{BarakOV07}). Roughly speaking, this allows us to produce a proof $\pi$ from $w$ (and $C$ and $\diO$), which does not leak any information about $w$ (against computationally bounded adversaries), but at the same time still allows us to verify that the underlying witness $w$ is valid. The former is sufficient for CDP, while the latter ensures that the utility function can be computed efficiently.

This completes the high-level overview of the constructed task and our $\CDP$ mechanism.  The cryptographic primitives needed for our mechanism are formalized in \Cref{as:crklhf,as:diO,as:niwi}.

\subsection{Final Steps}

Finally, since our problem is now not exactly the original $\LDS_{\tau}$ problem anymore, as the utility guarantees are only now meaningful for datasets in $\cR$, we cannot use the lower bound in \Cref{lem:blatant-lb} for $\LDS_{\tau}$ directly. Fortunately, we can still adapt its proof---a ``packing-style'' lower bound on each coordinate---to one which applies a packing-style argument on each \emph{block of coordinates} instead. With this, we can prove the lower bound for $\LDS_{\tau, \cR}$ as long as the set $\cR$ has sufficiently large density ($\approx 1/n^{-o(\log n)}$).

Putting all the ingredients together, we arrive at the following\footnote{We remark that $(\beps_{\SDP}, \bdelta_{\SDP})$-$\SDP$ mechanism here refers to an ensemble of mechanisms $\seq{M_n}$ that are $({\beps_{\SDP}}_n, {\bdelta_{\SDP}}_n)$-$\SDP$. (See \Cref{def:sdp}.)}:
\begin{theorem}[Main Result] \label{thm:main-formal}
Under \cref{as:crklhf,as:diO,as:niwi}, for any constant $\beps_{\CDP} > 0$, there exists an ensemble $\bu = \set{u_n}_{n \in \N}$ of polynomial time computable utility functions such that
\begin{itemize}[nosep,leftmargin=*]
\item There is an $\beps_{\CDP}$-$\CDP$ mechanism that is $(1 - o_n(1))$-useful for $\bu$.
\item For any constants $\beps_{\SDP}, \balpha > 0$ and $\bdelta_{\SDP} = 1/n^{27}$, no $(\beps_{\SDP}, \bdelta_{\SDP})$-$\SDP$ mechanism is $\balpha$-useful for $\bu$.
\end{itemize}
\end{theorem}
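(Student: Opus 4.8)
The plan is to instantiate the utility ensemble $\bu = \set{u_n}_{n \in \N}$ as an efficiently-checkable variant of the low-diameter-set problem $\LDS_{\tau,\cR}$ with $\tau = n^{0.9}$ and $\cR \subseteq \bit^n$ the set of preimages of a fixed string under a laconic collision-resistant keyless hash function, further restricted so that a circuit can only be ``useful'' if it is of a simple canonical form accompanied by a NIWI proof of that fact (this is what makes $u_n$ polynomial-time computable, per \Cref{para:utility}). The two bullets are then proved independently: the positive direction constructs the CDP mechanism, and the negative direction is a (non-constructive) packing lower bound.

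For the positive direction, I would build a ``base'' mechanism acting directly on the dataset: on input $D$, apply randomized response bit-by-bit to obtain $\tilde D$, and output $\diO$ applied to the canonical circuit $C_{\tilde D}$ that on input $x$ accepts iff $x \in \cR$ and $x$ lies in the Hamming ball of radius $\tau$ around $\tilde D$, bundled with a NIWI proof $\pi$ certifying that $C$ has this form. I would then establish privacy in layers. (i) The base mechanism (outputting $\tilde D$ only) is differentially private against \emph{non-adaptive} decision-tree adversaries making $\poly(n)$ queries: this is a direct randomized-response computation, exploiting that $\tau = n^{0.9}$ is large enough that flipping one coordinate of $D$ changes membership in the radius-$\tau$ ball on at most a vanishing fraction of any fixed polynomial-size query set. (ii) Upgrade to DP against \emph{adaptive} query adversaries: by keyless collision resistance, any efficient adversary, even given the circuit description as advice, can name only $\poly(n)$ elements of $\cR$, and $C_{\tilde D}$ reveals nothing outside $\cR$, so the adversary's ``effective'' queries are non-adaptive. (iii) Lift to DP against computationally bounded white-box adversaries via differing-inputs obfuscation: present the pair $(C_{\tilde D}, C_{\tilde D'})$ for adjacent $D, D'$ as a public-coin sampler; if a $\poly$-time adversary distinguishes their obfuscations, $\diO$ security yields an efficient extractor producing an input on which the two circuits disagree, which necessarily lies in $\cR$, contradicting (ii). (iv) Remove the information leaked by the witness $w$ inside $\pi$ using witness-indistinguishability of the NIWI. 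Usefulness is then immediate: with overwhelming probability $\tilde D$ is within, say, $n^{0.6}$ of $D$, so $C_{\tilde D}(D)=1$ and $C_{\tilde D}$ rejects every $\cR$-point that is $\tau$-far from $D$, giving $(1-o_n(1))$-usefulness; and $\beps_{\CDP}$ can be driven to any positive constant by tuning the randomized-response bias.

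For the negative direction, I would adapt --- rather than invoke --- the blatant-non-privacy argument behind \Cref{lem:blatant-lb}, since the utility of $\bu$ is meaningful only on datasets in $\cR$. Partition the $n$ coordinates into blocks; from any circuit $C$ that is useful for $D \in \cR$ one can, block by block, determine the restriction of $D$ to that block up to a small ambiguity (because $\tau \ll n$ forces a useful $C$ to reject $\cR$-points differing from $D$ in too many coordinates), so the output of an $\balpha$-useful mechanism encodes $\Omega(n)$ bits of $D$. A packing argument over a sufficiently large family of datasets contained in $\cR$ --- which exists provided $\cR$ is not too sparse (density roughly $n^{-o(\log n)}$), a property one arranges when choosing the hash --- then contradicts $(\beps_{\SDP},\bdelta_{\SDP})$-$\SDP$ for constant $\beps_{\SDP}$ and $\bdelta_{\SDP} = 1/n^{27}$. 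This argument is non-constructive and yields no efficient adversary, which is precisely why it leaves the CDP mechanism untouched. Combining the two directions yields \Cref{thm:main-formal}.

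The step I expect to be hardest is (iii): making the $\diO$-to-query reduction airtight. One must present the adjacent-dataset circuit pair as a \emph{public-coin} sampler (so the weak $\diO$ assumption applies), ensure the NIWI proof attached to each circuit is a sufficiently canonical function of the circuit so that it does not itself introduce spurious differing inputs, guarantee that the differing input extracted from the distinguisher lands in $\cR$ (and not merely anywhere in $\bit^n$), and carefully account the advice/collision budget across the hybrid so the $\poly(n)$ bound from keyless collision resistance is never violated. The remaining pieces --- the randomized-response estimate, the NIWI witness-indistinguishability swap, the utility bound, and the block-packing lower bound --- are comparatively routine.
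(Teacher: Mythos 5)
Your high-level architecture matches the paper exactly --- the task is a verifiable low-diameter-set problem over a CRKHF preimage set, the CDP mechanism is ``RR then $\diO$ then NIWI,'' the CDP argument is layered as (non-adaptive queries) $\to$ (adaptive queries via keyless collision resistance) $\to$ (computational adversaries via $\diO$) $\to$ (add NIWI), and the SDP lower bound is a block-wise adaptation of blatant non-privacy. However, two steps as you have written them would not go through.

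\textbf{The circuit in the CDP mechanism is wrong.} You define $C_{\tilde{D}}$ to accept $z$ iff $z \in \cR$ and $z$ lies in the Hamming ball of radius $\tau$ around $\tilde{D} = \RR_\eps(D)$. With constant $\eps$, $\tilde{D}$ is at Hamming distance roughly $n/(1+e^\eps) = \Theta(n)$ from $D$, not $n^{0.6}$ as you assert; consequently $D \notin B_\tau(\tilde{D})$ with overwhelming probability and the mechanism is not $\Omega(1)$-useful (the bias cannot be tuned to fix this without making $\eps$ super-constant). Worse, since $C_{\tilde{D}}$ is a deterministic function of $\tilde{D}$ alone, your whole mechanism is a post-processing of $\RR_\eps(D)$ and is therefore already $\eps$-\emph{SDP}; the obfuscation and hash are doing no work, and if the mechanism were useful it would contradict your own lower bound. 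The paper's circuit instead intersects \emph{two} balls at different scales around two different centers: $\indicator\{z \in B_r(x) \cap B_{\tr}(\tilde{x}) \cap \cR\}$ with $r = 0.5n^{0.9}$ around the \emph{true} dataset $x$ (this gives the small diameter and makes $C(x)=1$), and $\tr = n/(1+e^\eps) + n^{0.6}$ around the RR-noised $\tilde{x}$ (this ball contains $x$ w.h.p.\ and, crucially, is what makes the circuit hard to distinguish: for adjacent $x,x'$ the circuits differ only on $(B_r(x) \triangle B_r(x')) \cap B_{\tr}(\tilde{x}) \cap \cR$, and any fixed point of $B_r(x)\triangle B_r(x')$ is far from $\tilde{x}$ except with probability $e^{-\Omega(n^{0.8})}$). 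The dependence of the circuit on $x$ directly is essential and is exactly what makes the mechanism non-trivially CDP rather than SDP; your layered argument (i)--(iii) has the right intuition for why this is safe, but it does not apply to the circuit you actually wrote down, which has no dependence on $x$.

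\textbf{The SDP lower bound is missing the usefulness-boosting step.} You argue that a block-wise packing lower bound on an $\balpha$-useful mechanism yields a contradiction directly. It does not: the packing-style argument (the paper's Theorem~\ref{thm:large-dist-err}) only forces the error probability to be at least roughly $e^{-\eps'} \approx n^{-\Theta(1)}$ on some input, which is perfectly consistent with a mechanism that errs with probability $1-\balpha = 0.99$. The paper first applies DP hyperparameter tuning (Theorem~\ref{thm:boosting-usefulness}, using Liu--Talwar) to amplify any constant-usefulness $(\eps,\delta)$-SDP mechanism into a $(1 - 1/n^{26})$-useful $(O(\eps), \poly(n)\delta)$-SDP mechanism at the cost of slightly enlarging $\tau$; only then does the $n^{-\Theta(1)}$ error lower bound contradict $1/n^{26}$. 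This boosting step also explains the specific choice $\bdelta_{\SDP} = 1/n^{27}$ in the theorem statement. Without it, the contradiction you describe does not close for constant $\balpha$.

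The rest --- identifying the right task family, the role of each of the three assumptions, the observation that the lower bound is non-constructive, and the concerns you flag about presenting the pair as a public-coin sampler and keeping the NIWI canonical --- are accurately where the paper also puts its care.
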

\noindent The task underlying the separation is an instantiation of the ``verifiable low diameter set problem'' $\VLDS_{\tau,\cR,V}$ defined in \Cref{def:vlds}.

\subsection{On the Plausiblility of the Cryptographic Assumptions}\label{subsec: plausibility of crypto}

We now discuss the plausiblility of the three cryptographic assumptions we use for our result:
\begin{enumerate}[label=(\roman*),leftmargin=7mm]
\item NIWI: Non-interactive Witness Indistinguishable Proofs (formally, \cref{as:niwi})
\item CRKHF: Laconic Collision-Resistant Keyless Hash Functions (formally, \cref{as:crklhf})
\item $\diOpcS$: Differing-Inputs Obfuscation for Public-Coin Samplers (formally, \cref{as:diO})
\end{enumerate}

\myparagraph{Regarding (i), NIWI.} Bitansky and Paneth~\cite{BitanskyP15} show that NIWIs exist assuming one-way permutations and indistinguishability obfuscation (iO) exists. Recently, Jain, Lin, and Sahai \cite{JainLS21} show that the existence of $\iO$ follows from well-founded assumptions; consequently, NIWIs exist based on widely-believed assumptions.
(We note that other previous works have also constructed NIWIs based on other more specific assumptions~\cite{BarakOV07,GrothOS12}.)

\myparagraph{Regarding (ii), CRKHF.} Bitansky, Kalai, and Paneth~\cite{BitanskyKP18} defined CRKHFs to model the properties of existing hash functions like SHA-2 used in practice. They suggest several candidates for CRKHFs, such as hash functions based on AES and Goldreich's one-way functions. They also note that CRKHFs exist in the Random Oracle model, as a random function is a CRKHF. Still, it is an open question to base the security of a CRKHF on a standard cryptographic assumption. Part of the difficulty of doing this, as~\cite{BitanskyKP18} describe, is that most cryptographic assumptions involve some sort of structure that is useful for constructing cryptographic objects. In contrast, the goal of a CRKHF is to have no structure at all. In summary, given the various CRKHF candidates, the existence in the Random Oracle model, and the fact that CRKHFs exist ``in practice,'' this assumption is quite plausible. For our specific construction, we need a different hash length (equivalently, different compression rate) than that used in~\cite{BitanskyKP18}; please refer to the discussion preceding~\Cref{as:crklhf} for the parameters and justification.

Finally, we remark that, even though the existence of CRKHFs is not known to reduce to any ``well-founded'' assumption, even \emph{refuting} their existence would answer a longstanding question in cryptography: giving non-contrived separations between the Random Oracle model~\cite{BellareR93} and the standard model. In the words of Bitansky, Kalai, and Paneth~\cite{BitanskyKP18},
\begingroup
\addtolength\leftmargini{-5mm}
\begin{quote}
``\textsl{Any attack on the multi-collision resistance of a [keyless] cryptographic hash function would constitute a strong and natural separation between
the hash and random oracles. For several cryptographic hash functions used in practice, the only known separations from random oracles are highly contrived~\cite{CanettiGH04}.}''
\end{quote}
\endgroup

\myparagraph{Regarding (iii), $\diOpcS$.} One can think of $\diO$~\cite{BarakGIRSVY01, BarakGIRSVY12} as an ``extractable'' strengthening of $\iO$.  While $\iO$ has now become a widely-believed assumption \cite{JainLS21}, the existence of $\diO$ is controversial. Several papers (e.g.,~\cite{BoyleP15,GargGHW17,BellareSW16}) cast doubt on the existence of $\diO$, especially in the case where an arbitrary auxillary input is allowed; we stress that all the negative results for $\diO$ hold for contrived auxillary inputs and/or distributions.
On the positive side, \cite{BoyleCP14} show that $\diO$ reduces to $\iO$ in special cases, such as when the number of differing-inputs is bounded by a polynomial. More related to our result, \cite{IshaiPS15} gives a definition of {\em public-coin $\diO$} that avoids the difficulties presented by earlier negative results regarding auxiliary inputs, although \cite{BoyleP15} presented some evidence against this definition in special cases.
Our specific assumption of $\diOpcS$ is in fact weaker than the assumption of public-coin $\diO$. In the definition of public-coin $\diO$, as in \cite{IshaiPS15}, we start with any public-coin sampler ($\pcS$), for which it is hard to find an input on which two circuits differ, even given the knowledge of all the randomness that underlies the circuits. The security of the obfuscation is required to hold even against adversaries that know all the randomness that underlies the generation of the two circuits. However, in our definition, the security of the obfuscation is required to hold only against adversaries that observes a single obfuscated circuit, which makes the assumption weaker. See \Cref{apx:diO-comparison} for a more detailed discussion on comparison of this assumption with other $\diO$ assumptions in literature.
Finally, we only use the existence of $\diOpcS$ for a simple circuit family for our result, so even if general purpose $\diOpcS$ does not exist, we think it is plausible that $\diOpcS$ exists for the specific family of circuits we need for our result. (See \Cref{as:diO} for the exact $\pcS$ family for which we require a $\diO$.)

\paragraph{Final thoughts on our assumptions.}
In conclusion, we view each of our three assumptions as {\em plausible}. Moreover, each of assumptions has at least some evidence that is hard to refute: NIWIs exist based on a widely-believed assumption, refuting CRKHFs would require giving the first non-contrived separation between the standard and the Random Oracle model, and despite many attempts (e.g.,~\cite{BoyleP15,GargGHW17,BellareSW16}) to refute $\diO$, the question is still open, especially for the particular $\diOpcS$ version. Thus, refuting any of the assumptions would constitute a breakthrough in cryptography.

\section{Preliminaries}
\label{sec:prelim}

A function $g: \N \to \R_{\geq 0}$ is said to be \emph{negligible} if $g(n) = n^{-\omega(1)}$.
Let PPT be an abbreviation for {\bf p}robabilistic {\bf p}olynomial-time {\bf T}uring machine.

For $x \in \bit^n$ and $r \in \N$, we use $B_r(x)$ to denote the (Hamming) ball of radius $r$ around $x$, i.e., $\{z \in \bit^n \mid \|x - z\|_1 \leq r\}$. Furthermore, we use $\diam(S)$ for a set $S \subseteq \bit^n$ to denote the (Hamming) diameter of $S$, i.e., $\max_{x, x'\, \in S} \|x - x'\|_1$. 

\subsection{Dataset and Adjacency}

For a domain $\cX$, we view a dataset $D$ as a histogram over the domain $\cX$, i.e., $D \in \Z_{\geq 0}^{\cX}$ where $D_x$ denotes the number of times $x \in \cX$ appears in the dataset. The \emph{size} of the dataset is defined as $\|D\|_1 := \sum_{x \in \cX} D_x$. We write $\cX^m$ as a shorthand for the set of all datasets of size $m$, and $\cX^*$ for the set of all datasets over domain $\cX$. Two datasets are \emph{adjacent} iff $\|D - D'\|_1 = 1$, i.e., one of the datasets is a result of adding or removing a single row from the other dataset.

\subsection{Mechanism, Utility Function, and Usefulness}\label{subsec:prelim-utility}

A \emph{mechanism} $M$ is a randomized algorithm that takes in a dataset $D \in \cX^*$ and outputs an element from a set $\cY$. The utility of a mechanism is measured by a \emph{utility function} $u$, which is a polynomial-time deterministic algorithm that takes in a dataset $D \in \cX^*$ together with a response $y \in \cY$ and outputs 0 or 1 (whether the response is good for the dataset). We say that the mechanism $M$ is \emph{$\alpha$-useful} for utility $u$ iff $\Pr[u(D, M(D)) = 1] \geq \alpha$.

Below, we will often discuss an ensemble $\bM = \{M_n\}_{n \in \N}$ of mechanisms  where\footnote{It is always implicitly assumed that $\cX_n, \cY_n$ are of size $\poly(n)$.}
$M_n: \cX_n^* \to \cY_n$. We say that an ensemble of mechanisms is \emph{efficient} if $M_n$ on input $D \in \cX_n^m$ runs in time $\poly(n, m)$. For an ensemble $\bu = \seq{u_n}_{n \in \N}$ of utility functions and $\balpha = \seq{\alpha_n \in [0, 1]}_{n \in \N}$, we say that $\bM$ is $\balpha$-useful with respect to $\bu$ iff $M_n$ is $\alpha_n$-useful with respect to $u_n$ for all $n \in \N$.

For brevity, we will sometimes refer to ``ensemble of mechanisms'' simply as ``mechanism'' and ``ensemble of utility functions'' simply as ``utility function'' when there is no ambiguity.

\subsection{Notions of Differential Privacy}

We now define the notions of DP that will be used throughout the paper.\\[-3mm]

\myparagraph{(Statistical) Differential Privacy.} The standard (statistical) notion of DP can be defined in terms of the following notion of indistinguishability.

\begin{definition}[Statistical Indistinguishability]
Distributions $P$, $Q$ are said to be {\em $(\eps,\delta)$-indistinguishable}, denoted $P \approx_{\eps, \delta} Q$, if for all events (measurable sets) $\cE$, it holds for $(\cD_0, \cD_1) = (P, Q)$ and $(Q, P)$ that
\begin{align*}
\Pr_{X \sim \cD_0} [X \in \cE] &~\le~ e^{\eps} \cdot \Pr_{X \sim \cD_1} [X \in \cE] + \delta\,.%
\end{align*}
For simplicity, we use $\approx_{\eps}$ to denote $\approx_{\eps, 0}$.
\end{definition}

\begin{definition}[Statistical Differential Privacy (SDP)~\cite{DworkMNS06,DworkKMMN06}]
\label{def:sdp}
For $\eps, \delta > 0$, a mechanism $M$ is said to be \emph{$(\eps, \delta)$-$\SDP$}  if and only if for every pair $D, D'$ of adjacent datasets, we have that $M(D) \approx_{\eps, \delta} M(D')$.
We say that an ensemble $\bM = \{M_n\}_{n \in \N}$ is $(\beps, \bdelta)$-$\SDP$ for sequences $\beps = \seq{\eps_n}_{n \in \N}$ and $\bdelta = \seq{\delta_n}_{n \in \N}$ if $M_n$ is $(\eps_n, \delta_n)$-$\SDP$ for all $n \in \N$.
\end{definition}

\myparagraph{Computational Differential Privacy.} The notion of computational DP relaxes the notion of indistinguishability to a computational version, where the privacy holds only with respect to computationally bounded adversaries.

\begin{definition}[Computational Indistinguishability]
Two ensembles of distributions $\bP = \seq{P_n}_{n \in \N}$ and $\bQ = \seq{Q_n}_{n \in \N}$, where $P_n$ and $Q_n$ are supported over $\bit^{p(n)}$ for some polynomial $p(\cdot)$, are said to be {\em $\beps$-computationally-indistinguishable} for a sequence $\beps = \seq{\eps_n}_{n \in \N}$, denoted $\bP \approx^c_{\beps} \bQ$, if there exists a negligible function $\negl(\cdot)$ such that for any PPT adversary $\cA$, it holds for $(\cD_0, \cD_1) = (P_n, Q_n)$ and $(Q_n, P_n)$ that
\begin{align*}
\Pr_{X \sim \cD_0} [\cA(X) = 1] &~\le~ e^{\eps_n} \Pr_{X \sim \cD_1} [\cA(X) = 1] + \negl(n)
\end{align*}
In the special case of $\beps = 0$, we suppress the subscript and simply write $\bP \approx^c \bQ$.
\end{definition}

\noindent Throughout, when we refer to a sequence $\{(D_n, D'_n)\}_{n \in \N}$ of adjacent datasets, it is always assumed  that $D_n \in \cX_n^{m_n}, D'_n \in \cX_n^{m'_n}$ are of sizes $m_n, m'_n = \poly(n)$.

\begin{definition}[Computational Differential Privacy ($\CDP$){~\cite{MironovPRV09}}]
An ensemble $\bM = \seq{M_n}_{n \in \N}$ of mechanisms is said to be \emph{$\beps$-$\CDP$} for a sequence $\beps = \seq{\eps_n}_{n \in \N}$, if for any sequence $\seq{(D_n, D'_n)}_{n \in \N}$ of adjacent datasets, it holds that $\seq{M_n(D_n)}_{n \in \N} \approx^c_{\eps_n} \seq{M_n(D_n')}_{n \in \N}$.
\end{definition}

This definition is often referred to as \emph{indistinguishability-based CDP} ($\mathsf{IND}$-$\CDP$) in previous works~\cite{MironovPRV09,GroceKY11,BunCV16}. Since we only use this notion for our main result, we refer to it simply as CDP.  The other definition of CDP used in previous works is simulation-based:

\begin{definition}[$\mathsf{SIM}$-$\CDP${~\cite{MironovPRV09}}]
An ensemble $\bM = (M_n)_{n \in \N}$ of mechanisms is said to be \emph{$\beps$-$\mathsf{SIM}$-$\CDP$} if there exists an $(\eps_n, 0)$-$\SDP$ ensemble $\{ M'_n \}_{n \in \N}$ of mechanisms such that for any sequence $\{D_n \in \cX_n^*\}_{n \in \N}$ of datasets, with size of $D_n$ being at most $\poly(n)$, it holds that $M_n(D_n) \approx^c M'_n(D_n)$.
\end{definition}

It should be noted that $\mathsf{SIM}$-$\CDP$ \emph{cannot} be used for the separation we are looking for. Specifically, if $\{M_n\}_{n \in \N}$ is $\beps$-$\mathsf{SIM}$-$\CDP$, we may use $\{M'_n\}_{n \in \N}$ as our $(\beps, {\boldsymbol{0}})$-$\SDP$ mechanism. Since the utility function runs in polynomial time, it follows immediately that, if $\{M_n\}_{n \in \N}$ is $\alpha$-useful, then $\{M'_n\}_{n \in \N}$ is also $(\alpha - o(1))$-useful.
Due to this, we will not consider $\mathsf{SIM}$-$\CDP$ again in this paper.\\[-3mm]

\noindent Another point to note is that unlike prior work (e.g. \cite{BunCV16}) we use both $(\beps, \bdelta)$ parameters for $\SDP$, but only $\beps$ parameter for $\CDP$, since $\bdelta$ is always assumed to be negligible for $\CDP$. Our lower bounds for $\SDP$ in fact work for $\bdelta$ that is not negligible, which only makes the result stronger.

\myparagraph{Calculus of $\approx$ and $\approx^c$.} The following properties are well-known.

\begin{fact}\label{fact:dp-calculus}
The notions of $(\eps,\delta)$-indistinguishability and $\eps$-computational-indistinguishability satisfy:
\begin{itemize}[nosep,leftmargin=*]
\item {\bf Basic Composition:} If $P_0 \approx_{\eps, \delta} P_1$ and $P_1 \approx_{\eps', \delta'} P_2$, then $P_0 \approx_{\eps+\eps', \delta+\delta'} P_2$. Similarly, if $\bP_0 \approx^c_{\beps} \bP_1$ and $\bP_1 \approx^c_{\beps'} \bP_2$, then $\bP_0 \approx^c_{\eps+\eps'} \bP_2$.
\item {\bf Post-processing:} If $P \approx_{\eps, \delta} Q$, then for all (randomized) functions $f$, it holds that $f(P) \approx_{\eps, \delta} f(Q)$. Similarly, if $\bP \approx^c_{\beps} \bQ$, then for all PPT algorithms $\cA$, it holds that $\cA(\bP) \approx^c_{\beps} \cA(\bQ)$.
\end{itemize}
\end{fact}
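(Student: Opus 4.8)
The plan is to verify the four assertions (statistical and computational basic composition, statistical and computational post-processing) directly from the definitions; there are no new ideas involved, so I will emphasize the structure and flag the one or two spots that need care. Throughout, recall that $\approx_{\eps,\delta}$ and $\approx^c_{\beps}$ are each quantified over the two orientations $(\cD_0,\cD_1)=(P,Q)$ and $(Q,P)$, so every step below is carried out ``in the matching orientation'' and then repeated with the roles swapped.

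For statistical basic composition I would fix a measurable event $\cE$ and chain the two hypotheses: $\Pr_{P_0}[\cE]\le e^{\eps}\Pr_{P_1}[\cE]+\delta\le e^{\eps}\bigl(e^{\eps'}\Pr_{P_2}[\cE]+\delta'\bigr)+\delta=e^{\eps+\eps'}\Pr_{P_2}[\cE]+\delta+e^{\eps}\delta'$. To tighten $\delta+e^{\eps}\delta'$ to the stated $\delta+\delta'$ I would pass through the standard reformulation of $(\eps,\delta)$-indistinguishability as ``$e^{\pm\eps}$-multiplicative closeness off a pair of 'bad' events, each of probability at most $\delta$'' and union-bound the two bad events along the chain; this is exactly the textbook proof of basic composition for approximate DP, and I would cite it rather than reprove it here. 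For statistical post-processing I would first reduce a randomized $f$ to a deterministic one by conditioning on its coins $\rho$ (using that $\approx_{\eps,\delta}$ is preserved under taking a common mixture, which is immediate by integrating the event-wise inequality over $\rho$); for deterministic $f$ and an event $\cE$ in the range, $f^{-1}(\cE)$ is an event in the domain, so $\Pr[f(P)\in\cE]=\Pr_{P}[f^{-1}(\cE)]\le e^{\eps}\Pr_{Q}[f^{-1}(\cE)]+\delta=e^{\eps}\Pr[f(Q)\in\cE]+\delta$, and symmetrically. (Equivalently, one proves the ``functional'' form $\int h\,dP\le e^{\eps}\int h\,dQ+\delta$ for all measurable $h:\range(f)\to[0,1]$ via the layer-cake identity $\int h\,dP=\int_0^1\Pr_P[h>t]\,dt$ and integrating, which disposes of randomized $f$ in one shot.)

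The computational versions have the same shape, with ``event $\cE$'' replaced by ``PPT distinguisher $\cA$''. For composition, if $\cA$ is PPT and attacks $\bP_0$ versus $\bP_2$, applying the two hypotheses in turn gives $\Pr_{P_0}[\cA=1]\le e^{\eps_n}\Pr_{P_1}[\cA=1]+\negl_1(n)\le e^{\eps_n+\eps'_n}\Pr_{P_2}[\cA=1]+\negl_2(n)+e^{\eps_n}\negl_1(n)$; the only thing to check is that $e^{\eps_n}\negl_1(n)$ is still negligible, which holds because in every application of this fact $e^{\eps_n}$ is polynomially bounded (indeed $\eps_n=O(1)$), so $\negl_2+e^{\eps_n}\negl_1$ is negligible. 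For post-processing, writing $\cA'$ for the PPT post-processor and $\cA$ for a PPT distinguisher for $\cA'(\bP)$ versus $\cA'(\bQ)$, the composition $\cA\circ\cA'$ is again PPT and distinguishes $\bP$ from $\bQ$, so its advantage is bounded by the hypothesis with the identical $\beps$ and a negligible slack.

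I do not expect a genuine obstacle here; this is foundational bookkeeping. The two points that deserve a sentence of care are (i) making the $\delta$-accounting in statistical basic composition exactly additive rather than $\delta+e^{\eps}\delta'$ — handled by the ``bad event'' reformulation and a union bound, as in the standard proof — and (ii) ensuring the stray factor $e^{\eps_n}$ in the computational chain does not spoil negligibility, which is precisely why it matters that we only ever invoke this fact with $\eps_n$ bounded.
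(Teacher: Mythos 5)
Your statistical and computational post-processing arguments are fine, as is the computational basic composition (where you correctly flag that the stray factor $e^{\eps_n}$ on the negligible slack is harmless because $\eps_n$ is constant in every application). The genuine gap is in statistical basic composition. Your naive chain correctly produces the slack $\delta + e^{\eps}\delta'$, but the proposed tightening to $\delta + \delta'$ does not work. You invoke ``the textbook proof of basic composition for approximate DP,'' but that theorem is about a different object --- the joint output $(M_1(D),M_2(D))$ of two mechanisms run on the \emph{same} dataset --- not the triangle inequality for $(\eps,\delta)$-indistinguishability that is needed here, and the two are not interchangeable. In fact the exactly-additive bound is false for this triangle inequality: take two-point distributions $P_i = (\alpha_i, 1-\alpha_i)$ with $\eps=\eps'=1$, $\delta=\delta'=10^{-3}$, $\alpha_1 = 10^{-2}$, $\alpha_0 = e\alpha_1 + \delta$, and $\alpha_2 = (\alpha_1 - \delta')/e$. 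A direct check shows $P_0 \approx_{\eps,\delta} P_1$ and $P_1 \approx_{\eps',\delta'} P_2$, yet on the singleton event one has $\alpha_0 - e^{\eps+\eps'}\alpha_2 = \delta + e^{\eps}\delta' \approx 3.7\times 10^{-3} > 2\times 10^{-3} = \delta + \delta'$, so $P_0 \not\approx_{\eps+\eps',\,\delta+\delta'} P_2$. Nor does the ``bad event'' reformulation rescue this: the $P_1$-versus-$P_2$ bad event must be transferred back to $P_0$ through the $e^{\eps}$-multiplicative closeness, which is exactly where the $e^{\eps}$ blow-up arises and cannot be removed. What is actually true, and all that the paper's hybrid arguments ever use, is the bound $\min(\delta + e^{\eps}\delta',\ \delta' + e^{\eps'}\delta)$, which coincides with $\delta+\delta'$ only when one of the $\delta$'s or one of the $\eps$'s vanishes; in every hybrid chain here the statistical steps have $\delta = 0$ and the computational steps have $\eps_n = O(1)$, so the weaker bound suffices throughout. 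You should state and prove that weaker bound directly, rather than appealing to an inapplicable theorem for an over-strong bound that does not hold.
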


\section{Low Diameter Set Problem and Nearby Point Problem}
\label{sec:prob}

In this section, we introduce the problems that we will use in our separation.  Before that, we will describe a simplifying assumption that we can make about the inputs.

\subsection{Simplification of Input Representation}

Recall that so far a dataset may contain multiple copies of an element. Below, however, it will be more convenient to only discuss the case where each element appears only once, i.e., $D \in \bit^{\cX}$.

This is sufficient since if we have a utility function $u: \bit^{\cX} \times \cY \to \bit$ defined only on $D \in \bit^{\cX}$, we can easily define the utility function $\ou: \N^{\cX} \times \cY \to \bit$ by
\begin{align*}
\ou(\oD, r) =
\begin{cases}
u(\oD, r) & \text{ if } \oD \in \bit^{\cX}, \\
1 & \text{ otherwise.}
\end{cases}
\end{align*}
In other words, the utility function considers any response good for datasets with repetition. Clearly, if $u$ is efficiently computable, then so is $\ou$.
Furthermore, suppose that we have an $\eps$-$\CDP$ mechanism $\bM = \seq{M_n}_{n \in \N}$ for $\bu = \seq{u_n}_{n \in \N}$. For every dataset $\oD$, let $D$ be defined by $D_i = \min\set{\oD_i, 1}$. Then, we may define $\obM = \seq{\oM_n}_{n \in \N}$ by $\obM(\oD) = M(D)$. It is easy to see that $\obM$ remains $\beps$-CDP. Furthermore, if $\bM$ is $\balpha$-useful for $\bu$, then $\obM$ remains $\balpha$-useful for $\obu$.

Finally, note that a lower bound for DP algorithms restricted to non-repeated datasets trivially implies a lower bound against all datasets.

Due to this, we will henceforth focus our attention only on the datasets $D \in \bit^{\cX}$. Furthermore, throughout the remainder of this paper, we will always pick $\cX_n = [n]$. This further simplifies the input representation to be just a bit vector $x \in \bit^n$. We will define an input of our problem in this way. Furthermore, we will henceforth use $x$ instead of $D$ to denote the input dataset.

\subsection{Nearby Point Problem}

We will start by defining our first problem, which asks to output a point that is close to the input point if the latter belongs to some set $\cR$. As we noted in the introduction, when $\cR$ is the set of all points (i.e., $\cR_n = \bit^n$), this is exactly the same as the problem considered in blatant non-privacy~\cite{DinurN03,DworkMT07}.
As we will see later, the presence of the set $\cR$ is due to our use of hashing, which is required in our proof for the $\CDP$ mechanism.

\begin{definition}[$\tau$-Nearby $\cR$-Point Problem]\label{def:nearby-point}
The \emph{nearby point problem} parameterized by sequences $\seq{\tau_n \in \N}_{n \in \N}$ and $\seq{\cR_n \subseteq \bit^n}_{n \in \N}$ is denoted by $\NBP_{\tau, \cR}$.  For input $x \in \bit^n$ and output $y \in \cY_n = \bit^n$, the utility is defined as:
\[ 
u^{\NBP}_{\tau_n, \cR_n}(x, y) := \indicator\set{\|x - y\|_1 \leq \tau_n \text{ or } x \notin \cR_n}
\]
\end{definition}

\noindent For brevity, we will assume throughout that $\cR_n$ is efficiently recognizable and henceforth we do not state this explicitly. Note that this assumption implies that the utility function defined above is efficiently computable.
The nearby point problem will be primarily used for proving the lower bounds against $\SDP$.

\subsection{Verifiable Low Diameter Set Problem}

Next, we define circuit-based tasks for which we will give $\CDP$ mechanisms.
To do so, we need to first define a ``$\tau$-diameter verifier''.

\begin{definition}[$\tau$-Diameter Verifier]
For a sequence $\tau = \seq{\tau_n}_{n \in \N}$  of integers, we say that an efficiently computable (deterministic) verifier $V = \seq{V_n}_{n \in \N}$ is a \emph{$\tau$-diameter verifier} for circuits of size $s(n)$ if it takes as input a circuit $C: \bit^n \to \bit$ of (polynomial) size $s(n)$ and a proof $\pi$ of size $\poly(n)$, and outputs $V_n(C, \pi) = 1$ only if $\diam(C^{-1}(1)) \leq \tau_n$.
\end{definition}

\noindent We can now define the (verifiable) low diameter set problem as follows:
\begin{definition}[Verifiable $\tau$-Diameter $\cR$-Set Problem]\label{def:vlds}
The \emph{verifiable low diameter set problem} parameterized by sequences $\tau = \seq{\tau_n}_{n \in \N}$, $\cR = \seq{\cR_n \subseteq \bit^n}_{n \in \N}$, and $\tau$-diameter verifier $V = \seq{V_n}_{n \in \N}$ is denoted by $\VLDS_{\tau, \cR, V}$. The input, output, and utility are defined as follows:
\begin{itemize}[leftmargin=*]
\item \textbf{Input:} $x \in \bit^n$.
\item \textbf{Output:} circuit $C$ and a proof $\pi$, both of size $\poly(n)$.
\item \textbf{Utility:} $u^{\VLDS}_{\tau_n, \cR_n, V_n}(x, (C, \pi)) := $ $\indicator\set{V_n(C, \pi) = 1}$ and $\indicator\set{C(x) = 1 \mbox{ or } x \notin \cR_n}$.
\end{itemize}
\end{definition}

\noindent For convenience, we also define the following utility function 
\begin{align*}
u^{\eval}_{\cR}(x, C) := \indicator\set{C(x) = 1 \mbox{ or } x \notin \cR}.
\end{align*}
Note that this does not correspond to a hard task, because a circuit that always outputs one is 1-useful. Nonetheless, it will be convenient to state usefulness of some intermediate algorithms via this utility function.

\subsection{From Low Diameter Set Problem to Nearby Point Problem}

Below we provide a simple observation that reduces  the task of proving an $\SDP$ lower bound for the verifiable low diameter set problem to that of the nearby point problem. (Note here that the $\SDP$ mechanisms considered below can be computationally inefficient.)

\begin{lemma} \label{lem:low-diameter-to-nearbypoint}
If there is an $(\beps, \bdelta)$-$\SDP$ $\balpha$-useful mechanism for the $\VLDS_{\tau, \cR, V}$ problem, then there is an $(\beps, \bdelta)$-$\SDP$ $\balpha$-useful mechanism for the $\NBP_{\tau, \cR}$ problem.
\end{lemma}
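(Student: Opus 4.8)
The plan is to give a direct reduction: given an $(\beps, \bdelta)$-$\SDP$ mechanism $\bM = \seq{M_n}$ that is $\balpha$-useful for $\VLDS_{\tau, \cR, V}$, construct a mechanism $\bM' = \seq{M'_n}$ for $\NBP_{\tau, \cR}$ by post-processing. On input $x \in \bit^n$, first run $M_n(x)$ to obtain a pair $(C, \pi)$. Then post-process $(C, \pi)$ into a point $y \in \bit^n$ as follows: if $V_n(C, \pi) = 1$, pick an arbitrary $z$ with $C(z) = 1$ (say, the lexicographically first such $z$, searching exhaustively — note this need not be efficient, but \Cref{lem:low-diameter-to-nearbypoint} only promises an $\SDP$ mechanism, which may be computationally unbounded) and output $y = z$; if no such $z$ exists, or if $V_n(C,\pi) = 0$, output $y = x$ (or any fixed default, e.g. $0^n$ — the exact choice will not matter since in that case the $\VLDS$ utility is already $0$).

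The privacy analysis is immediate from post-processing (\Cref{fact:dp-calculus}): since $M_n(x) \approx_{\eps_n, \delta_n} M_n(x')$ for adjacent $x, x'$, applying the (possibly randomized, here deterministic) post-processing map $g_n \colon (C,\pi) \mapsto y$ preserves $(\eps_n,\delta_n)$-indistinguishability, so $M'_n(x) \approx_{\eps_n,\delta_n} M'_n(x')$, i.e. $\bM'$ is $(\beps, \bdelta)$-$\SDP$.

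For utility, I would argue pointwise in $x$: fix $x \in \bit^n$ and condition on the event that $u^{\VLDS}_{\tau_n,\cR_n,V_n}(x,(C,\pi)) = 1$, which by assumption occurs with probability at least $\alpha_n$. On this event we have $V_n(C,\pi) = 1$ and ($C(x) = 1$ or $x \notin \cR_n$). If $x \notin \cR_n$, then $u^{\NBP}_{\tau_n,\cR_n}(x, y) = 1$ automatically regardless of $y$. If $x \in \cR_n$, then $C(x) = 1$, so in particular $C^{-1}(1) \neq \emptyset$ and the post-processing picks some $z \in C^{-1}(1)$; moreover $V_n(C,\pi) = 1$ guarantees by the definition of a $\tau$-diameter verifier that $\diam(C^{-1}(1)) \leq \tau_n$, hence $\|x - z\|_1 \leq \tau_n$ since both $x$ and $z = y$ lie in $C^{-1}(1)$. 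Either way $u^{\NBP}_{\tau_n,\cR_n}(x,y) = 1$. Therefore $\Pr[u^{\NBP}_{\tau_n,\cR_n}(x, M'_n(x)) = 1] \geq \alpha_n$, so $\bM'$ is $\balpha$-useful for $\NBP_{\tau,\cR}$.

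I do not anticipate a genuine obstacle here — the lemma is essentially bookkeeping, and the only subtlety is making sure the post-processing step is allowed to be computationally unbounded (which it is, since the lemma's conclusion is only about $\SDP$, not $\CDP$) and that the default output in the ``bad'' cases is harmless (which it is, because those cases contribute nothing to either utility bound). The one thing to state carefully is the direction of the implication and that it goes through an arbitrary (inefficient) mechanism, since this is precisely what makes \Cref{lem:blatant-lb} usable as an $\SDP$ lower bound without yielding a $\CDP$ lower bound.
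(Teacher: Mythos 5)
Your proof is correct and takes essentially the same route as the paper's: construct $M'$ by running $M$, then post-processing $(C,\pi)$ into the lexicographically first element of $C^{-1}(1)$ when the verifier accepts, with privacy following from post-processing and utility from the pointwise case analysis on $x \in \cR_n$ versus $x \notin \cR_n$. One small caveat: your primary suggestion to output $y = x$ in the default case is inconsistent with your own privacy argument, since the post-processing map $g_n$ would then depend on the input $x$ and not just on $(C,\pi)$, so \Cref{fact:dp-calculus} would not apply directly (and indeed, a mechanism that sometimes echoes $x$ need not remain $(\eps,\delta)$-SDP). The alternative you offer, a fixed default like $0^n$, is the right choice and is exactly what the paper uses.
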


\begin{proof}
Let $M$ be an $(\beps, \bdelta)$-SDP $\balpha$-useful mechanism for the $\VLDS_{\tau, \cR, V}$ problem.
We will construct an $(\beps, \bdelta)$-SDP $\balpha$-useful mechanism $M'$ for the $\NBP_{\tau, \cR}$ problem.

The mechanism $M'_n$ on input dataset $x \in \bit^n$ works as follows. First, let $(C, \pi) \gets M_n(x)$. If $V_n(C, \pi) = 1$, then output the lexicographically first element of $C^{-1}(1)$ (else, output $0^n$). This completes our description of $M'$.

Since $M$ is $(\beps,\bdelta)$-SDP, we have that $M'$ is also $(\beps, \bdelta)$-SDP by post-processing. It remains to show that $M'$ is $\balpha$-useful. Fix some input $x \in \bit^n$. If $x \notin \cR_n$, then any output satisfies utility. Thus, it suffices to consider the case where $x \in \cR_n$. With probability $\alpha_n$, we have that $V_n(C, \pi) = 1$ (which implies that $C^{-1}(1)$ has diameter at most $\tau_n$), and $x \in C^{-1}(1)$. Consequently, the distance between $x$ and the lexicographically first element of $C^{-1}(1)$ is at most $\tau_n$. So with probability at least $\alpha_n$, the output of $M'$ is useful for $x$, as desired.
\end{proof}

\section{\texorpdfstring{$\CDP$}{CDP} Mechanism for Verifiable Low Diameter Set Problem}
\label{sec:CDP}

In this section we build a CDP mechanism for the verifiable low diameter set problem.  We establish the following result:
\begin{theorem} \label{thm:cdp-main}
Suppose that \Cref{as:crklhf,as:diO,as:niwi} hold.
Then, for all constant $\beps_{\CDP} > 0$ and $\tau = \seq{\tau_n = n^{0.9}}_{n \in \N}$, there exists a $\tau$-diameter verifier $V$ and a sequence $\cR = \seq{\cR_n}_{n \in \N}$ of sets of sizes $|\cR_n| \ge 2^n / n^{o(\log n)}$, such that there exists an $\beps_{\CDP}$-$\CDP$ mechanism that is $(1 - o_n(1))$-useful for $u^{\VLDS}_{\tau, \cR, V}$.
\end{theorem}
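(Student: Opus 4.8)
The plan is to construct the $\CDP$ mechanism in three layers, one per assumption, and to establish privacy via a three-step reduction from arbitrary PPT adversaries down to \emph{non-adaptive} decision-tree (query) adversaries, against whom a randomized-response-style mechanism is private with a constant parameter.

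\textbf{Constructing the mechanism.} Fix the target $\beps_{\CDP}$ and set $\tau_n=n^{0.9}$. Let $h=\{h_n\colon\bit^n\to\bit^{k_n}\}$ be the laconic collision-resistant keyless hash of \Cref{as:crklhf}, with the compression tuned so that $2^{\,n-k_n}\ge 2^n/n^{o(\log n)}$, and put $\cR_n:=h_n^{-1}(0^{k_n})$; this set is efficiently recognizable and has $|\cR_n|\ge 2^n/n^{o(\log n)}$. On input $x\in\bit^n$ the mechanism does the following. If $x\notin\cR_n$, it outputs $\diO$ of the constant-$0$ circuit with a trivial proof. Otherwise it applies randomized response to $x$ (flipping each coordinate independently with probability $p_n=\Theta(\tau_n/n)$ fixed below) to get $\tilde x$, fixes a radius $r_n\le\tau_n/2$, and forms the ``ball circuit'' $C_{\tilde x}$ that on input $z$ outputs $1$ iff $h_n(z)=0^{k_n}$ and $\|z-\tilde x\|_1\le r_n$; it then outputs $\hat C\gets\diO(C_{\tilde x})$ together with a NIWI proof $\pi$ (of \Cref{as:niwi}) for the $\NP$ statement ``$\hat C$ is an obfuscation of a ball circuit of radius $\le\tau_n/2$, or of the constant-$0$ circuit'' (with the obfuscation coins, $\tilde x$, and $r_n$ as the witness). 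I would define the $\tau$-diameter verifier $V_n$ to simply run the NIWI verifier on $\pi$; since $\diO$ preserves functionality and $C_{\tilde x}^{-1}(1)\subseteq B_{r_n}(\tilde x)$ has Hamming diameter $\le 2r_n\le\tau_n$, a valid $\pi$ certifies $\diam(\hat C^{-1}(1))\le\tau_n$, so $V$ is a legitimate $\tau$-diameter verifier; and since $V$ and the NIWI verifier are polynomial-time, $u^{\VLDS}_{\tau,\cR,V}$ is efficiently computable. For utility: when $x\notin\cR_n$ the utility is just $\indicator\{V_n(\hat C,\pi)=1\}$, which holds by construction; when $x\in\cR_n$ we also need $\hat C(x)=1$, i.e.\ $\|x-\tilde x\|_1\le r_n$, which holds with probability $1-o_n(1)$ by a Chernoff bound since $\|x-\tilde x\|_1$ is a sum of $n$ i.i.d.\ $\mathrm{Bern}(p_n)$ variables of mean $\Theta(\tau_n)$ and the constant in $p_n$ is taken small enough.

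\textbf{Privacy.} Given any sequence $\{(x_n,x'_n)\}$ of adjacent inputs and any PPT $\cA$, I would prove $\{M_n(x_n)\}\approx^c_{\beps_{\CDP}}\{M_n(x'_n)\}$ in three steps. \emph{(1) $\diO$ and NIWI remove the ``white-box'' view.} Using witness-indistinguishability of the NIWI (switching, in the analysis, to a witness independent of the input via the disjunctive form of the statement), it suffices to compare the obfuscated circuits alone; then $\diOpcS$ (\Cref{as:diO}) converts any efficient distinguisher of $\diO(C_{\tilde x_n})$ from $\diO(C_{\tilde x'_n})$ (or from $\diO(\text{const-}0)$, when exactly one of $x_n,x'_n$ lies in $\cR_n$) into an efficient extractor that, given only the obfuscation, outputs an input where the two circuits differ---and every such input lies in $B_{r_n}(\tilde x_n)\cap\cR_n$. \emph{(2) CRKHF makes the queries effectively non-adaptive.} Because $\cR_n$ is efficiently recognizable, the extractor's only informative candidates are elements of $\cR_n$, and multi-collision resistance of the keyless hash bounds the number of distinct elements of $\cR_n=h_n^{-1}(0^{k_n})$ that any PPT machine (with its non-uniform advice) can ever output by $\poly(n)$; since $B_{r_n}(\tilde x_n)$ has negligible density, with probability $1-o_n(1)$ over the randomized response none of these $\poly(n)$ candidates is a differing input, so the extracted information reduces to the values of $C_{\tilde x_n}$ on a set of $\poly(n)$ points of $\cR_n$ that is independent of $\tilde x_n$, i.e.\ a \emph{non-adaptive} query adversary. \emph{(3) Randomized response is non-adaptively private.} For any fixed $z_1,\dots,z_q\in\cR_n$ with $q=\poly(n)$, the vector $\big(\indicator\{\|z_j-\tilde x_n\|_1\le r_n\}\big)_{j\le q}$ has $\beps_{\CDP}$-indistinguishable distributions under $\tilde x_n\gets\mathrm{RR}(x_n)$ and $\tilde x_n\gets\mathrm{RR}(x'_n)$: the single-coordinate difference between $x_n$ and $x'_n$ changes the likelihood of any outcome region by at most a bounded multiplicative factor (a packing-style / likelihood-ratio computation), and only the rare event that some $z_j$ falls on the sphere $\|z_j-\tilde x_n\|_1=r_n$ contributes at all---an event that is negligible because a query set uncorrelated with $\tilde x_n$ almost surely yields all-$0$ answers. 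Stitching these together (with $p_n$ tuned so the factor in Step~(3) is $e^{\beps_{\CDP}}$) yields $\beps_{\CDP}$-$\CDP$.

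\textbf{Main obstacle.} I expect the interplay of Steps~(1)--(2) to be the crux. The $\diOpcS$ guarantee only hands back a \emph{single} differing input, so the reduction must be staged carefully---hybridizing over the adversary's/extractor's queries, answering the (predictable) out-of-$\cR_n$ queries internally and charging each in-$\cR_n$ query against the multi-collision budget---so that a distinguisher would either produce more elements of $\cR_n$ than its advice length permits or locate a hidden ball of negligible density, with no circularity (in particular, without assuming outright that $\diO$ of a ball circuit is one-way). Tangled with this is the quantitative balancing: the parameters $p_n,r_n,k_n$ and the precise shape of $C_{\tilde x}$ must simultaneously deliver a \emph{constant} $\beps_{\CDP}$, $1-o_n(1)$ utility, diameter exactly $\tau_n=n^{0.9}$, and a circuit small enough to carry the short NIWI-friendly diameter certificate; I expect getting all four to hold at once---and verifying that $\cR_n$ is ``spread'' enough through Hamming balls for the density estimate in Step~(2)---to be where most of the work lies.
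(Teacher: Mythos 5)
Your high-level blueprint---obfuscate a ball-type circuit, use multi-collision resistance of the keyless hash to cap the number of useful queries, and use NIWI for an efficiently checkable certificate---does match the paper's plan. However, the concrete mechanism you propose has a structural bug that the paper's two-ball construction is specifically designed to avoid, and the NIWI step has a second, independent hole.

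\textbf{The two-ball structure is the missing idea.} You obfuscate a single ball $B_{r_n}(\tx)\cap\cR_n$ around the RR output $\tx$, with flip rate $p_n=\Theta(\tau_n/n)=o(1)$ and $r_n\le\tau_n/2$. To get $1-o(1)$ utility you need $r_n\ge np_n+\omega(\sqrt{np_n})$, so $r_n$ sits just past the bulk of $\Bin(n,p_n)$. But then for a fixed $Y\in\cR_n$ (e.g.\ $Y=x$, which the non-uniform adversary may hardcode), $\Pr[\|Y-\tx\|_1\in\{r_n,r_n+1\}]=\Theta(1/\sqrt{np_n})$, which is only inverse-polynomial. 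This kills both halves of your argument: the CRKHF reduction needs the per-candidate differing probability to be \emph{negligible} so that repeated extractions yield \emph{distinct} preimages, and the distinguishing game itself is already broken---an adversary that simply evaluates $\hC$ on the two hard-coded inputs $x,x'$ gets advantage $\Theta((1-2p_n)\Pr[\Bin(n-1,p_n)=r_n])=\Theta(n^{-c})$ for some $c<1$, not negligible. Separately, your $\RR$ with $p_n=o(1)$ has per-coordinate likelihood ratio $\frac{1-p_n}{p_n}=n^{\Theta(1)}$, so it is not $O(1)$-SDP, and your Step~(3) "post-processing of RR" claim is false as stated. The paper resolves exactly this tension by using \emph{two} balls: the obfuscated circuit is $\indicator\{z\in B_r(x)\cap B_{\tr}(\tx)\cap H^{-1}(\upsilon)\}$ with a \emph{small deterministic} ball $B_r(x)$ ($r=0.5n^{0.9}$) that certifies the diameter, and a \emph{large randomized} ball $B_{\tr}(\tx)$ with $\tr=\frac{n}{1+e^\eps}+n^{0.6}$ and $\tx\gets\RR_\eps(x)$ at a \emph{constant} flip rate, so $\RR$ is honestly $\eps$-SDP and the whole $\eps$-privacy of the mechanism comes from $\RR$ via post-processing. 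The decoupling is what makes the differing-input analysis work: any differing input lies in $B_r(x)\triangle B_r(x')$, hence at distance $\approx r=\Theta(n^{0.9})$ from $x$, so $\E\|Y-\tx\|_1\approx\frac{n}{1+e^\eps}+\frac{e^\eps-1}{e^\eps+1}r\gg\tr$, and Bernstein gives an $\exp(-\Omega(n^{0.8}))$ bound on $\Pr[Y\in B_{\tr}(\tx)]$. This is the negligibility your single-ball construction cannot achieve.

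\textbf{The NIWI hybrid does not go through with a disjunctive statement.} You propose the language ``$\hC$ is an obfuscation of a ball circuit, or of the constant-$0$ circuit'' and want WI to switch to a witness ``independent of the input.'' But when $\hC$ is genuinely an obfuscation of a ball circuit, the constant-$0$ disjunct is false, so there is no such witness to switch to; WI requires both witnesses to certify the \emph{same} true statement. Moreover, when you change the circuit from $C_{\tx}$ to $C_{\tx'}$, the instance itself changes, so WI gives you nothing across that step. The paper gets around this by outputting $\hC=\hC_0\wedge\hC_1$ with two \emph{independent} obfuscations of ball circuits and proving the statement ``at least one of $\hC_0,\hC_1$ is an honest obfuscation.'' This allows a four-step hybrid in which, between consecutive hybrids, either only the witness changes (NIWI WI) or only one of $\hC_0,\hC_1$ changes while $\tx,\rho$ for the other copy stay immaterial (reducible to $\Mdio(x)\approx^c_\eps\Mdio(x')$). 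Finally, a minor point: picking $\upsilon=0^{k_n}$ does not guarantee $|H^{-1}(\upsilon)|\ge 2^n/2^{\gamma(n)}$; the paper picks the $\upsilon$ with the largest preimage, which always does by pigeonhole.
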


As discussed in the overview, we first build a mechanism that is $\CDP$ but without verifiability using collision-resistant keyless hash functions and differing-inputs obfuscators (\Cref{sec:cdp-unverifiable}). We then turn it into a verifiable one using non-interactive witness indistinguishable proofs (\Cref{sec:cdp-verifiable}).

\subsection{\texorpdfstring{$\CDP$}{CDP} Mechanism without Verifiability}
\label{sec:cdp-unverifiable}

In this section, we construct our first $\CDP$ mechanism (\Cref{alg:dio}). We depart from the  overview in \Cref{sec:proof-overview}  slightly and do not prove a non-adaptive query lower bound explicitly. Instead, we directly show in \Cref{subsec:di-circuit-family} how to sample the appropriate differing-inputs circuit family. This can be then easily turned into our $\CDP$ mechanism via $\diO$ in \Cref{subsec:cdp-unverifiable}.

\subsubsection{Additional Preliminaries: Cryptographic Primitives}

Throughout this section, we will repeatedly use the so-called \emph{randomized response (RR)} mechanism~\cite{warner1965randomized}. Specifically, $\RR_\eps$ is an algorithm that takes in $x \in \{0, 1\}^n$ and outputs $\tx \in \{0, 1\}^n$, where $\tx_i = x_i$ with probability $\frac{e^\eps}{1 + e^\eps}$ independently for each $i \in [n]$. It is well-known (and very simple to verify) that $\RR_\eps$ is $\eps$-$\SDP$.\\[-3mm]

\myparagraph{Collision-Resistant Keyless Hash Functions.}
In our construction, we will use the Collision-Resistant Keyless Hash Functions (CRKHFs)~\cite{BitanskyKP18}. The formal definition is as given below.

\begin{definition}[Collision-Resistant Keyless Hash Functions~\cite{BitanskyKP18}]
A sequence of hash functions $\seq{H_{n} : \bit^{n} \to \bit^{\gamma(n)}}_{n \in \N}$ is {\em $K$-collision resistant} for advice length $\zeta$ for sequences $K = \set{K_n}_{n \in \N}$, $\zeta = \set{\zeta_n}_{n \in \N}$ if, for any PPT $\cA$ and a sequence  $\{z_n\}_{n \in \N}$ of advices where $|z_n| = \zeta_n$, it holds for $(Y_1, \dots, Y_{K_n}) \gets \cA(1^{n}; z_n)$ that
\begin{align*}
\Pr\begin{bmatrix}
    Y_1, \dots, Y_{K_n} \text{ are distinct \&}~ \\
    H_{n}(Y_1) = \cdots = H_{n}(Y_{K_n})
\end{bmatrix} & \leq \negl(n).
\end{align*}
We skip the subscript $n$ when it is clear from context.
\end{definition}

In~\cite{BitanskyKP18}, the hash value length $\gamma(n)$ is assumed to be either linear, i.e., $\gamma(n) = \Omega(n)$, or polynomial, i.e., $\gamma(n) = n^{\Theta(1)}$. However, we need a collision-resistant hash function with a much smaller $\gamma(n)$, namely $O(\log^2 n)$. We remark that this is still very much plausible: as long as $\gamma(n)$ is $\omega(\log n)$, the ``guess-and-check'' algorithm will only produce a collision with only negligible probability.
A more precise statement of our assumption is stated below.

\begin{assumption} \label{as:crklhf}
There is an efficiently computable sequence $H = \set{H_n}_{n \in \N}$ of hash functions with hash value length $\gamma(n) = o(\log^2 n)$ such that, for any constant $c_1 > 0$, there exists a constant $c_2 > 0$ such that the hash function sequence is $K$-collision resistant for advice length $\zeta$ where $K_n = n^{c_2}$ and $\zeta_n = n^{c_1}$.
\end{assumption}

We remark that, for the existence of $\CDP$ mechanism (shown in this section), we will only use the multi-collision-resistance without relying on the assumption on the value of $\gamma$. The latter is only used to show that no $\SDP$ mechanism exists for the problem
(\Cref{sec:proof-main-thm}).

\paragraph{Differing-Inputs Obfuscators for Public-Coin Samplers.}
For any two circuits $C_0$ and $C_1$, a differing-inputs obfuscator $\diO$ \cite{BarakGIRSVY12} guarantees that the non-existence of an efficient adversary that can find an input on which $C_0$ and $C_1$ differ implies that $\diO(C_0)$ and $\diO(C_1)$ are computationally indistinguishable. For our application, it suffices to assume a weaker notion, namely that of {\em differing-inputs obfuscator for public-coin samplers}, as defined below.

\begin{definition}[Public-Coin Differing-Inputs Circuit Sampler]\label{def:pcSamplers}
An efficient non-uniform sampling algorithm $\Sampler = \set{\Sampler_n}$ is a \emph{public-coin differing-inputs sampler} for the parameterized collection $\cC = \set{\cC_n}$ of circuits if the output of $\Sampler_n$ is distributed over $\cC_n \times \cC_n$ and for every efficient non-uniform algorithm $\cA = \set{\cA_n}$, there exists a negligible function $\negl(\cdot)$ such that for all $n \in \N$:
\begin{align*}
\Pr_{\theta} \begin{bmatrix*}[l]
C_0(y) \ne C_1(y) :\\
(C_0, C_1) \gets \Sampler_n(\theta),\\
y \gets \cA_n(\theta)
\end{bmatrix*}  & ~\le~ \negl(n).
\end{align*}
Here, $\Sampler_n$ is a deterministic algorithm and the only source of randomness is the seed $\theta$.
\end{definition}

\begin{definition}[Differing-Inputs Obfuscator for Public-Coin Samplers (cf. \cite{IshaiPS15})]\label{def:diO}
A uniform PPT $\diO$ is a \emph{differing-inputs obfuscator for public-coin samplers} for the parameterized circuit family $\cC = \{\cC_{n}\}$ if the following conditions are satisfied:
\begin{itemize}[nosep,leftmargin=*]
\item {\bf Correctness:} For all $n \in \mathbb{N}$, for all $C \in \cC_n$, for all inputs $y$, we have that
\[ \Pr[C'(y) = C(y) : C' \gets \diO(1^n, C)] ~=~ 1. \]
\item {\bf Polynomial slowdown:} There exists a universal polynomial $p(\cdot)$ such that for all $C \in \cC_n$, it holds that \[ \Pr[|C'| \le p(|C|) : C' \gets \diO(1^n, C)] ~=~ 1. \]
\item {\bf Differing-inputs:} For every public-coin differing inputs sampler $\Sampler = \set{\Sampler_n}$ for $\cC$, and every (not necessarily uniform) PPT distinguisher $\cD = \set{D_n}$, there exists a negligible function $\negl$ such that the following holds for all $n \in \N$: For $(C_0, C_1) \gets \Sampler_n(\theta)$
\begin{align*}
\begin{vmatrix*}[l]
\Pr_{\theta} \left [D_n(\diO(1^n, C_0)) = 1 \right] \\
- \Pr_{\theta} \left [D_n(\diO(1^n, C_1)) = 1 \right ]
\end{vmatrix*}~\le~ \negl(n).
\end{align*}
\end{itemize}
We note that the notion of $\diOpcS$ is in fact weaker than the notion of general public-coin $\diO$ as given by \cite{IshaiPS15}. We elaborate on this comparison in Appendix~\ref{apx:diO-comparison}. Whenever $n$ is clear from context, we use $\diO(C)$ to denote $\diO(1^n, C)$ for simplicity. When we want to be explicit about the randomness $\rho$ (of $\poly(n)$ bit length) used by $\diO$ we will denote it as $\diO_{\rho}(C)$.
\end{definition}

\noindent We only need the existence of a differing-inputs obfuscator for a specific family of circuits. This circuit family will be defined later and therefore we defer formalizing our assumption to \Cref{subsec:cdp-unverifiable}.

\subsubsection{Public-Coin Differing-Inputs Circuits from CRKHFs}
\label{subsec:di-circuit-family}

The first step of our proof is to construct a differing-inputs circuit family based on CRKHFs.
Our sampler is described in \Cref{alg:di-sampler}.

\begin{algorithm}[ht]
\caption{Differing-Inputs Circuit Family Sampler $\LDSSampler_n$.}
\label{alg:di-sampler}
\begin{algorithmic}
\STATE {\bf Parameters: } Adjacent datasets $x, x' \in \bit^n$, hash value $\upsilon_n \in \bit^{\gamma(n)}$, privacy parameter $\eps > 0$, radius $r, \tr > 0$.
\STATE {\bf Randomness:} $\theta \sim \RR_\eps(0^n)$.
\STATE {\bf Output:} Circuits $C_0, C_1$.
\STATE $\tx \gets x \oplus \theta$ (bit-wise XOR; equivalent to $\RR_{\eps}(x)$)
\STATE $C_0 \gets$ circuit that on input $z$ returns $\indicator\set{z \in B_r(x) \cap B_{\tr}(\tilde{x}) \cap H^{-1}_{n}(\upsilon_n)}$
\STATE $C_1 \gets$ circuit that on input $z$ returns $\indicator\set{z \in B_r(x') \cap B_{\tr}(\tilde{x}) \cap H^{-1}_{n}(\upsilon_n)}$
\RETURN $(C_0, C_1)$
\end{algorithmic}
\end{algorithm}

We next prove that the above sampler is a public-coin differing-inputs sampler, which means that any efficient adversary, even with the knowledge of $\tx$ (which is the only source of randomness), cannot find an input on which $C_0$ and $C_1$ differ. The proof starts by noticing that any input that differentiates $C_0, C_1$ must, by definition of the circuits, have hash value $\upsilon_n$. Therefore, if there were an adversary that can find a differing input, then we could run it multiple times to get $Y_1, \dots, Y_K$ that have the same hash value. (See \Cref{adv:crhf} below.) However, our proof is not finished yet, since it is possible that $Y_1, \dots, Y_K$ are not distinct. Indeed, the crux of the construction is that, due to how we select $\tx$ and define the circuits, a fixed $Y$ will be a differing input with negligible probability\footnote{It is also simple to see that this property suffices to prove a non-adaptive query lower bound as discussed in \Cref{sec:proof-overview}.}. It follows that $Y_1, \dots, Y_K$ must be distinct w.h.p. This is formalized below.

\begin{lemma} \label{lem:di-circuit-family}
Let $H$ be as in \Cref{as:crklhf}. For any constant $\eps > 0$, choosing $r = 0.5n^{0.9}$ and $\tr = \frac{1}{1 + e^\eps} n + n^{0.6}$ makes $\LDSSampler_n$ (\Cref{alg:di-sampler}) a public-coin differing-inputs sampler.
\end{lemma}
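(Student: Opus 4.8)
The plan is to verify the two conditions of Definition~\ref{def:pcSamplers}: that $\LDSSampler_n$ outputs a pair of circuits from a fixed parameterized collection, and that no efficient adversary, given the seed $\theta$ (equivalently $\tx$), can find an input $y$ with $C_0(y) \neq C_1(y)$ except with negligible probability. The first condition is immediate from the description of the algorithm. For the second, the key structural observation is that $C_0$ and $C_1$ agree everywhere except on the symmetric difference $(B_r(x) \triangle B_r(x')) \cap B_{\tr}(\tx) \cap H_n^{-1}(\upsilon_n)$; in particular, \emph{every} differing input $y$ must satisfy $H_n(y) = \upsilon_n$. So a differing-input finder is, in disguise, a machine that produces preimages of $\upsilon_n$ under $H_n$.

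First I would set up the reduction to collision-resistance (this is the adversary ``$\adv:crhf$'' alluded to before the lemma statement). Suppose toward contradiction that there is an efficient non-uniform $\cA = \{\cA_n\}$ and a non-negligible function $\eta$ with $\Pr_\theta[\,C_0(\cA_n(\theta)) \neq C_1(\cA_n(\theta))\,] \ge \eta(n)$ for infinitely many $n$. Build a machine $\cB$ that, on input $1^n$ and advice consisting of $x, x', \upsilon_n$ together with $K = n^{c_2}$ fresh samples $\theta^{(1)}, \dots, \theta^{(K)} \sim \RR_\eps(0^n)$ (so the advice length is $\zeta_n = \poly(n) \le n^{c_1}$ for a suitable constant $c_1$, which then fixes $c_2$ via \Cref{as:crklhf}), runs $\cA_n$ on each $\theta^{(j)}$ to get candidates $Y_j = \cA_n(\theta^{(j)})$, and outputs $(Y_1, \dots, Y_K)$. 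Whenever $Y_j$ is a genuine differing input we have $H_n(Y_j) = \upsilon_n$, so conditioned on \emph{all} the $Y_j$ being differing inputs they all hash to the same value; by a union bound this event has probability at least $1 - K(1-\eta(n))$… which is \emph{not} yet good enough, so I would instead amplify: run $\cA_n$ on $\Theta(K/\eta(n))$ independent seeds, keep those $j$ on which the output is a differing input (checkable efficiently since the circuits are explicit given the advice), and with overwhelming probability collect at least $K$ of them. Hardcoding the successful seeds into the advice keeps $\cB$ non-uniform and within the advice budget. The remaining gap is that the $K$ collected preimages need to be \emph{distinct}.

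The heart of the argument—and the step I expect to be the main obstacle—is showing that for \emph{any fixed} $y \in \bit^n$, the probability over $\theta \sim \RR_\eps(0^n)$ that $y$ is a differing input is negligible. A fixed $y$ is a differing input only if $y \in B_r(x) \triangle B_r(x')$ (a fixed, $\theta$-independent event that may or may not hold) and, crucially, $y \in B_{\tr}(\tx)$ where $\tx = x \oplus \theta$. The quantity $\|\tx - y\|_1 = \|x \oplus \theta \oplus y\|_1$ is, coordinatewise, a sum of independent Bernoullis: in coordinate $i$ the contribution is $1$ with probability $\tfrac{1}{1+e^\eps}$ if $x_i = y_i$ and with probability $\tfrac{e^\eps}{1+e^\eps}$ if $x_i \neq y_i$. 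Hence $\E\|\tx - y\|_1 = \tfrac{1}{1+e^\eps} n + (1 - \tfrac{2}{1+e^\eps})\,\|x-y\|_1$. Since $y$ must lie in $B_r(x) \triangle B_r(x')$ and $x, x'$ are adjacent with $r = 0.5 n^{0.9}$, we have $\|x - y\|_1 \le r + 1 = O(n^{0.9})$, so $\E\|\tx - y\|_1 \le \tfrac{1}{1+e^\eps} n + O(n^{0.9})$, which is strictly less than $\tr = \tfrac{1}{1+e^\eps} n + n^{0.6}$ by an additive gap of order $n^{0.6}$ once $n$ is large (wait—that is backwards; $n^{0.9} \gg n^{0.6}$). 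The correct reading is the opposite: $\tr$ is chosen so that the threshold is \emph{below} the typical value $\tfrac{1}{1+e^\eps}n$ only by the small slack $n^{0.6}$, and for $y$ to be a differing input we need $\|\tx - y\|_1 \le \tr$, i.e. $\|\tx-y\|_1$ must fall at least $\Omega(n^{0.6})$ \emph{below} its mean $\approx \tfrac{1}{1+e^\eps}n$ (the correction term from $\|x-y\|_1 = O(n^{0.9})$ only shifts the mean by a lower-order amount relative to $n$, but I must check it does not swamp the $n^{0.6}$ slack—indeed $O(n^{0.9})$ \emph{does} exceed $n^{0.6}$, so the parameters must be read as: the relevant $y$ actually has $x_i = y_i$ on all but $O(n^{0.9})$ coordinates, and what matters is a lower-tail deviation of $\|\tx - y\|_1$ from its mean; a Hoeffding/Chernoff bound gives $\exp(-\Omega((\text{slack})^2/n))$, and since the slack is $\Theta(n^{0.6})$ this is $\exp(-\Omega(n^{0.2}))$, which is negligible). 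I would carry this out by a clean Chernoff bound on the sum of independent Bernoullis defining $\|\tx - y\|_1$, being careful to track that the mean-shift from $\|x-y\|_1 \le r+1$ is compatible with the $n^{0.6}$ margin; this is exactly why the lemma fixes $r = 0.5 n^{0.9}$ and $\tr = \tfrac{1}{1+e^\eps}n + n^{0.6}$.

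Finally I would assemble the pieces: the per-$y$ negligible bound implies, by a union bound over the at most $\binom{K}{2}$ pairs among $Y_1, \dots, Y_K$ (each $Y_j$ is determined by hardcoded randomness, so ``fixed $y$'' applies), that with overwhelming probability no two of the collected differing inputs coincide; combined with the fact that they all hash to $\upsilon_n$, $\cB$ outputs $K = n^{c_2}$ distinct preimages of a common hash value with non-negligible probability, contradicting the $K$-collision-resistance of $H$ from \Cref{as:crklhf}. This contradiction shows $\LDSSampler_n$ is a public-coin differing-inputs sampler, completing the proof.
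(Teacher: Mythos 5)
Your overall strategy matches the paper's: reduce to $K$-collision-resistance of $H$ by running the differing-inputs finder enough times to collect $K$ differing inputs, observe that every differing input hashes to $\upsilon_n$, and argue distinctness via a ``fixed-$y$ is a differing input with negligible probability'' concentration bound. The reduction skeleton and the distinctness-via-union-bound idea are sound.

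However, there is a genuine gap in the concentration argument, which is the heart of the lemma. You only record the \emph{upper} bound $\|x-y\|_1 \le r+1$ for a differing input $y$, whereas the argument actually requires the \emph{lower} bound $\|x-y\|_1 \ge r$. Since $y \in B_r(x) \triangle B_r(x')$ with $x,x'$ adjacent, a quick triangle-inequality argument gives $\|x-y\|_1 \in \{r, r+1\}$ \emph{exactly}. This is essential: the mean you computed is $\E\|\tx-y\|_1 = \tfrac{n}{1+e^\eps} + \tfrac{e^\eps-1}{e^\eps+1}\|x-y\|_1$, and it is precisely the lower bound $\|x-y\|_1 \ge r = 0.5n^{0.9}$ that forces this mean to sit $\Theta(n^{0.9})$ \emph{above} $\tfrac{n}{1+e^\eps}$, and hence $\Theta(n^{0.9})$ above the threshold $\tr = \tfrac{n}{1+e^\eps} + n^{0.6}$. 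With only the upper bound, $\|x-y\|_1$ could in principle be $0$, the mean would be $\tfrac{n}{1+e^\eps} < \tr$, and $\Pr[\|\tx-y\|_1 \le \tr]$ would be close to $1$ rather than negligible — the argument collapses. The back-and-forth in your write-up (``wait—that is backwards,'' the claim that $\tr$ is ``below the typical value $\tfrac{1}{1+e^\eps}n$'' when it is in fact above it, and the unjustified assertion that ``the slack is $\Theta(n^{0.6})$'') all stem from not pinning down this lower bound. Once you fix $d := \|x-y\|_1 \in \{r,r+1\}$, the deviation from the mean to $\tr$ is $\tfrac{e^\eps-1}{e^\eps+1}(r-1) - n^{0.6} = \Theta(n^{0.9})$ and Bernstein/Chernoff gives $\exp(-\Omega(n^{0.8}))$ (not $\exp(-\Omega(n^{0.2})))$, matching the paper.
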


\begin{proof}
Suppose for the sake of contradiction that for some adjacent $x, x' \in \bit^n$, there exists a PPT $\ADI$ such that
\begin{align} \label{eq:di-violation}
\Pr_{\theta} \begin{bmatrix*}[l]
C_0(y) \ne C_1(y) :\\
(C_0, C_1) \gets \LDSSampler_n(\theta),\\
y \gets \ADI_n(\theta)
\end{bmatrix*}  & ~\ge~ n^{-c}.
\end{align}
for some constant $c > 0$. Furthermore, let $c_1$ be such that the total size of the descriptions of $\ADI_n, \LDSSampler_n$ is at most $n^{c_1}$. Finally, let $c_2 > 0$ be as in \Cref{as:crklhf} and $K = n^{c_2}$.

\begin{algorithm}[ht]
\caption{Collision-Resistant Hash Function Adversary $\ACRH_n$.}
\label{adv:crhf}
\begin{algorithmic}
\STATE {\bf Parameter: } The target number of collisions $K \in \N$, constant $c > 0$.
\STATE {\bf Advice: } Descriptions of $\ADI_n, \LDSSampler_n$.
\STATE {\bf Output:} $Y_1, \dots, Y_K \in \{0, 1\}^n$ or $\perp$.
\STATE $i \gets 0$
\FOR{$j = 1, \dots, K \cdot n^{c+1}$}
\STATE $\theta^j \gets \RR_\eps(0^n)$
\STATE $(C^j_0, C^j_1) \gets \LDSSampler_n(\theta^j)$
\STATE $y^j \gets \ADI_n(\theta^j)$
\IF{$C^j_0(y^j) \ne C^j_1(y^j)$}
\STATE $i \gets i + 1$
\STATE $Y_i \gets y^j$
\ENDIF
\IF{$i \geq K$}
\STATE \textbf{break}
\ENDIF
\ENDFOR
\IF{$i < K$}
\RETURN $\perp$
\ELSE
\RETURN $Y_1, \dots, Y_K$
\ENDIF
\end{algorithmic}
\end{algorithm}

Consider the adversary $\ACRH_n$ for collision-resistant hash function described in \Cref{adv:crhf}.
First, note that by \eqref{eq:di-violation} and a standard concentration inequality, the probability that $\ACRH_n$ outputs $\perp$ is $o_n(1)$. Furthermore, notice that $C_0, C_1$ can differ on $y$ only if $H_n(y) = \upsilon_n$, meaning that $H_n(Y_i) = \upsilon_n$ always. Therefore, it suffices for us to show that the probability that $Y_1, \dots, Y_K$ are distinct is $1 - o_n(1)$. By a union bound, we have that $\ACRH_n$ violates the collision-resistance of $H$ as desired.

Thus, we are only left to show that $Y_1, \dots, Y_K$ are not distinct with probability $o(1)$. To see that this is the case, notice that
\begin{align}
&\Pr[Y_1, \dots, Y_K \text{ are not distinct}] \nonumber\\
&~\leq~\sum_{1 \leq i_1 < i_2 \leq K} \Pr[Y_{i_1} = Y_{i_2}].\label{eq:prob-distinct-expand}
\end{align}
Let us now bound $\Pr[Y_{i_1} = Y_{i_2}]$ for a fixed pair $i_1 < i_2$. Suppose that we fix a value of $Y_{i_1}$ and suppose that $Y_{i_1}$ is assigned at step $j_1 \in [1, \dots, K \cdot n^{c+1}]$. Conditioned on these, notice further that
\begin{align}
\Pr[Y_{i_2} = Y_{i_1}] &\leq \Pr[\exists j > j_1, y^j = Y_{i_1}] \nonumber \\
&\leq \Pr[\exists j > j_1, C^j_0(Y_{i_1}) \ne C^j_1(Y_{i_1})] \nonumber \\
&\leq \sum_{j > j_1} \Pr[C^j_0(Y_{i_1}) \ne C^j_1(Y_{i_1})]. \label{eq:prob-distinct-expand2}
\end{align}
Now, let us bound the RHS probability for a fixed $j > j_1$. To see this, first observe that $Y_{i_1}$ must belong to the symmetric difference $B_r(x) \triangle B_r(x')$; otherwise, we must have $C^{j_1}_0(Y_{i_1}) = C^{j_1}_1(Y_{i_1})$, a contradiction to our definition of $Y_{i_1}$.

\noindent Now, let $\tx^j$ denote the $\tx$ selected by $\LDSSampler$ when constructing $C^j_0, C^j_1$. We have
\begin{align} \label{eq:diff-prob-each}
\Pr[C^j_0(Y_{i_1}) \ne C^j_1(Y_{i_1})] \leq \Pr[Y_{i_1} \in B_{\tr}(\tx^j)].
\end{align}
Let $d := \|Y_{i_1} - x\|_1$ and $\td := \|Y_{i_1} - \tx^j\|_1$. 
Since $Y_{i_1} \in B_r(x) \triangle B_r(x')$, it holds that $d \in \{r, r + 1\}$. Thus, $\td$ is distributed as $\Bin(d, \frac{e^{\eps}}{1+e^{\eps}}) + \Bin(n-d, \frac{1}{1+e^{\eps}})$. We have $\E_{\tx^j \sim \RR_{\eps}(x)} \td = \frac{1}{1+e^{\eps}} n + \frac{e^{\eps} - 1}{e^{\eps} + 1} d$. By Bernstein's inequality,
\begin{align*}
    \Pr[\td \leq \tr] & ~\le~ \exp\inparen{- \frac{t^2}{\frac{e^{\eps}}{(1+e^{\eps})^2} n + \frac{2}{3} t}} \\
    & ~\le~ \exp(-\Omega(n^{0.8})),
\end{align*}
where $t = \E_{\tx^j \sim \RR_{\eps}(x)} \td - \tr \geq \frac{e^{\eps} - 1}{e^{\eps} + 1}(0.5n^{0.9} - 1) - n^{0.6}$. Plugging into \eqref{eq:diff-prob-each}, we have %
\begin{align} \label{eq:diff-prob-each-concrete}
\Pr[C^j_0(Y_{i_1}) \ne C^j_1(Y_{i_1})] \leq \exp(-\Omega(n^{0.8})).
\end{align}
Combing \eqref{eq:prob-distinct-expand}, \eqref{eq:prob-distinct-expand2}, \eqref{eq:diff-prob-each-concrete}, we have
\begin{align*}
&\Pr[Y_1, \dots, Y_K \text{ are not distinct}]\\
& ~\leq~ K^3 n^{c+1} \cdot \exp(-\Omega(n^{0.8}))\\
& ~\le~ \exp(-\Omega(n^{0.8})),
\end{align*}
where the last inequality follows from $K = n^{O(1)}$.
\end{proof}

\subsubsection{\boldmath From Differing-Inputs Circuits to \texorpdfstring{$\CDP$}{CDP}}
\label{subsec:cdp-unverifiable}

We will next construct a $\CDP$ mechanism from the previously constructed differing-inputs circuit family. First, let us state the assumption we need here:

\begin{assumption} \label{as:diO}
For $H$ as in \Cref{as:crklhf}, any constant $\eps > 0$ and $r = 0.5n^{0.9}, \tr = \frac{1}{1 + e^\eps} n + n^{0.6}$, there exists a differing-inputs obfuscator $\diO$ for the sampler $\LDSSampler$.
\end{assumption}

Our mechanism can then be defined by simply applying the obfuscator to the circuit generated in the same way as $C_1$ in $\LDSSampler_n$. This mechanism $\Mdio$ is described more formally in \Cref{alg:dio}. The $\CDP$ property of the mechanism follows rather simply from the definition of $\diO$ and the fact that $\RR_{\eps}$ is $\eps$-$\SDP$. 

\begin{algorithm}[ht]
\caption{$\CDP$ mechanism $\Mdio$.}
\label{alg:dio}
\begin{algorithmic}
\STATE {\bf Parameter: } Differing-inputs obfuscator $\diO$, hash function $H$, parameters $\eps, r, \tr$ (as in \Cref{as:diO}), and a hash value $\upsilon_n \in \{0, 1\}^{\gamma(n)}$.
\STATE {\bf Input:} Dataset $x \in \bit^n$.
\STATE {\bf Output:} Circuit $: \{0, 1\}^n \to \{0, 1\}$.
\STATE $\tx \gets \RR_{\eps}(x)$.
\STATE $C \gets$ circuit that on input $z$ returns $\indicator\set{z \in B_r(x) \cap B_{\tr}(\tx) \cap H^{-1}_{n}(\upsilon_n)}$
\STATE $\hC \gets \diO_{\rho}(C)$ for randomness $\rho$
\RETURN $\hC$
\end{algorithmic}
\end{algorithm}

\begin{figure*}[t]
\centering
\begin{tikzpicture}
\tikzset{
    hybrid/.style = {draw, text width=7.2cm, rounded corners=2pt, line width=0.7pt, inner sep=4pt}
}
\node[hybrid] at (-3.9,0) {
	\textbf{\boldmath Distribution $H_0$:}\\
	\begin{algorithmic}
	\STATE $\tx \gets \RR_{\eps}(x)$
	\STATE $C(z) := \indicator\set{z \in B_r(x) \cap B_{\tr}(\tx) \cap H^{-1}_{n}(\upsilon_n)}$
	\RETURN $\diO_\rho(C)$
	\end{algorithmic}
};
\node[hybrid] at (3.9,0) {
	\textbf{\boldmath Distribution $H_1$:}\\
	\begin{algorithmic}
	\STATE $\tx \gets \RR_{\eps}(x)$
	\STATE $C(z) := \indicator\set{z \in B_r(\textcolor{red}{x'}) \cap B_{\tr}(\tx) \cap H^{-1}_{n}(\upsilon_n)}$
	\RETURN $\diO_\rho(C)$
	\end{algorithmic}
};
\node[hybrid] at (0,-2.3) {
    \textbf{\boldmath Distribution $H_2$:}\\
    \begin{algorithmic}
	\STATE $\tx \gets \RR_{\eps}(\textcolor{red}{x'})$
	\STATE $C(z) := \indicator\set{z \in B_r(x') \cap B_{\tr}(\tx) \cap H^{-1}_{n}(\upsilon_n)}$
	\RETURN $\diO_\rho(C)$
    \end{algorithmic}
};
\end{tikzpicture}
\caption{Hybrids in proof of \Cref{thm:Mdio}. $H_0$ is precisely $\Mdio(x)$ and $H_2$ is precisely $\Mdio(x')$.}
\label{fig:Mdio-hybrids}
\end{figure*}

\begin{theorem}\label{thm:Mdio}
Under \cref{as:crklhf,as:diO}, $\Mdio$ (\Cref{alg:dio}) is $\eps$-$\CDP$.
\end{theorem}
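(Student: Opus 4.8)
The plan is a three‑step hybrid argument along exactly the hybrids $H_0, H_1, H_2$ of \Cref{fig:Mdio-hybrids}. Fix an arbitrary sequence $\set{(D_n, D_n')}_{n \in \N}$ of adjacent datasets and write $x = D_n$, $x' = D_n'$; recall that $H_0$ is $\Mdio(x)$ and $H_2$ is $\Mdio(x')$, while $H_1$ keeps $\tx \gets \RR_\eps(x)$ but replaces the ball $B_r(x)$ by $B_r(x')$ inside the circuit. I will show $H_0 \approx^c H_1$ and $H_1 \approx^c_\eps H_2$, and then combine these via \Cref{fact:dp-calculus} to get $H_0 \approx^c_\eps H_2$; since the pair was arbitrary and all parameters ($\eps$, $r$, $\tr$, $\upsilon_n$) are either constants or fixed, this is precisely the statement that $\Mdio$ is $\eps$-$\CDP$.

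\emph{Step 1: $H_0 \approx^c H_1$.} The key observation is that $H_0$ and $H_1$ are, verbatim, the distributions $\diO(1^n, C_0)$ and $\diO(1^n, C_1)$ where $(C_0, C_1) \gets \LDSSampler_n(\theta)$, with $\LDSSampler$ (\Cref{alg:di-sampler}) instantiated on the adjacent pair $x, x'$, hash value $\upsilon_n$, privacy parameter $\eps$, and radii $r = 0.5 n^{0.9}$, $\tr = \frac{1}{1+e^\eps} n + n^{0.6}$ — note this is allowed since a public‑coin differing‑inputs sampler may be non‑uniform, so $x$, $x'$, $\upsilon_n$ may be hard‑wired. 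By \Cref{lem:di-circuit-family}, this is a public‑coin differing‑inputs sampler, so \Cref{as:diO} provides a single negligible function bounding the advantage of every PPT distinguisher against $\diO(C_0)$ versus $\diO(C_1)$; that is, $H_0 \approx^c H_1$ (with $\eps = 0$). The only subtle point is that the $\diO$ distinguisher in \Cref{def:diO} does \emph{not} receive the seed $\theta$ (equivalently $\tx$), which is exactly why working with the obfuscated circuits alone is legitimate here.

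\emph{Step 2: $H_1 \approx_\eps H_2$ (hence $\approx^c_\eps$).} The distributions $H_1$ and $H_2$ differ only in whether $\tx$ is drawn from $\RR_\eps(x)$ or from $\RR_\eps(x')$; everything downstream — forming $C(z) = \indicator\set{z \in B_r(x') \cap B_{\tr}(\tx) \cap H^{-1}_n(\upsilon_n)}$ and outputting $\diO_\rho(C)$ — is a fixed (randomized over $\rho$) function of $\tx$. Since $x, x'$ are adjacent and $\RR_\eps$ is $\eps$-$\SDP$, we have $\RR_\eps(x) \approx_\eps \RR_\eps(x')$, and the post‑processing property of \Cref{fact:dp-calculus} gives $H_1 \approx_\eps H_2$, which in particular implies $H_1 \approx^c_\eps H_2$.

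\emph{Step 3: composition.} Applying basic composition (\Cref{fact:dp-calculus}) to $H_0 \approx^c H_1$ and $H_1 \approx^c_\eps H_2$ yields $H_0 \approx^c_\eps H_2$, i.e., $\set{\Mdio(D_n)}_{n \in \N} \approx^c_\eps \set{\Mdio(D_n')}_{n \in \N}$; the negligible error function is the one from Step~1 (the non‑uniform sampler family is a single object, so $\diO$ supplies a single negligible bound valid uniformly in $n$). As the sequence of adjacent datasets was arbitrary, $\Mdio$ is $\eps$-$\CDP$.

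I do not expect a genuine obstacle: all the work is in \Cref{lem:di-circuit-family}. The only care needed is the bookkeeping in Step~1 — checking that the hybrid distributions match the obfuscated outputs of $\LDSSampler$ verbatim (including consistent handling of $\upsilon_n$ and of the obfuscator's internal randomness $\rho$ across hybrids), and that the $\diO$ advantage aggregates into one negligible function that suffices for the ensemble‑level $\CDP$ definition.
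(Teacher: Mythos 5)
Your proof is correct and mirrors the paper's own argument exactly: the same hybrids $H_0, H_1, H_2$, the same use of \Cref{lem:di-circuit-family} together with \Cref{as:diO} to get $H_0 \approx^c H_1$, the same appeal to $\eps$-SDP of $\RR_\eps$ plus post-processing for $H_1 \approx_\eps H_2$, and basic composition to conclude.
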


\begin{proof}
For any adjacent datasets $x, x'$, we want to show that $\Mdio(x) \approx^c_{\eps} \Mdio(x')$. We show this using an intermediate hybrid, as shown in \cref{fig:Mdio-hybrids}, where changes from one hybrid to next are highlighted in \textcolor{red}{red}.

\begin{itemize}[leftmargin=*]
    \item Distribution $H_0$ is precisely $\Mdio(x)$.
    \item Distribution $H_1$ is a variant of $H_0$, where we change $x$ to $x'$ in the definition of $C$, but continue to sample $\tx \sim \RR_{\eps}(x)$.
    \item Distribution $H_2$ is a variant of $H_1$, where we sample $\tx \sim \RR_{\eps}(x')$. Note that this is exactly $\Mdio(x')$.
\end{itemize}

We show that $H_0 \approx^c_{\eps} H_2$ by showing that $H_0 \approx^c H_1$ and $H_1 \approx_{\eps, 0} H_2$ and using basic composition (\Cref{fact:dp-calculus}).
We have from \Cref{lem:di-circuit-family}, that under \Cref{as:crklhf}, the joint distribution of $\tx \sim \RR_{\eps}(x)$, and circuits $C$ in $H_0$ and $H_1$ is precisely the output of $\LDSSampler$. Thus, from \Cref{as:diO}, it follows that $H_0 \approx^c H_1$ by post-processing~(\Cref{fact:dp-calculus}).
Next, we have that $H_1 \approx_{(\eps, 0)} H_2$, since the only difference between the two is the distribution of $\tx$, and $\RR_{\eps}(x) \approx_{(\eps,0)} \RR_{\eps}(x')$ (again by post-processing).
\end{proof}

Finally, its utility also follows simply from a standard concentration inequality.

\begin{theorem} \label{thm:mdio-util}
When choosing $\tr = \frac{1}{1 + e^\eps} n + n^{0.6}$, $\Mdio$ is $(1 - o(1))$-useful for $u^{\eval}_{H^{-1}_n(\upsilon_n)}$.
\end{theorem}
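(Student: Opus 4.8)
The plan is to reduce the utility claim to a single tail bound for the Hamming weight of the randomized-response noise. First I would invoke the correctness guarantee of $\diO$ (\Cref{def:diO}): since $\hC = \diO_{\rho}(C)$ computes exactly the same Boolean function as $C$, we have $\hC(x) = C(x)$ always, so it suffices to lower bound $\Pr\big[u^{\eval}_{H^{-1}_n(\upsilon_n)}(x, C) = 1\big]$ over the randomness defining $C$, i.e., over $\tx \sim \RR_{\eps}(x)$. Recall that $u^{\eval}_{\cR}(x, C) = \indicator\set{C(x) = 1 \text{ or } x \notin \cR}$ with $\cR = H^{-1}_n(\upsilon_n)$.

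Next I would split on whether $H_n(x) = \upsilon_n$. If $x \notin H^{-1}_n(\upsilon_n)$, the utility predicate is satisfied for \emph{any} output, so the interesting case is $x \in H^{-1}_n(\upsilon_n)$. There, by construction $C(x) = \indicator\set{x \in B_r(x) \cap B_{\tr}(\tx) \cap H^{-1}_n(\upsilon_n)}$; the condition $x \in B_r(x)$ holds trivially since $\|x - x\|_1 = 0 \le r$, and $x \in H^{-1}_n(\upsilon_n)$ holds by assumption, so the only event that can cause failure is $x \notin B_{\tr}(\tx)$, i.e., $\|x - \tx\|_1 > \tr$.

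Finally I would note that $\tx = x \oplus \theta$ with $\theta \sim \RR_{\eps}(0^n)$, so each coordinate of $\theta$ is an independent $\mathrm{Bernoulli}\big(\tfrac{1}{1+e^{\eps}}\big)$, and $\|x - \tx\|_1 = \|\theta\|_1$ is a sum of $n$ such i.i.d.\ variables with mean $\tfrac{n}{1+e^{\eps}}$. Since we chose $\tr = \tfrac{1}{1+e^{\eps}} n + n^{0.6}$, a standard Chernoff/Hoeffding upper-tail bound yields $\Pr[\|\theta\|_1 > \tr] \le \exp\big(-\Omega(n^{1.2}/n)\big) = \exp(-\Omega(n^{0.2})) = o_n(1)$. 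Combining the two cases, $u^{\eval}_{H^{-1}_n(\upsilon_n)}(x, \hC) = 1$ except with probability $o_n(1)$, which proves $(1 - o(1))$-usefulness.

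The argument is essentially routine; the only point requiring a moment of care is applying the concentration inequality in the correct direction (the upper tail of a Binomial) and checking that the additive slack $n^{0.6}$ dominates the fluctuation scale $\Theta(\sqrt{n})$, which it does, so the failure probability is genuinely subconstant. No genuine obstacle is expected here—the substance of the construction lives in \Cref{lem:di-circuit-family} and \Cref{thm:Mdio}, not in this usefulness statement.
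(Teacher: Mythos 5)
Your proposal is correct and follows essentially the same route as the paper: split on whether $x \in H^{-1}_n(\upsilon_n)$, observe that failure for $x \in \cR$ reduces to the event $\|x-\tx\|_1 > \tr$ with $\|x-\tx\|_1 \sim \Bin(n, \tfrac{1}{1+e^{\eps}})$, and apply a concentration bound with slack $n^{0.6}$ to get $\exp(-\Omega(n^{0.2}))$. The only cosmetic differences are that you invoke Chernoff/Hoeffding where the paper uses Bernstein (both suffice here) and you make the $\diO$-correctness step explicit.
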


\begin{proof}
Consider any dataset $x$. If $x \notin H^{-1}_n(\upsilon_n)$, then, by definition of $u^{\LDS}_{H^{-1}_n(\upsilon_n)}$, the utility is $1$. Therefore, we may only consider the case where $x \in H^{-1}_n(\upsilon_n)$.

In this case, $\Pr\left[u^{\eval}_{H^{-1}_n(\upsilon_n)}(x, \Mdio(x)) = 1\right]$ is equal to $\Pr_{\tx \sim \RR_\eps(x)}[x \in B_{\tr}(\tx)]$. Notice that $\|x - \tx\|_1$ is distributed as $\Bin(n, \frac{1}{1+e^\eps})$. Therefore, applying Bernstein’s inequality, we have
\begin{align*}
\Pr_{\tx \sim \RR_\eps(x)}[x \notin B_{\tr}(\tx)] &~\le~ \exp\inparen{- \frac{t^2}{\frac{e^{\eps}}{(1+e^{\eps})^2} n + \frac{2}{3} t}} \\
& ~\le~ \exp(-\Omega(n^{0.2})),
\end{align*}
where $t = \tr - \frac{n}{1+e^\eps} = n^{0.6}$. Thus, we have $\Pr\left[u^{\eval}_{H^{-1}_n(\upsilon_n)}(x, \Mdio(x)) = 1\right] = 1 - o(1)$ as desired.
\end{proof}

\subsection{\texorpdfstring{$\CDP$}{CDP} Mechanism for \texorpdfstring{$\VLDS$}{VLDS}}
\label{sec:cdp-verifiable}

\subsubsection{Witness-Indistinguishable Proofs}
For any $\NP$ language $L$ with associated verifier $V_L$, let $R_L$ denote the corresponding relation $\set{(x,w) : x \in L \text{ and } V_L(x,w) = 1}$. Let $R_L(x) := \set{w : (x,w) \in R_L}$.

\begin{definition}[NIWI Proof System]
A pair $(P, V)$ of PPT algorithms is a {\em non-interactive witness indistinguishable (NIWI) proof system} for an $\NP$ relation $R_L$ if it satisfies:%
\begin{description}[leftmargin=3mm]
\item\textbf{Correctness:} for every $(x, w) \in R_L$
\[ \Pr[V(x,\pi) = 1 : \pi \gets P(x, w)] = 1. \]
\item\textbf{Soundness:} there exists a negligible function $\negl$ such that for all $x \notin L$ and $\pi \in \bit^*$:
\[ \Pr[V(x,\pi) = 1] \le \negl(|x|). \]
\item\textbf{Witness Indistinguishability:} There exists a polynomial $\zeta(\cdot)$ and a negligible function $\negl(\cdot)$, such that for any sequence $I = \set{(x, w_0, w_1) : w_0, w_1 \in R_L(x)}$ and for all circuits $C$ of size at most $\zeta(|x|)$:
\begin{align*}
\begin{vmatrix*}[l]
\Pr_{\pi_0 \gets P(x, w_0)}[C(x, \pi_0) = 1] \\
- \Pr_{\pi_1 \gets P(x, w_1)}[C(x, \pi_1) = 1]
\end{vmatrix*}&~\le \negl(|x|). 
\end{align*}
\end{description}
\end{definition}

\begin{assumption}[\cite{BarakOV07,GrothOS12,BitanskyP15}]\label{as:niwi}
There exists a NIWI proof system for any language in $\NP$.
\end{assumption}

\subsubsection{Making Utility Function Efficient Using Witness-Indistinguishable Proofs}\label{sec:NIWI}

We consider the $\NP$ language $\hL$ defined below, and use the corresponding NIWI verifier to define the utility for $\VLDS$.

\begin{definition}\label{def:niwi-language}
Language $\hL$ consists of all circuits $\hC$ with a top AND gate, namely of the form $\hC_0 \wedge \hC_1$ such that there exists some $x$, $\tx$ and $\rho$, such that at least one of $\hC_0$ or $\hC_1$ can be obtained as $\diO_\rho(C)$ where $C$ is a circuit that takes in $z$ and computes $\indicator\set{z \in B_r(x) \cap B_{\tr}(\tx) \cap H^{-1}(\upsilon)}$.

A ``witness'' for $\hC \in \hL$ is given by $w = (b, x, \tx, \rho)$, where $b \in \bit$ indicates whether the witness is provided for $\hC_0$ or for $\hC_1$. Let $(\hP, \hV)$ denote the NIWI proof system for $L$ (guaranteed to exist by \Cref{as:niwi}).
\end{definition}

We consider the verifiable low diameter set problem $\VLDS_{\tau, H^{-1}(\upsilon), \hV}$. Note that $\hC \in \hL$ automatically implies that $\hC$ encodes a $\tau$-diameter set (since $\hC = \hC_0 \wedge \hC_1$, it suffices to certify that at least one of $\hC_0$ or $\hC_1$ encodes a $\tau$-diameter set) where $\tau = 2r = n^{0.9}$.

\begin{algorithm}[t]
\caption{Sub-routine $\Mdioaux$.}
\label{alg:dioaux}
\begin{algorithmic}
\STATE {\bf Parameter: } Differing-inputs obfuscator $\diO$, hash function $H$, parameters $\eps, r, \tr$ (as in \Cref{as:diO}), and a hash value $\upsilon \in \{0, 1\}^{\gamma(n)}$.
\STATE {\bf Input:} Dataset $x \in \bit^n$.
\STATE {\bf Output:} Circuit $: \bit^n \to \bit$.
\STATE $\tx \gets \RR_{\eps}(x)$.
\STATE $C \gets$ circuit that on input $z$ returns $\indicator\set{z \in B_r(x) \cap B_{\tr}(\tx) \cap H^{-1}_{n}(\upsilon)}$
\STATE $\hC \gets \diO_{\rho}(C)$ for randomness $\rho$
\RETURN $\hC$, $\tx$, $\rho$
\end{algorithmic}
\end{algorithm}

\begin{algorithm}[t]
\caption{$\CDP$ mechanism $\Mcdp$.}
\label{alg:Mcdp}
\begin{algorithmic}
\STATE {\bf Input:} Dataset $x \in \bit^n$, radius parameters $r, \tr > 0$ and privacy parameter $\eps$.
\STATE {\bf Output:} Circuit $C$ and a proof string $\pi$.
\STATE $\hC_0, \tx_0, \rho_0 \gets \Mdioaux(x)$
\STATE $\hC_1, \tx_1, \rho_1 \gets \Mdioaux(x)$
\STATE $\hC = \hC_0 \wedge \hC_1$
\STATE $\pi \gets \hP(\hC, (0, x, \tx_0, \rho_0))$ (NIWI proof for $\hC \in \hL$ using witness $(0, x, \tx_0, \rho_0)$).
\RETURN $\hC$, $\pi$
\end{algorithmic}
\end{algorithm}

\begin{theorem} \label{thm:final-cdp}
Under \cref{as:crklhf,as:diO,as:niwi}, $\Mcdp$ (\cref{alg:Mcdp}) is $2\eps$-$\CDP$.
\end{theorem}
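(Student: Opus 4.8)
The plan is to prove, for every pair of adjacent datasets $x,x'$, that $\Mcdp(x) \approx^c_{2\eps} \Mcdp(x')$, by exhibiting a chain of five hybrid distributions
\[ H_0 ~\approx^c_\eps~ H_1 ~\approx^c~ H_2 ~\approx^c_\eps~ H_3 ~\approx^c~ H_4 \]
with $H_0 = \Mcdp(x)$ and $H_4 = \Mcdp(x')$, and then invoking basic composition (\Cref{fact:dp-calculus}). The two ``$\eps$-steps'' will be charged to the $\eps$-$\CDP$ guarantee of $\Mdio$ already established in \Cref{thm:Mdio} (together with post-processing), and the two ``$0$-steps'' to the witness indistinguishability of the NIWI system (\Cref{as:niwi}).

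The hybrids are obtained by interleaving two kinds of edits to the description of $\Mcdp(x)$: \emph{(a)} replacing one of the two calls $\Mdioaux(x)$ by $\Mdioaux(x')$, and \emph{(b)} switching which of the two valid witnesses, the ``left'' one $(0,x,\tx_0,\rho_0)$ or the ``right'' one $(1,\cdot,\tx_1,\rho_1)$, is fed to $\hP$ when proving $\hC = \hC_0 \wedge \hC_1 \in \hL$. Concretely: $H_1$ applies edit (a) to the \emph{second} call while keeping the left witness; $H_2$ then applies edit (b), switching to the right witness $(1,x',\tx_1,\rho_1)$, which is valid since in $H_1$ we have $\hC_1 = \diO_{\rho_1}(C')$ for the circuit $C'$ built from $x'$ and $\tx_1$; $H_3$ applies edit (a) to the \emph{first} call; and $H_4$ applies edit (b) again, switching back to the left witness, now $(0,x',\tx_0,\rho_0)$, which is valid since in $H_3$ we have $\hC_0 = \diO_{\rho_0}(C_0)$ for the circuit $C_0$ built from $x'$ and $\tx_0$. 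One then checks that $H_4$ is literally the description of $\Mcdp(x')$.

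To justify the transitions: for $H_0 \to H_1$, in both hybrids the witness passed to $\hP$ depends only on the output of the first call and on $x$, so the whole output $(\hC,\pi)$ is a fixed PPT post-processing of just the circuit $\hC_1$ produced by the second call (its auxiliary outputs $\tx_1,\rho_1$ being unused) together with independent coins; since $\hC_1$ is distributed as $\Mdio(x)$ in $H_0$ and as $\Mdio(x')$ in $H_1$, \Cref{thm:Mdio} and the post-processing property of $\approx^c_\eps$ give $H_0 \approx^c_\eps H_1$. The transition $H_2 \to H_3$ is symmetric, with the roles of the first and second calls swapped and the right witness being the one kept fixed. For $H_1 \to H_2$ and $H_3 \to H_4$, the two hybrids in each pair differ only in which of two \emph{valid} witnesses for the same statement $\hC$ is handed to $\hP$, with the distribution of $\hC$ (and of everything else) unchanged, so witness indistinguishability gives $H_1 \approx^c H_2$ and $H_3 \approx^c H_4$; since here the statement and witnesses are themselves random, I would first condition on all coins except those of $\hP$, reducing to a worst-case fixed triple $(\hC,w_0,w_1)$ with $w_0,w_1\in R_{\hL}(\hC)$, to which \Cref{as:niwi} applies directly. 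Chaining the four bounds via \Cref{fact:dp-calculus} yields $\Mcdp(x)\approx^c_{2\eps}\Mcdp(x')$, i.e.\ $\Mcdp$ is $2\eps$-$\CDP$.

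The one genuinely delicate point — and the place I would be most careful — is the ordering of the edits: an edit of type (a) must never be performed on a call while the NIWI proof is still using that call's witness, since otherwise the witness would no longer certify the edited circuit and the reduction would collapse. This is exactly why five hybrids are needed rather than three, and why the order must be ``swap the second call, switch to the right witness, swap the first call, switch back to the left witness''. The remaining ingredients — verifying that $(1,x',\tx_1,\rho_1)$ and $(0,x',\tx_0,\rho_0)$ genuinely lie in $R_{\hL}(\hC)$ in the relevant hybrids, and that conditioning/averaging preserves negligibility — are routine.
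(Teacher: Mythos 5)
Your proposal is correct and takes essentially the same approach as the paper: the same five hybrids $H_0,\dots,H_4$ in the same order (swap second call, switch witness, swap first call, switch witness back), with the same justification for each transition — $\eps$-steps via \Cref{thm:Mdio} plus post-processing, $0$-steps via NIWI witness indistinguishability, combined with basic composition. The only difference is that you spell out the conditioning argument for reducing the randomized-statement NIWI step to the fixed-statement \Cref{as:niwi}, which the paper leaves implicit.
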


\begin{proof}
For any adjacent datasets $x$, $x'$, we want to show that $\Mcdp(x) \approx^c_{2\eps} \Mcdp(x')$. We show this through the means of intermediate hybrids, as shown in \Cref{fig:Mcdp-hybrids}, where changes from one hybrid to next are highlighted in \textcolor{red}{red}.

\begin{figure*}
\centering
\begin{tikzpicture}
\tikzset{
    hybrid/.style = {draw, text width=4.3cm, rounded corners=2pt, line width=0.7pt, inner sep=4pt}
}
\node[hybrid] at (-5,0) {
	\textbf{\boldmath Distribution $H_0$:}\\
	\begin{algorithmic}
	\STATE $\hC_0, \tx_0, \rho_0 \gets \Mdioaux(x)$
	\STATE $\hC_1, \tx_1, \rho_1 \gets \Mdioaux(x)$
	\STATE $\hC = \hC_0 \wedge \hC_1$
	\STATE $\pi \gets \hP(\hC, (0, x, \tx_0, \rho_0))$
	\RETURN $\hC$, $\pi$
	\end{algorithmic}
};
\node[hybrid] at (0,0) {
	\textbf{\boldmath Distribution $H_1$:}\\
	\begin{algorithmic}
	\STATE $\hC_0, \tx_0, \rho_0 \gets \Mdioaux(x)$
	\STATE $\hC_1, \tx_1, \rho_1 \gets \Mdioaux(\textcolor{red}{x'})$
	\STATE $\hC = \hC_0 \wedge \hC_1$
	\STATE $\pi \gets \hP(\hC, (0, x, \tx_0, \rho_0))$
	\RETURN $\hC$, $\pi$
	\end{algorithmic}
};
\node[hybrid] at (5,0) {
    \textbf{\boldmath Distribution $H_2$:}\\
    \begin{algorithmic}
	\STATE $\hC_0, \tx_0, \rho_0 \gets \Mdioaux(x)$
	\STATE $\hC_1, \tx_1, \rho_1 \gets \Mdioaux(x')$
	\STATE $\hC = \hC_0 \wedge \hC_1$
	\STATE $\pi \gets \hP(\hC, \textcolor{red}{(1, x', \tx_1, \rho_1)})$.
	\RETURN $\hC$, $\pi$
    \end{algorithmic}
};
\node[hybrid] at (-2.5,-3.2) {
    \textbf{\boldmath Distribution $H_3$:}\\
    \begin{algorithmic}
	\STATE $\hC_0, \tx_0, \rho_0 \gets \Mdioaux(\textcolor{red}{x'})$
	\STATE $\hC_1, \tx_1, \rho_1 \gets \Mdioaux(x')$
	\STATE $\hC = \hC_0 \wedge \hC_1$
	\STATE $\pi \gets \hP(\hC, (1, x', \tx_1, \rho_1))$.
	\RETURN $\hC$, $\pi$
    \end{algorithmic}
};
\node[hybrid] at (2.5,-3.2) {
    \textbf{\boldmath Distribution $H_4$:}\\
    \begin{algorithmic}
	\STATE $\hC_0, \tx_0, \rho_0 \gets \Mdioaux(x')$
	\STATE $\hC_1, \tx_1, \rho_1 \gets \Mdioaux(x')$
	\STATE $\hC = \hC_0 \wedge \hC_1$
	\STATE $\pi \gets \hP(\hC, \textcolor{red}{(0, x', \tx_0, \rho_0)})$
	\RETURN $\hC$, $\pi$
    \end{algorithmic}
};
\end{tikzpicture}
\caption{Hybrids in proof of \Cref{thm:final-cdp}. $H_0$ is precisely $\Mcdp(x)$ and $H_4$ is precisely $\Mcdp(x')$.}
\label{fig:Mcdp-hybrids}
\end{figure*}

\begin{itemize}[nosep,leftmargin=*]
\item Distribution $H_0$ is precisely $\Mcdp(x)$.

\item Distribution $H_1$ is a variant of $H_0$, where $\hC_1$ is generated through $x'$ instead of $x$.

\item Distribution $H_2$ is a variant of $H_1$, where we switch $\pi$ from corresponding to witness $(0, x, \tx_0, \rho_0)$ to the witness $(1, x', \tx_1, \rho_1)$.

\item Distribution $H_3$ is a variant of $H_2$, where $\hC_0$ is also generated through $x'$ instead of $x$.

\item Distribution $H_4$ is a variant of $H_3$, where we switch $\pi$ from corresponding to witness $(1,x', \tx_1, \rho_1)$ to the witness $(0, x', \tx_0, \rho_0)$. Note that this is exactly $\Mcdp(x')$.
\end{itemize}

\noindent From \cref{as:niwi} and post-processing (\Cref{fact:dp-calculus}), we have that $H_1 \approx^c H_2$, and similarly $H_3 \approx^c H_4$.

Next, we show that $H_0 \approx^c_{\eps} H_1$. Note that the output of $H_0$ and $H_1$ do not depend on $\tx_1$ and $\rho_1$. Thus the only material change between $H_0$ and $H_1$ is that $\hC_1 \sim \Mdio(x)$ in $H_0$ versus $\hC_1 \sim \Mdio(x')$ in $H_1$. From \cref{thm:Mdio}, we have that $\Mdio(x) \approx^c_\eps \Mdio(x')$. Thus, it follows that $H_0 \approx^c_{\eps} H_1$ by post-processing~(\Cref{fact:dp-calculus}).
Similarly, it follows that $H_2 \approx^c_\eps H_3$ (here we use that $\tx_0$ and $\rho_0$ are immaterial to the final output of $H_2$ and $H_3$).

Combining these using basic composition (\Cref{fact:dp-calculus}), we get that $H_0 \approx^c_{2\eps} H_4$, thus implying that $\Mcdp$ is $2\eps$-$\CDP$.
\end{proof}

\begin{corollary} \label{cor:mcdp-useful}
$\Mcdp$ is $(1 - o(1))$-useful for $u^{\VLDS}_{\tau, H^{-1}(\upsilon), \hV}$.
\end{corollary}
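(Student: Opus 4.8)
The plan is to observe that the $\VLDS$ utility factors into the NIWI-verification indicator and the ``evaluation'' utility $u^{\eval}$, and to control these two pieces separately. Concretely, by \Cref{def:vlds}, for $(\hC,\pi)\gets\Mcdp(x)$ the quantity $u^{\VLDS}_{\tau, H^{-1}(\upsilon), \hV}(x,(\hC,\pi))$ equals $\indicator\set{\hV(\hC,\pi)=1}\cdot u^{\eval}_{H^{-1}(\upsilon)}(x,\hC)$, so it suffices to show that $\hV(\hC,\pi)=1$ holds always and that $u^{\eval}_{H^{-1}(\upsilon)}(x,\hC)=1$ holds with probability $1-o_n(1)$, for every fixed $x\in\bit^n$ (usefulness is a worst-case-over-$x$ statement).

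For the verification indicator, I would invoke correctness of the NIWI proof system (\Cref{as:niwi}): in \Cref{alg:Mcdp} the circuit $\hC_0$ is produced by $\Mdioaux(x)$ (\Cref{alg:dioaux}) as $\diO_{\rho_0}(C)$ for exactly the circuit $C\colon z\mapsto\indicator\set{z\in B_r(x)\cap B_{\tr}(\tx_0)\cap H^{-1}(\upsilon)}$ appearing in \Cref{def:niwi-language}, so $(0,x,\tx_0,\rho_0)$ is a genuine witness for $\hC=\hC_0\wedge\hC_1\in\hL$; since $\pi\gets\hP(\hC,(0,x,\tx_0,\rho_0))$, NIWI correctness forces $\hV(\hC,\pi)=1$ with probability $1$. (These same facts, together with $\diO$-correctness and NIWI soundness, are what let $\hV$ serve as a $\tau$-diameter verifier for $\tau=2r=n^{0.9}$, but that is only relevant on the lower-bound side.) For the evaluation term, first note the elementary identity $u^{\eval}_\cR(x,\hC_0\wedge\hC_1)=u^{\eval}_\cR(x,\hC_0)\wedge u^{\eval}_\cR(x,\hC_1)$, which is checked by casing on whether $x\in\cR$ or $x\notin\cR$. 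Next, by correctness of $\diO$ (\Cref{def:diO}) each $\hC_b$ computes the same Boolean function as its underlying circuit, so the circuit returned by $\Mdioaux(x)$ is distributed exactly as the output of $\Mdio(x)$; hence each $u^{\eval}_{H^{-1}(\upsilon)}(x,\hC_b)$ has the same law as $u^{\eval}_{H^{-1}(\upsilon)}(x,\Mdio(x))$, which equals $1$ with probability $1-o_n(1)$ by \Cref{thm:mdio-util}. A union bound over $b\in\bit$ gives $u^{\eval}_{H^{-1}(\upsilon)}(x,\hC)=1$ with probability $1-o_n(1)$, and combining with the first part yields $\Pr[u^{\VLDS}_{\tau,H^{-1}(\upsilon),\hV}(x,\Mcdp(x))=1]\ge 1-o_n(1)$.

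I do not anticipate a genuine obstacle here: everything reduces to $\diO$-correctness, NIWI-correctness, and the single-coordinate concentration estimate already established in \Cref{thm:mdio-util}. The only mild wrinkle is that $\Mcdp$ outputs a conjunction of two independently obfuscated copies of the $\Mdio$ circuit (a structure needed so that \Cref{thm:final-cdp} has two interchangeable NIWI witnesses), which merely doubles the failure probability via the union bound and hence remains $o_n(1)$.
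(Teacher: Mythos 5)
Your proof is correct and follows essentially the same route as the paper's: split the $\VLDS$ utility into the NIWI-verification indicator (always $1$ by NIWI correctness) and the evaluation utility, then bound the latter by a union bound over the two independently drawn $\Mdio$ circuits using \Cref{thm:mdio-util}. The paper's version is more terse but reduces to the same three ingredients you identify.
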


\begin{proof}
The utility for $x \notin H^{-1}(\upsilon)$ is trivially $1$. Consider $x \in H^{-1}(\upsilon)$. Suppose the mechanism $\Mdio$ is $(1 - \eta)$-useful for $u^{\eval}_{H^{-1}(\upsilon)}$. Since we sample $\hC_0$ and $\hC_1$ from $\Mdio$ independently we have that $\hC(x) = 1$ with probability at least $1-2\eta$. Finally, note that the proof $\pi$ in the output of $\Mcdp$ is always accepted by $\hV$. From \cref{thm:mdio-util}, we have that $\eta = o(1)$, and hence $\Mcdp$ is $1-2\eta = 1-o(1)$ useful for $u^{\VLDS}_{\tau, H^{-1}(\upsilon), \hV}$.
\end{proof}

We end this section by proving \Cref{thm:cdp-main}. The proof is essentially a straightforward combination of the previous two results. The only choice left to make is to select the hash value $\upsilon$; we select it so that the size of the preimage $H^{-1}(\upsilon)$ is maximized. This ensures that the set $\cR = H^{-1}(\upsilon)$ has enough density as required in \Cref{thm:cdp-main}. (Note: the density requirement in \Cref{thm:cdp-main} is not important for showing the existence of a $\CDP$ mechanism, but instead is later used to show the non-existence of $\SDP$ mechanisms.) 

\begin{proof}[Proof of \Cref{thm:cdp-main}]
Let $H, \tau, \hV$ be as defined above. Furthermore, let $\upsilon$ be such that $H^{-1}(\upsilon)$ is maximized and $\eps = \eps_{\CDP} / 2$. The fact that there exists an $\eps_{\CDP}$-$\CDP$ mechanism that is $(1 - o(1))$-useful for $u^{\VLDS}_{\tau, \cR, \hV}$ follows immediately from \Cref{thm:final-cdp} and \Cref{cor:mcdp-useful}. Furthermore, by our choice of $\upsilon$, notice that $|\cR| = |H^{-1}(\upsilon)| \geq 2^{n} / 2^{\gamma(n)} \geq 2^{n} / n^{o(\log n)}$, where the latter comes from our assumption on $\gamma$ in \Cref{as:crklhf}. 
\end{proof}%
\section{\texorpdfstring{$\SDP$}{SDP} Lower Bounds for the Nearby Point Problem}
\label{sec:SDP}

In this section, we will show that there is no $O(1)$-$\SDP$ algorithm for the nearby point problem with target threshold $n^{0.9}$ as long as the set $\cR_n$ is fairly dense, as formalized below.

\begin{theorem} \label{thm:no-dp-main}
For $\tau = \seq{\tau_n}_{n \in \N}$ and $\cR = \seq{\cR_n \subseteq \bit^n}_{n \in \N}$ such that $\tau_n \leq n^{0.9}$ and $|\cR_n| \geq 2^n / n^{o(\log n)}$ and for any constant $\beps, \balpha > 0$ and $\bdelta = 1/n^{27}$, no $(\beps, \bdelta)$-$\SDP$ mechanism is $\balpha$-useful for $u^{\NBP}_{\tau, \cR}$.
\end{theorem}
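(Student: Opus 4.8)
The plan is to argue the contrapositive: let $M=\{M_n\}$ be any $(\beps,\bdelta)$‑$\SDP$ mechanism with $\eps_n=\eps$ and $\delta_n=n^{-27}$ as in the statement, and show that for all large enough $n$ there is an input $x\in\cR_n$ on which $M_n$ fails utility, i.e.\ $\Pr_{M_n}[\|M_n(x)-x\|_1\le\tau_n]<\alpha$. By averaging it suffices to prove
\[
\Pr_{x\sim\mathrm{Unif}(\cR_n),\,M_n}\!\big[\|M_n(x)-x\|_1\le\tau_n\big]~<~\alpha.
\]
Here is where the ``packing‑style argument on each block'' enters. Partition $[n]$ into $t:=n/\ell$ blocks $B_1,\dots,B_t$ of size $\ell$ (taking $\ell\mid n$; $\ell=\Theta(\log n)$ is fixed at the end). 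If $\|M_n(x)-x\|_1\le\tau_n$, then at most $\tau_n$ blocks $B_j$ can carry a mismatch, so at least $t-\tau_n$ blocks are reconstructed exactly. Hence, writing $Z:=\#\{j:M_n(x)_{B_j}=x_{B_j}\}$, Markov's inequality and linearity give
\[
\Pr\big[\|M_n(x)-x\|_1\le\tau_n\big]~\le~\Pr[Z\ge t-\tau_n]~\le~\frac{1}{t-\tau_n}\sum_{j=1}^{t}\Pr\big[M_n(x)_{B_j}=x_{B_j}\big].
\]

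\noindent The technical core is a per‑block ``spreading'' estimate. Fix a block $B_j$, condition on $s:=x_{-B_j}$; then $x_{B_j}$ is uniform on the slice $\cR_s:=\{a\in\bit^{\ell}:(a,s)\in\cR_n\}$, and the randomized map $\Phi_s:a\mapsto M_n((a,s))_{B_j}$ is an $(\eps,\delta)$‑DP channel on $\bit^{\ell}$ (flipping one input bit flips one dataset bit). The claim is that every $(\eps,\delta)$‑DP channel $\Phi$ on $\bit^\ell$ satisfies
\[
\sum_{a\in\bit^{\ell}}\Pr[\Phi(a)=a]~\le~\Big(\tfrac{2e^{\eps}}{1+e^{\eps}}\Big)^{\ell}\Big(1+\tfrac{2^{\ell}\delta}{e^{\eps}-1}\Big).
\]
To see it, chain single‑step $(\eps,\delta)$‑DP along a shortest path between $a$ and $a'$; since the $\delta$‑errors accrue with ratio $e^{-\eps}<1$ this yields $\Pr[\Phi(a')=a]\ge e^{-\eps\|a-a'\|_1}\Pr[\Phi(a)=a]-\tfrac{\delta}{e^{\eps}-1}$ for all $a,a'$. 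Summing over $a$ (using that $\sum_a\Pr[\Phi(a')=a]=1$) and then averaging over $a'\sim\mathrm{Unif}(\bit^{\ell})$ (using $\E_{a'}[e^{-\eps\|a-a'\|_1}]=(\tfrac{1+e^{-\eps}}{2})^{\ell}$) and rearranging gives the claim. Since $\sum_{a\in\cR_s}\Pr[\Phi_s(a)=a]\le\sum_{a\in\bit^\ell}\Pr[\Phi_s(a)=a]$, summing over the at most $2^{n-\ell}$ nonempty slices and dividing by $|\cR_n|$ gives
\[
\Pr_{x\sim\mathrm{Unif}(\cR_n),\,M_n}\!\big[M_n(x)_{B_j}=x_{B_j}\big]~\le~\frac{2^{n}}{|\cR_n|}\Big(\tfrac{e^{\eps}}{1+e^{\eps}}\Big)^{\ell}\Big(1+\tfrac{2^{\ell}\delta}{e^{\eps}-1}\Big).
\]

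\noindent Combining the three displays, and noting that $\tau_n\le n^{0.9}$ together with $\ell=\Theta(\log n)$ forces $\tau_n/t=\tau_n\ell/n=o(1)$ so that $t/(t-\tau_n)=1+o(1)$, we get
\[
\Pr_{x\sim\mathrm{Unif}(\cR_n),\,M_n}\!\big[\|M_n(x)-x\|_1\le\tau_n\big]~\le~(1+o(1))\cdot\frac{2^{n}}{|\cR_n|}\Big(\tfrac{e^{\eps}}{1+e^{\eps}}\Big)^{\ell}\Big(1+\tfrac{2^{\ell}\delta}{e^{\eps}-1}\Big),
\]
and it remains to pick $\ell$ so that the right side is $<\alpha$. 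I expect this final balancing of parameters to be the main obstacle: one wants $\ell$ large enough that $(e^{\eps}/(1+e^{\eps}))^{\ell}$ overcomes the density deficit $2^{n}/|\cR_n|\le n^{o(\log n)}$, but small enough that $2^{\ell}\delta=2^{\ell}/n^{27}$ stays $o(1)$; naively these two requirements only coexist when $\log_2(2^n/|\cR_n|)=O(\log n)$, which is stronger than the stated hypothesis, so to reach density as low as $2^{n}/n^{o(\log n)}$ one presumably has to iterate the block argument across $\Theta(\log n/\log\log n)$ scales (or sharpen either the count of nonempty slices or the $\delta$‑dependence of the spreading bound). The precise constants in the statement ($\tau_n\le n^{0.9}$, $\bdelta=1/n^{27}$, $|\cR_n|\ge 2^{n}/n^{o(\log n)}$) are exactly what is needed to make this window nonempty; everything else above is routine.
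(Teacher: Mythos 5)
Your high-level skeleton — partition $[n]$ into blocks, prove a per-block DP impossibility result, and aggregate across blocks via a Markov-type argument — does match the paper's structure (\Cref{thm:large-dist-err}). However, you correctly identified that the parameters do not balance, and the reason is a genuine gap: two ingredients in the paper's proof are missing from your argument, and neither is recovered by the ``iterate across scales'' idea you floated.

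First, your per-block ``spreading'' lemma chains single-step DP along an $\ell$-step path and then averages over a uniform $a'$, which forces you to sum the $\tfrac{\delta}{e^{\eps}-1}$ tail over all $2^{\ell}$ points, producing the $2^{\ell}\delta$ additive term. The paper's per-block bound (\Cref{thm:each-block}) replaces the chain by a \emph{maximum matching} in the distance-$(2d+1)$ hypercube graph restricted to $\cR$ (via \Cref{fac:matching,fac:hcpacking}). Each matched pair contributes $e^{-\eps'}(1-\delta')$ with $\eps'=(2d+1)\eps$ and $\delta'=\tfrac{e^{\eps'}-1}{e^{\eps}-1}\delta$, so the $\delta$-blowup is only $e^{O(d\eps)}=n^{O(1)}$ for $d=\Theta(\log n)$, rather than $2^{\ell}$. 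Crucially, the matching covers all but a $\tfrac{2^{n}}{|\cR|\binom{n}{\le d}}$ fraction of $\cR$, and $\binom{n'}{\le d}=2^{\Theta(\log^{2}n)}$ already for $d=\Theta(\log n)$ (with block size $n'=n^{\Theta(1)}$), so the density deficit $n^{o(\log n)}=2^{o(\log^{2}n)}$ is absorbed without driving up the distance $d$ — and hence the $\delta$-blowup — to $\omega(\log n)$. In short: you should match within Hamming distance $O(\log n)$ rather than chain across the whole $\ell$-bit block.

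Second, and independently, even the matching-based per-block bound only gives a failure-probability lower bound of order $e^{-\eps'}=n^{-\Theta(1)}$, not the constant $1-\alpha$ your argument needs. The paper never proves a constant lower bound directly: instead (\Cref{thm:boosting-usefulness}) it first uses the DP hyperparameter-tuning algorithm of \cite{LT19} to convert any hypothetical $0.01$-useful $(\eps,\delta)$-$\SDP$ mechanism into a $(1-1/n^{26})$-useful $(4\eps+1,O(n^{26}\log n\cdot\delta))$-$\SDP$ mechanism for a slightly larger radius, and \emph{then} contradicts that amplified mechanism with the inverse-polynomial lower bound from \Cref{thm:large-dist-err}. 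This amplification step has no analogue in your proposal, and without it even a corrected per-block bound would not close the argument. So the two missing ideas are (i) the matching argument in place of the chain, and (ii) usefulness amplification via DP hyperparameter tuning; neither is ``sharpening the $\delta$-dependence'' or ``iterating across scales'' in the sense you sketched.
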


To prove \Cref{thm:no-dp-main}, let us first recall the standard ``blatant non-privacy implies non-DP'' proof\footnote{Here we follow the proofs in \cite{Suresh19,Manurangsi22}.}, which corresponds to the case $\cR_n = \bit^n$. At a high-level, these proofs proceed by showing that the error in each coordinate is large by ``matching'' each $x \in \bit^n$ with another point $x'$ which is the same as $x$ except with the $i$-th bit flipped; a basic calculation then shows that (on average) the $i$-th bit is predicted incorrectly with large probability. Summing this up over all the coordinates yield the desired bound.

As we are in the case where $\cR_n \ne \bit^n$, we cannot use the proof above directly. Nonetheless, we can still adapt the above proof. More specifically, instead of looking at each coordinate at a time, we look at a block of coordinates. For each block, we try to find a matching in the same spirit as above, but we now allow the $x, x'$ to have a larger distance; simple calculations give us a lower bound on being incorrect in this block (\Cref{sec:sdp-each-block}). We then ``sum up'' across all blocks to get a large distance (\Cref{sec:boost-dist}). Even though we get a large distance $\tau$ via this approach, the error probability (i.e. one minus usefulness) is small (i.e. $o(1)$). Fortunately, we can overcome this using the so-called DP hyperparameter tuning algorithm~\cite{LT19,PS21} (\Cref{sec:boost}). This concludes our proof overview.

\subsection{Additional Preliminaries: Tools from Differential Privacy}

We will require several additional tools from DP literature, which we list below for completeness.

\myparagraph{Laplace Mechanism.} The \emph{Laplace distribution} with scale parameter $b > 0$, denoted by $\Lap(b)$, is the probability distribution over $\R$ with probability mass function $z \mapsto \frac{1}{2b} \exp(-|z|/b)$.

Given a function $f: \cX^* \to \R$, its \emph{sensitivity} is defined as $\Delta(f) := \max_{D, D'} |f(D) - f(D')|$, where the maximum is over all pair $D, D'$ of adjacent datasets.

The Laplace mechanism~\cite{DworkMNS06} is an $\eps$-$\SDP$ mechanism that simply outputs $f(X) + \Lap(\Delta(f) / \eps)$.

\myparagraph{Group Privacy.}
The following fact is well-known and is often referred to as \emph{group  privacy}.

\begin{fact}[Group Privacy (e.g., \cite{Vadhan17})]\label{fact:group-dp}
Let $M: \cX^* \to \cY$ be an $(\eps, \delta)$-$\SDP$ mechanism and let $D, D' \in \N^{\cX}$ be such that $\|D - D'\| \leq t$, then, 
we have $M(D) \approx_{\eps', \delta'} M(D')$
where $\eps' = t\eps$ and $\delta' = \frac{e^{\eps'} - 1}{e^\eps - 1} \cdot \delta$.
\end{fact}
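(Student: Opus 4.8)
The plan is to prove this by the standard ``chaining plus geometric-series'' argument. First I would reduce to the case of a path of adjacent datasets. Since $D, D' \in \N^{\cX}$, the quantity $k := \|D - D'\| = \sum_{x \in \cX} |D_x - D'_x|$ is a nonnegative integer with $k \le t$. I would exhibit an explicit chain $D = D_0, D_1, \dots, D_k = D'$ of datasets in $\N^{\cX}$ in which consecutive datasets are adjacent: go from $D$ to $D'$ by first removing one element at a time from every coordinate $x$ with $D_x > D'_x$ (each such intermediate histogram remains nonnegative because we only decrease coordinates that have a surplus), and then adding one element at a time to every coordinate $x$ with $D_x < D'_x$. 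The total number of single-element edits is exactly $\sum_x |D_x - D'_x| = k$, so the chain has length $k$, and each step changes the $\ell_1$ norm by exactly $1$, i.e., $D_{i-1}$ and $D_i$ are adjacent for every $i \in [k]$.

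Next I would telescope the $(\eps,\delta)$-indistinguishability guarantee along this chain. Fix any event $\cE$. Applying $(\eps,\delta)$-$\SDP$ to the adjacent pair $(D_{i-1}, D_i)$ gives $\Pr[M(D_{i-1}) \in \cE] \le e^\eps \Pr[M(D_i) \in \cE] + \delta$. Substituting this bound into itself $k$ times yields
\begin{align*}
\Pr[M(D_0) \in \cE] ~\le~ e^{k\eps}\,\Pr[M(D_k) \in \cE] + \delta\inparen{1 + e^\eps + \cdots + e^{(k-1)\eps}} ~=~ e^{k\eps}\,\Pr[M(D_k) \in \cE] + \delta \cdot \frac{e^{k\eps} - 1}{e^\eps - 1},
\end{align*}
where the division is legitimate since $\eps > 0$ by definition of $\SDP$ (\Cref{def:sdp}). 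Running the same computation along the reversed chain $D_k, D_{k-1}, \dots, D_0$ gives the symmetric inequality with the roles of $M(D)$ and $M(D')$ swapped. Hence $M(D) \approx_{k\eps,\, \delta (e^{k\eps}-1)/(e^\eps-1)} M(D')$.

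Finally I would pass from $k$ to $t$ by monotonicity of indistinguishability: since $x \mapsto e^x$ is increasing, $k \le t$ implies both $k\eps \le t\eps$ and $\frac{e^{k\eps}-1}{e^\eps-1} \le \frac{e^{t\eps}-1}{e^\eps-1}$, and clearly $P \approx_{\eps_1,\delta_1} Q$ with $\eps_1 \le \eps_2$, $\delta_1 \le \delta_2$ implies $P \approx_{\eps_2,\delta_2} Q$. This gives $M(D) \approx_{\eps',\delta'} M(D')$ with $\eps' = t\eps$ and $\delta' = \frac{e^{t\eps}-1}{e^\eps - 1}\delta$, as claimed. There is no genuine obstacle here; the only points requiring mild care are checking that the constructed chain stays inside $\N^{\cX}$ and has length at most $t$, and bookkeeping the geometric series so that the $\delta$ blow-up is the tighter $\frac{e^{t\eps}-1}{e^\eps-1}$ rather than the looser factor $t$ that a naive $t$-fold application of basic composition (\Cref{fact:dp-calculus}) would produce.
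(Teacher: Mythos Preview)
Your argument is correct and is exactly the standard chaining-plus-geometric-series proof of group privacy; the paper itself does not give a proof of this fact, instead citing it as well known (e.g., from \cite{Vadhan17}). There is nothing to compare against, and no gap in what you wrote.
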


\myparagraph{DP Hyperparameter Tuning.}
We will also use the following result of Liu and Talwar~\cite{LT19} on DP hyperparameter tuning. We remark that some improvements in the constants has been made in \cite{PS21}, by using a different distribution of the number of repetitions. Nonetheless, since we are only interested in an asymptotic bound, we choose to work with the slightly simpler hyperparameter tuning algorithm from \cite{LT19}.

The hyperparameter tuning algorithm from \cite{LT19} allows us to take any DP ``base'' mechanism $\Mbase$, which outputs a candidate $y$ and a score $q \in \R$, run it multiple times and output a candidate with score that is below a certain threshold.\footnote{While DP Hyperparameter tuning is typically stated for choosing based on score {\em above} a threshold, the formulations are equivalent.} The precise description is in \Cref{alg:htune}.

\begin{algorithm}[ht]
\caption{DP Hyperparameter Tuning $\Mtuning$.\label{alg:htune}}
\begin{algorithmic}
\STATE {\bf Parameters:} Mechanism $\Mbase$, Threshold $s$, Number of Steps $T$, Stopping Probability $\gamma$.
\STATE {\bf Input:} Dataset $D$
\FOR{$j = 1, \dots, T$}
\STATE Let $(y, q) \gets \Mbase(D)$.
\IF{$q \leq s$}
\RETURN $y$ (and halt)
\ENDIF
\STATE With probability $\gamma$:
\STATE $\ \ $ {\bf return} $\perp$ (and halt)
\ENDFOR
\end{algorithmic}
\end{algorithm}

We will use the following DP guarantee of $\Mtuning$, which was shown in~\cite{LT19}\footnote{Note that this is a simplified version of \cite[Theorem 3.1]{LT19} where we simply set $\eps_0 = 1$.}.

\begin{theorem}[DP Hyperparameter Tuning~{\cite{LT19}}] \label{thm:dp-tuning}
For all $\eps > 0$, $\delta, \gamma \in [0, 1]$ and $T \ge 2/\gamma$,
if $\Mbase$ is $(\eps, \delta)$-$\SDP$,
then $\Mtuning$ (\Cref{alg:htune}) is $(2\eps + 1, 10e^{2\eps} \cdot \delta / \gamma)$.
\end{theorem}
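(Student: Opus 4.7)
The plan is to analyze $\Mtuning$ by writing its output distribution in closed form and then bounding the pointwise ratio of output probabilities via a three-way decomposition. I would first reformulate $\Mtuning$ equivalently as follows: pre-draw i.i.d.\ stopping coins $b_1, \dots, b_T \sim \mathrm{Ber}(\gamma)$ independently of the data; at each step $j$ draw $(y_j, q_j) \gets \Mbase(D)$; halt at the smallest $j$ such that either $q_j \le s$ (return $y_j$) or $b_j = 1$ (return $\perp$); and return $\perp$ if no such $j \le T$ exists. This cleanly separates the data-independent randomness from the $\Mbase$ calls.

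Define $\mu_D(y) := \Pr[\Mbase(D) \text{ outputs } (y, q) \text{ with } q \le s]$, $p(D) := \sum_y \mu_D(y)$, and $r_D := (1-p(D))(1-\gamma)$. A geometric-series computation yields
\[ \Pr[\Mtuning(D) = y] \;=\; \mu_D(y) \cdot \frac{1-r_D^T}{1-r_D} \]
for each $y$ in the range of $\Mbase$, with $\Pr[\Mtuning(D) = \perp]$ determined by mass conservation. The identity $1 - r_D = \gamma + (1-\gamma)p(D) \ge \gamma$, combined with $T \ge 2/\gamma$, gives $r_D^T \le (1-\gamma)^T \le e^{-\gamma T} \le e^{-2}$, so the truncation factor $1 - r_D^T$ is confined to the interval $[1 - e^{-2},\, 1]$.

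For any event $E$ not containing $\perp$, I would decompose the ratio $\Pr[\Mtuning(D) \in E] / \Pr[\Mtuning(D') \in E]$ into three factors: $\mu_D(E)/\mu_{D'}(E)$, $(1-r_{D'})/(1-r_D)$, and $(1-r_D^T)/(1-r_{D'}^T)$. The first is bounded by $e^\eps$ up to additive slack $\delta$ using $(\eps,\delta)$-SDP of $\Mbase$ on the event $\{y' \in E,\, q \le s\}$. The second is bounded by $e^\eps \cdot (1 + O(\delta/\gamma))$ by applying $(\eps,\delta)$-SDP on the event $\{q \le s\}$ to get $1 - r_D \ge e^{-\eps}((1-r_{D'}) - \delta)$, then exploiting $1 - r_{D'} \ge \gamma$. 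The third is bounded by the constant $1/(1-e^{-2}) < 2$ via the range bound above. Combining yields a multiplicative loss of $e^{2\eps} \cdot O(1)$ and additive slack $O(e^{2\eps}\,\delta/\gamma)$, which re-parameterized gives $(2\eps + 1,\, 10\,e^{2\eps}\,\delta/\gamma)$-$\SDP$; the ``$+1$'' absorbs the constant factor $1/(1-e^{-2})$ into $\eps$ to keep the $\delta$-prefactor down to $10$. Events containing $\perp$ require a separate but analogous computation using $\Pr[\Mtuning(D) = \perp] = [\gamma(1-p(D)) + p(D)\,r_D^T] / (1 - r_D)$.

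The main obstacle is the $\delta$-accounting and the separate treatment of $\{\perp\}$, which does not share the clean multiplicative product structure of events in the range. Each $(\eps,\delta)$-SDP application introduces additive slack that must be re-expressed on the scale of the final output probability; since the relevant denominators $1-r_D$ are bounded below by $\gamma$, this amplifies $\delta$ by at most $1/\gamma$. The $\{\perp\}$ outcome aggregates contributions from every iteration (including the fall-through past step $T$, contributing $r_D^T \le e^{-2}$), and it must be bounded via $(\eps,\delta)$-SDP applied to the complementary event $\{q > s\}$ together with a careful union bound over failure events, which is where the precise constants in the claimed $(2\eps+1,\, 10\,e^{2\eps}\,\delta/\gamma)$ bound ultimately arise.
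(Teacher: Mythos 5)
The paper does not actually prove \Cref{thm:dp-tuning}: it is imported verbatim from Liu--Talwar~\cite{LT19} (the footnote notes it is their Theorem~3.1 with $\eps_0=1$), so there is no internal proof to compare against. Judged on its own terms, your proposal is a sound, self-contained route that is close in spirit to the analysis in \cite{LT19}: your closed forms are correct (indeed $\Pr[\Mtuning(D)=y]=\mu_D(y)\frac{1-r_D^T}{1-r_D}$ and $\Pr[\Mtuning(D)=\perp]=\frac{\gamma(1-p(D))+p(D)r_D^T}{1-r_D}$ both check out), the bounds $1-r_D\ge\gamma$ and $r_D^T\le e^{-2}$ are exactly where $T\ge 2/\gamma$ enters for range events, and the three-factor decomposition with the $\delta$-accounting (additive slack amplified by at most $1/\gamma$, the $\frac{1}{1-e^{-2}}$ and $\frac{1}{1-\delta/\gamma}$ factors absorbed into the ``$+1$'', after reducing to the non-trivial regime $10e^{2\eps}\delta/\gamma<1$) does deliver $(2\eps+1,\,10e^{2\eps}\delta/\gamma)$.

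The one place where your sketch is not yet a proof is the $\perp$ atom, and the ingredient you would need there is not the one you cite. Knowing only $r_D^T\le e^{-2}$ does not suffice: $e^{-2}$ is a constant, far too large to be dumped into $\delta'$, and the terms $r_D^T$ and $r_{D'}^T$ cannot be compared multiplicatively, since $r_D\le e^{\eps}r_{D'}$ only yields $r_D^T\le e^{\eps T}r_{D'}^T$, which is useless for large $T$. The rescue is the extra factor of $p$ that your own closed form produces: $p\,r_D^T=p(1-p)^T(1-\gamma)^T\le\bigl[p(1-p)^{T-1}\bigr](1-p)e^{-2}\le\frac{1}{eT}(1-p)e^{-2}\le\frac{\gamma(1-p)}{2e^3}$, using $T\ge 2/\gamma$ a second time. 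Hence the numerator $\gamma(1-p(D))+p(D)r_D^T$ lies within a factor $1+\frac{1}{2e^3}$ of $\gamma(1-p(D))$, after which the $\perp$ case reduces to the same two-factor comparison (DP applied to $\{q>s\}$ for the numerator and to $\{q\le s\}$ for the denominator $1-r_D$), with the small constant again absorbed by the ``$+1$''. With that inequality supplied, your argument goes through; without it, the $\perp$ case as described (``union bound over failure events'') would not close.
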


\subsection{Weak Hardness}
\label{sec:sdp-each-block}

We start with a relatively weak hardness for the case of $\tau = 0$, i.e., the answer is considered correct iff it is the same as the input. To prove this, we recall a couple of facts.

The first is a simple relation between independent set and maximum matching. Let $\inds(G)$ denote the size of the maximum independent set of $G$.

\begin{fact}
\label{fac:matching}
For any graph $G = (V, E)$, there exists matching of size at least $(|V| - \inds(G)) / 2$.
\end{fact}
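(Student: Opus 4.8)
The plan is to exhibit the desired matching directly, by taking a matching of maximum cardinality (indeed, any inclusion-maximal matching would do) and invoking the elementary fact that the set of vertices it leaves uncovered is independent. So first I would let $M \subseteq E$ be a matching of maximum size and let $U \subseteq V$ be the set of vertices not incident to any edge of $M$; by construction the edges of $M$ cover exactly $2|M|$ vertices, so $|U| = |V| - 2|M|$.

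The key step is the claim that $U$ is an independent set. Suppose not: then there is an edge $e = \set{u,v} \in E$ with $u, v \in U$. Since neither $u$ nor $v$ is incident to any edge of $M$, the set $M \cup \set{e}$ is again a matching, and it is strictly larger than $M$, contradicting the maximality of $M$. (Note the same argument works verbatim for any inclusion-maximal matching, so maximality of cardinality is not actually needed; I mention this only because it makes clear that the statement is about existence, not about the maximum matching specifically.) Given the claim, $|U| \le \inds(G)$ by definition of $\inds(G)$, and substituting $|U| = |V| - 2|M|$ and rearranging gives $|M| \ge (|V| - \inds(G))/2$, which is what we want.

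I do not expect a genuine obstacle here; this is a one-line argument once the uncovered-vertices-are-independent observation is in place. The only point requiring a bit of care is bookkeeping: the conclusion bounds the number of \emph{edges} in the matching, whereas the independent-set inequality is a statement about \emph{vertices}, so one must correctly carry the factor of $2$ relating $|M|$ to the number of matched vertices. No other subtleties arise.
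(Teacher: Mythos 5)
Your proof is correct and is the standard argument for this fact; the paper itself states it as a Fact without supplying a proof, presumably because it is folklore. Your observation that any inclusion-maximal matching (not just a maximum one) works is accurate and slightly sharpens the presentation, and the bookkeeping relating $|U| = |V| - 2|M|$ to the independent-set bound is handled correctly.
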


Let $\bbH^d$ denote the distance-$d$ graph on the hypercube, i.e., $\bbH^d = (\bit^n, E)$ where $(\bx, \bx') \in E$ iff $\|\bx - \bx'\|_1 \leq d$. 
Let $\binom{n}{\leq d} = \sum_{i = 0}^d \binom{n}{i}$.  The following standard lower bound follows from a ``packing argument''.

\begin{fact}
\label{fac:hcpacking}
For any $d \in \N$, $\inds(\bbH^{2d + 1}) \leq 2^n / \binom{n}{\leq d}$.
\end{fact}

We are now ready to prove a lower bound for the nearby problem.

\begin{theorem} \label{thm:each-block}
For any $\cR \subseteq \bit^n, d, \eps, \delta$, let $\eps' = (2d+1)\eps$ and $\delta' = \frac{e^{\eps'}-1}{e^\eps-1} \delta$. Then, for any $(\eps, \delta)$-$\SDP$ algorithm $M$,
 we have 
\begin{align*}
\sum_{x \in \cR} \Pr[M(x) \ne x] \geq 0.5 e^{-\eps'}(1 - \delta')\left(|\cR| - \frac{2^n}{\binom{n}{\leq d}}\right).
\end{align*}
\end{theorem}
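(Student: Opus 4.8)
The plan is a packing-plus-matching argument that adapts the classical ``blatant non-privacy implies non-DP'' proof to the restricted domain $\cR$. First I would build the graph $G$ with vertex set $\cR$ and an edge between $x, x' \in \cR$ whenever $\|x - x'\|_1 \leq 2d+1$; this is precisely the subgraph of $\bbH^{2d+1}$ induced on $\cR$. Since every independent set of $G$ is also an independent set of $\bbH^{2d+1}$, \Cref{fac:hcpacking} gives $\inds(G) \leq \inds(\bbH^{2d+1}) \leq 2^n/\binom{n}{\leq d}$. Applying \Cref{fac:matching} to $G$ (whose vertex set has size $|\cR|$) then yields a matching $\cM$ in $G$ of size at least $\tfrac12\bigl(|\cR| - 2^n/\binom{n}{\leq d}\bigr)$.

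Next I would analyze a single matched pair $\{x, x'\} \in \cM$. By construction $\|x - x'\|_1 \leq 2d+1$, so group privacy (\Cref{fact:group-dp}) gives $M(x) \approx_{\eps', \delta'} M(x')$ with $\eps' = (2d+1)\eps$ and $\delta' = \frac{e^{\eps'}-1}{e^{\eps}-1}\delta$. Instantiating the indistinguishability inequality on the singleton event $\{x'\}$, with $(\cD_0,\cD_1) = (M(x'), M(x))$, gives $\Pr[M(x') = x'] \leq e^{\eps'}\Pr[M(x) = x'] + \delta'$, and since $x \neq x'$ we get $\Pr[M(x) \neq x] \geq \Pr[M(x) = x'] \geq e^{-\eps'}\bigl(\Pr[M(x') = x'] - \delta'\bigr)$. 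Writing $b := \Pr[M(x') \neq x']$ and using $\Pr[M(x')=x'] = 1-b$, this rearranges to
\[
\Pr[M(x) \neq x] + b ~\geq~ e^{-\eps'}(1 - \delta') + b\,(1 - e^{-\eps'}) ~\geq~ e^{-\eps'}(1 - \delta'),
\]
where the last step uses $b \geq 0$ and $e^{-\eps'} \leq 1$. Hence every matched pair contributes at least $e^{-\eps'}(1-\delta')$ to $\Pr[M(x)\neq x] + \Pr[M(x')\neq x']$.

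Finally I would sum over the matching. The vertices appearing in $\cM$ are distinct elements of $\cR$, and $\Pr[M(x)\neq x] \geq 0$ for every $x$, so
\[
\sum_{x \in \cR} \Pr[M(x) \neq x] ~\geq~ \sum_{\{x,x'\} \in \cM}\bigl(\Pr[M(x)\neq x] + \Pr[M(x')\neq x']\bigr) ~\geq~ |\cM|\cdot e^{-\eps'}(1-\delta'),
\]
and substituting $|\cM| \geq \tfrac12\bigl(|\cR| - 2^n/\binom{n}{\leq d}\bigr)$ yields the claimed bound.

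I do not anticipate a serious obstacle here: the argument is elementary once the graph is set up. The two points requiring care are keeping the direction of the group-privacy / indistinguishability inequality straight (so the factor $e^{-\eps'}$ and the additive $\delta'$ land on the correct side), and noting that passing from $\bbH^{2d+1}$ to its induced subgraph on $\cR$ can only decrease the independence number, so that \Cref{fac:hcpacking} remains applicable. Everything else is the routine rearrangement shown above.
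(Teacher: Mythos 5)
Your proof is correct and mirrors the paper's argument essentially line for line: the same induced subgraph of $\bbH^{2d+1}$ on $\cR$, the same use of \Cref{fac:hcpacking} and \Cref{fac:matching} to extract a matching of size at least $\tfrac12\bigl(|\cR| - 2^n/\binom{n}{\le d}\bigr)$, and the same per-pair bound $\Pr[M(x)\neq x] + \Pr[M(x')\neq x'] \ge e^{-\eps'}(1-\delta')$ via group privacy. The only difference is cosmetic: you introduce $b := \Pr[M(x')\neq x']$ and rearrange, whereas the paper multiplies $\Pr[M(\tx^i)\neq\tx^i]$ through by $e^{-\eps'}\le 1$ directly; both yield the identical final bound.
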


\begin{proof}
Let $\bbH^{2d + 1}[\cR]$ denote the subgraph of $\bbH^{2d + 1}$ induced on $\cR$.
Notice that $\inds(\bbH^{2d + 1}[\cR]) \leq \inds(\bbH^{2d + 1})$. Therefore, by \Cref{fac:matching} and \Cref{fac:hcpacking}, we can conclude $\bbH^{2d + 1}[\cR]$ contains a matching of size at least $m \geq \left(|\cR| - 2^n / \binom{n}{\leq d}\right) / 2$.
Let the matching be $(x^1, \tx^1), \dots, (x^m, \tx^m)$.

For each $i \in [m]$, we have%
\begin{align*}
&\Pr[M(x^i) \ne x^i] + \Pr[M(\tx^i) \ne \tx^i] \\
&~\geq~ \Pr[M(x^i) = \tx^i] + \Pr[M(\tx^i) \ne \tx^i] \\
&~\geq~ e^{-\eps'}(\Pr[M(\tx^i) = \tx^i] - \delta') + \Pr[M(\tx^i) \ne \tx^i] \\
&~\geq~  e^{-\eps'}(\Pr[\cM(\tx^i) = \tx^i] + \Pr[\cM(\tx^i) \ne \tx^i] - \delta') \\
&~=~ e^{-\eps'}(1 - \delta').
\end{align*}
Adding this over all $i \in [m]$ yields the claimed bound.
\end{proof}

\subsection{Boosting the Distance}
\label{sec:boost-dist}

We can now prove a hardness for larger $\tau$ by dividing the coordinates into groups and applying the previously derived weak hardness result on each group. We note that the ``non-usefulness'' we get on the right hand side is still insufficient for \Cref{thm:no-dp-main}; this will be dealt with in \Cref{sec:boost}.

\begin{theorem} \label{thm:large-dist-err}
Let $n = n' \cdot b'$ for some $n', b' \in \N$.
For any $\cR \subseteq \bit^n, d, \eps, \delta, \zeta$, let $\eps' = (2d+1)\eps$ and $\delta' = \frac{e^{\eps'}-1}{e^\eps-1} \delta$. Then, for any $(\eps, \delta)$-$\SDP$ algorithm $M$,
 there exists $x \in \cR$ such that
\begin{align*}
& \Pr[u^{\NBP}_{\zeta \cdot b', \cR}(M(x), x) = 0]\\
&\textstyle~\geq\left(0.5 e^{-\eps'}(1 - \delta')\left(1 - \frac{2^n}{|\cR| \cdot \binom{n'}{\leq d}}\right)\right) - \zeta.
\end{align*}
\end{theorem}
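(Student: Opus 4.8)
The plan is to partition the $n = n' \cdot b'$ coordinates into $b'$ blocks, each of size $n'$, and run the per-block argument of \Cref{thm:each-block} independently within each block while freezing the coordinates outside the block. Concretely, write $[n] = B_1 \sqcup \dots \sqcup B_{b'}$ with $|B_k| = n'$. For a fixed ``background'' assignment on $[n] \setminus B_k$, restricting $\cR$ to strings agreeing with that background gives a subset of $\bit^{B_k} \cong \bit^{n'}$, and $M$ restricted to these inputs (reading only the $B_k$-coordinates of the output) is still $(\eps, \delta)$-$\SDP$ with respect to single-bit changes inside $B_k$. Applying \Cref{thm:each-block} to this restricted instance shows that the expected number of $x \in \cR$ (with that background) whose $B_k$-block is predicted exactly correctly is at most $\frac{2^{n'}}{\binom{n'}{\le d}}$ plus a $(1 - 0.5 e^{-\eps'}(1-\delta'))$-fraction of the remaining mass; equivalently, averaging over backgrounds and over $x \in \cR$ uniformly, $\Pr_{x \sim \cR}[\,M(x)\text{ is wrong on some coordinate of }B_k\,] \ge 0.5 e^{-\eps'}(1-\delta')\bigl(1 - \tfrac{2^n}{|\cR|\binom{n'}{\le d}}\bigr)$, where the $2^n/(|\cR|\binom{n'}{\le d})$ term comes from summing the $2^{n'}/\binom{n'}{\le d}$ bound over all $2^{n-n'}$ backgrounds and dividing by $|\cR|$.

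The second step is to aggregate across blocks. Let $Z_k$ be the indicator that $M(x)$ errs somewhere in block $B_k$; then $\sum_k \E_{x \sim \cR}[Z_k] \ge b' \cdot 0.5 e^{-\eps'}(1-\delta')\bigl(1 - \tfrac{2^n}{|\cR|\binom{n'}{\le d}}\bigr)$, so by averaging there exists a single $x \in \cR$ with $\E[\sum_k Z_k \mid x] \ge b' \cdot 0.5 e^{-\eps'}(1-\delta')\bigl(1 - \tfrac{2^n}{|\cR|\binom{n'}{\le d}}\bigr)$, where now the expectation is only over the internal randomness of $M$. Since $\sum_k Z_k \le b'$ always, a reverse-Markov (or ``one minus the complement'') argument gives $\Pr[\sum_k Z_k \ge \zeta b'] \ge \bigl(0.5 e^{-\eps'}(1-\delta')(1 - \tfrac{2^n}{|\cR|\binom{n'}{\le d}}) - \zeta\bigr) / (1 - \zeta) \ge 0.5 e^{-\eps'}(1-\delta')(1 - \tfrac{2^n}{|\cR|\binom{n'}{\le d}}) - \zeta$. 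Finally, I would observe that $\sum_k Z_k \ge \zeta b'$ forces $\|M(x) - x\|_1 \ge \zeta b'$ (at least one error per ``bad'' block, and there are at least $\zeta b'$ bad blocks), so $u^{\NBP}_{\zeta b', \cR}(M(x), x) = 0$ on that event, which is exactly the claimed inequality.

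The main obstacle I anticipate is \emph{not} the averaging arithmetic but making the block-restriction step fully rigorous: one must check that the per-block matchings from \Cref{fac:matching}/\Cref{fac:hcpacking} can be chosen consistently as the background varies, and that the group-privacy loss per block is indeed governed by $\eps' = (2d+1)\eps$ — i.e., the matched pairs inside a block are at Hamming distance at most $2d+1$, so \Cref{fact:group-dp} applies with parameter $\eps'$ and $\delta' = \frac{e^{\eps'}-1}{e^{\eps}-1}\delta$. This is precisely the content of \Cref{thm:each-block} applied with domain $\bit^{n'}$, so I would invoke that theorem as a black box for each fixed background rather than re-deriving it, which keeps the bookkeeping manageable. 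A minor subtlety is that \Cref{thm:each-block} bounds $\sum_{x}\Pr[M(x) \ne x]$ over \emph{exact} equality, whereas here ``correct on block $B_k$'' means exact equality \emph{on $B_k$ only}; this is handled by post-processing $M$'s output to its $B_k$-coordinates before applying the theorem, which preserves $(\eps,\delta)$-$\SDP$.
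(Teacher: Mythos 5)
Your proof is correct and follows essentially the same route as the paper: partition the coordinates into $b'$ blocks of size $n'$, invoke \Cref{thm:each-block} on each block-restricted mechanism over all backgrounds $z_{-B_i}$ (with the post-processing observation you make at the end), aggregate across blocks, and then convert the expectation bound into a tail bound — the paper simply uses the pointwise inequality $\indicator\{Z > \zeta\} \geq Z - \zeta$ inside the sum over $x$ rather than your ``average over $x$ first, then reverse Markov'' order, which is the same arithmetic. The only cosmetic slip is that the final step should carry \emph{strict} inequality ($\sum_k Z_k > \zeta b'$, not $\geq$), since $u^{\NBP}_{\zeta b',\cR}=0$ requires $\|M(x)-x\|_1 > \zeta b'$; reverse Markov for a $[0,1]$-valued variable already yields $\Pr[W > \zeta] \geq \E[W]-\zeta$, so this costs you nothing.
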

\begin{proof}
Let $B_i := \{(i - 1)n' + 1, \dots, in'\}$ for all $i \in [b']$. Furthermore, let $\cR_{(B_i, z_{-B_i})}$ denote the set of all $x \in \cR$ such that $x_{-B_i} = z_{-B_i}$.

First, notice that
\begin{align*}
&\sum_{x \in \cR} \Pr[u^{\NBP}_{\zeta \cdot b', \cR}(M(x), x) = 0] \\
&\textstyle=~ \sum_{x \in \cR} \E_{y \gets M(x)} \indicator\set{\frac{|\{i \in [n] \mid y_i \ne x_i\}|}{b'} > \zeta} \\
&\textstyle~\geq~ \sum_{x \in \cR} \E_{y \gets M(x)} \indicator\set{\frac{|\{i \in [b'] \mid y_{B_i} \ne x_{B_i}\}|}{b'} > \zeta} \\
&\textstyle~\geq~ \sum_{x \in \cR} \E_{y \gets M(x)} \left[\Pr_{i \in [b']}[y_{B_i} \ne x_{B_i}] - \zeta\right]  \\
&\textstyle~=~ \left(\frac{1}{b'} \sum_{i \in [b']} \sum_{x \in \cR} \Pr[M(x)_{B_i} \ne x_{B_i}]\right) - \zeta |\cR| \\
&\textstyle~\geq~ \frac{1}{b'} \sum\limits_{\substack{i \in [b'] \\  z_{-B_i} \in \bit^{[n] \setminus B_i} \\ x \in \cR_{(B_i, z_{-B_i})}}}\Pr[M(x)_{B_i} \ne x_{B_i}] - \zeta |\cR|. 
\end{align*}
For each fixed $z_{-B_i} \in \bit^{[n] \setminus B_i}$, consider the mechanism $M': \bit^{B_i} \to \bit^{B_i}$ defined by $M'(x_{B_i}) := M_i(x_{B_i} \circ z_{-B_i})|_{B_i}$. It is clear that $M'$ is $(\eps, \delta)$-$\SDP$. Furthermore, observe that $\Pr[M(x)_{B_i} \ne x_{B_i}] = \Pr[M'(x) \ne x_{B_i}]$ for all $x \in \cR_{(B_i, z_{-B_i})}$. Therefore, by applying \Cref{thm:each-block} and plugging it back into the above, we get
\begin{align*}
&\sum_{x \in \cR} \Pr[u^{\NBP}_{\zeta \cdot b', \cR}(M(x), x) = 0] \\
&\textstyle~\geq~ \frac{1}{b'} \sum\limits_{\substack{i \in [b'] \\ z_{-B_i}}} 0.5 e^{-\eps'}(1 - \delta')\left(|\cR_{(B_i, z_{-B_i})}| - \frac{2^{n'}}{\binom{n'}{\leq d}}\right) \\
& \qquad - \zeta |\cR| \\
&\textstyle~=~ \frac{1}{b'} \sum_{i \in [b']} 0.5 e^{-\eps'}(1 - \delta')\left(|\cR| - \frac{2^{n - n'} \cdot 2^{n'}}{\binom{n'}{\leq d}}\right)\\
& \qquad - \zeta |\cR| \\
&\textstyle~=~ \left(0.5 e^{-\eps'}(1 - \delta')\left(|\cR| - \frac{2^n}{\binom{n'}{\leq d}}\right)\right) - \zeta |\cR|.
\end{align*}
Dividing by $|\cR|$ then gives us the claimed bound.
\end{proof}

\subsection{Boosting the Failure Probability}
\label{sec:boost}

We will now prove the last part of the lower bound, which is to show that the existence of even slightly useful mechanism also leads to an existence of a highly useful mechanism, albeit at a slight increase in the distance threshold. The formal statement and its proof are given below; the proof uses the DP hyperparameter tuning algorithm (\Cref{thm:dp-tuning}).

\begin{theorem} \label{thm:boosting-usefulness}
Suppose that there exists an $(\eps, \delta)$-$\SDP$ mechanism $M: \bit^n \to \bit^n$ that is $\alpha$-useful for $u^{\NBP}_{\tau, \cR}$. Then, for all $C > 0$, there exists an $(\eps', \delta')$-$\SDP$ mechanism $\hM: \bit^n \to \bit^n$ that is $(1 - 1/n^{C})$-useful for $u^{\NBP}_{\tau', \cR}$ where $\eps' = 4\eps + 1, \delta' = O\left(\frac{e^{4\eps} n^C \ln n}{\alpha} \cdot \delta\right)$ and $\tau' = \tau + O\left(\frac{\ln n}{\alpha}\right)$.
\end{theorem}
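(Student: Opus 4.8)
The plan is to boost the success probability by re-running $M$ and keeping a run whose output looks close to the input, using the DP hyperparameter tuning machinery of \Cref{thm:dp-tuning}. I would take the base mechanism $\Mbase$ to be the following: on input $x \in \bit^n$, draw $y \gets M(x)$, then release the noisy score $q \gets \|x - y\|_1 + \Lap(1/\eps)$, and output the pair $(y, q)$. This $\Mbase$ is $(2\eps,\delta)$-$\SDP$: the call to $M$ costs $(\eps,\delta)$, and for each fixed $y$ the function $x \mapsto \|x-y\|_1$ has sensitivity at most $1$ (adjacent datasets differ in one coordinate), so the Laplace step costs $(\eps,0)$, and adaptive composition gives $(2\eps,\delta)$. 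Then I would let $\hM$ be $\Mtuning$ (\Cref{alg:htune}) instantiated with this $\Mbase$, threshold $s := \tau$, a stopping probability $\gamma$ of order $\alpha\, n^{-C}$, and $T := \lceil 2/\gamma\rceil$ steps, post-processing any $\perp$ output to $0^n$ so that $\hM : \bit^n \to \bit^n$; and I would set $\tau' := \tau + L'$ for a cutoff $L'$ of order $\tfrac1\eps\ln(n^C/\alpha) = O(\ln n/\alpha)$.

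For privacy I would plug the $(2\eps,\delta)$-$\SDP$ mechanism $\Mbase$ into \Cref{thm:dp-tuning} (valid since $T \ge 2/\gamma$): this gives that $\Mtuning$, and hence $\hM$ after post-processing, is $(2(2\eps)+1,\; 10 e^{2(2\eps)}\delta/\gamma)$-$\SDP$, i.e.\ $(4\eps+1,\; O(e^{4\eps} n^C \delta/\alpha))$-$\SDP$ for $\gamma$ of order $\alpha\, n^{-C}$; this matches (in fact is stronger than) the claimed $\eps' = 4\eps+1$ and $\delta' = O(e^{4\eps}n^C \ln n\cdot \delta/\alpha)$.

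For usefulness I would fix $x$; if $x \notin \cR$ the utility is automatically $1$, so assume $x \in \cR$. Then $\alpha$-usefulness of $M$ gives $\Pr[\|x - M(x)\|_1 \le \tau] \ge \alpha$, so at each step $j$ the independent ``accept'' event $\{q^j \le \tau\}$ has probability at least $\alpha \cdot \Pr[\Lap(1/\eps) \le 0] = \alpha/2$. I would then bound three failure events: (i) $\hM$ returns some $y^j$ with $\|x - y^j\|_1 > \tau'$, which forces $\Lap^j < \tau - \|x - y^j\|_1 < -L'$ and so, by a union bound over the $T$ steps, occurs with probability at most $\tfrac{T}{2} e^{-\eps L'}$; (ii) $\hM$ returns $\perp$, which a geometric-series estimate (using that every reached step accepts with probability $\ge \alpha/2$) bounds by $O(\gamma/\alpha)$; (iii) the loop exhausts all $T$ steps without producing an output, probability at most $(1 - \alpha/2)^T \le e^{-\alpha T/2}$. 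The stated orders of $\gamma$, $T = \lceil 2/\gamma\rceil$, and $L'$ make each of (i)--(iii) at most $\tfrac13 n^{-C}$, so $\hM$ fails on $x$ with probability at most $n^{-C}$; hence $\hM$ is $(1 - 1/n^C)$-useful for $u^{\NBP}_{\tau', \cR}$ with $\tau' = \tau + L' = \tau + O(\ln n/\alpha)$.

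The step I expect to be the main obstacle is not any individual estimate but the simultaneous parameter balancing: $\gamma$ must be pushed down to $\Theta(\alpha\, n^{-C})$ so that the $\perp$-probability $\Theta(\gamma/\alpha)$ falls below $n^{-C}$, and this is exactly what forces the $n^C$ factor into $\delta' \propto 1/\gamma$; meanwhile $L'$ must stay $O(\ln n/\alpha)$ so the distance blows up only mildly; and $T \ge 2/\gamma$ must remain large enough for the per-step Laplace-tail union bound in (i) to close. A secondary subtlety is that the score released by $\Mbase$ depends both on the dataset and on the internal output of $M$, so establishing its $(2\eps,\delta)$ guarantee needs adaptive (not parallel) composition; one must also be a bit careful about the $\perp$ and fall-through outputs of $\Mtuning$ when viewing $\hM$ as a map into $\bit^n$.
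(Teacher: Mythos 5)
Your proposal is correct and takes essentially the same approach as the paper: construct $\Mbase$ that releases $M(x)$ together with a Laplace-noised score $\|x - M(x)\|_1 + \Lap(1/\eps)$, feed it into the hyperparameter-tuning wrapper $\Mtuning$, and bound the failure probability via a three-event decomposition. The only differences are cosmetic parameter choices: the paper raises the threshold to $s = \tau + O(\ln n/\alpha)/\eps$ and union-bounds the Laplace tail over only $\hT = O(\ln n/\alpha)$ steps, while you keep $s = \tau$ and union-bound over all $T = \Theta(n^C/\alpha)$ steps; consequently your cutoff should be $L' = \Theta(\tfrac1\eps\ln(n^{2C}/\alpha))$ rather than $\Theta(\tfrac1\eps\ln(n^{C}/\alpha))$ to make the term $\tfrac{T}{2}e^{-\eps L'}$ fall below $n^{-C}$, but this is still $O(\ln n/\alpha)$ and does not affect the stated bounds.
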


\begin{proof}
First, let us construct the mechanism $\Mbase: \bit^n \to \bit^n \times \R$ as follows:
\begin{itemize}[leftmargin=*]
\item On input $x \in \bit^n$, first let $y \leftarrow M(x)$.
\item Then, let $q = \|x - y\|_1 + z$ where $z \sim \Lap(1/\eps)$.
\item Output $(y, q)$.
\end{itemize}
Since $M$ is $(\eps, \delta)$-$\SDP$ and the Laplace mechanism is $\eps$-$\SDP$, the basic composition theorem implies that the entire $\Mbase$ mechanism is $(2\eps, \delta)$-$\SDP$.

Let $\hT = \ln(5n^{C})/\alpha$. Let $\tau' = \tau + 2\log(10n^{C}\hT) / \eps$.
We now apply \Cref{alg:htune} with $\gamma = 0.5 / (n^{C} \hT), T = 2 / \gamma$ and threshold $s = \tau' - \log(10n^{C}\hT) / \eps$. \Cref{thm:dp-tuning} ensures that the resulting algorithm $\Mtuning$ is $(4\eps + 1, 10e^{4\eps} \delta / \gamma)$-$\SDP$. Our final mechanism $\hM$ is the mechanism that runs $\Mtuning$. If the output is not $\perp$, $\hM$ returns that output. Otherwise, $\hM$ returns an arbitrary element of $\bit^n$. Since $\hM$ is a post-processing of $\Mtuning$, we have $\hM$ is also $(4\eps + 1, 10e^{4\eps} \delta / \gamma)$-$\SDP$.

We will next show that $\Mtuning$ is $(1 - 1/n^{C})$-useful for $u^{\NBP}_{\tau', \cR}$. By definition of the utility function, this immediately holds for any $x \notin \cR$. Therefore, we may only consider any $x \in \cR$. Consider $\Mtuning$ on such an $x$. Let $y^i, z^i, q^i$ denote the corresponding values of $y, z, q$ in the $i$-th run of $\Mbase$.
We consider the following three events:
\begin{itemize}[leftmargin=*]
\item Let $\mathcal{E}_1$ denote the event that $|\|x^i - y^i\|_1 - q^i| > \log(10n^{C}\hT) / \eps$ for some $i \in [\hT]$.
\item Let $\mathcal{E}_2$ denote the event that $u_{\tau, \cR}(y_i) = 0$ for all $i \in [\hT]$. 
\item Let $\mathcal{E}_3$ denote the event that $\Mtuning$ halts and returns $\bot$ in the first $\hT$ steps.
\end{itemize}
Before we bound the probability of each event, notice that, if none of $\mathcal{E}_1, \mathcal{E}_2, \mathcal{E}_3$ occurs, we must have $u^{\NBP}_{\tau', \cR}(y) = 1$ (where $y$ denotes the output of $\hM$), since $s - \tau, \tau' - s \geq \log(10n^{C}\hT) / \eps$.
That is,
\begin{align*}
\Pr_{y \leftarrow \hM(x)}[u^{\NBP}_{\tau', \cR}(y) = 0] & \leq \Pr[\mathcal{E}_1 \vee \mathcal{E}_2 \vee \mathcal{E}_3] \\
& \leq \Pr[\mathcal{E}_1] + \Pr[\mathcal{E}_2] + \Pr[\mathcal{E}_3].
\end{align*}

We will now bound the probability for each event. For $\mathcal{E}_1$, it immediately follows from the Laplace tail bound together with a union bound that
\begin{align*}
\Pr[\mathcal{E}_1] \leq \hT \cdot 2/(10n^{C}\hT) = 0.2/n^{C}.
\end{align*}
For $\mathcal{E}_2$, the $\alpha$-usefulness of $M$ implies that
\begin{align*}
\Pr[\mathcal{E}_2] \leq (1 - \alpha)^{\hT} \leq 0.2/n^{C}.
\end{align*}
Finally, for $\mathcal{E}_3$, a simple union bound gives
\begin{align*}
\Pr[\mathcal{E}_3] \leq \gamma \cdot \hT \leq 0.5/n^{C}.
\end{align*}
By combining the four inequalities above, we have
\begin{align*}
\Pr_{y \leftarrow \hM(x)}[u^{\NBP}_{\tau', \cR}(y) = 0] < 1/n^{C},
\end{align*}
as desired.
\end{proof}

\subsection{Putting Things Together: Proof of \Cref{thm:no-dp-main}}

\begin{proof}[Proof of \Cref{thm:no-dp-main}]
Suppose for the sake of contradiction that, for some constant $\beps > 0$ and $\bdelta = 1/n^{-27}$ there exists an $(\beps, \bdelta)$-$\SDP$ mechanism $\bM$ that is $0.01$-useful for $u^{\NBP}_{\tau, \cR}$ for every $n \in \N$; recall $\tau_n \le n^{0.9}$.

Using \Cref{thm:boosting-usefulness} with $C=26$, there is a $(\hat{\beps}, \hat{\bdelta})$ mechanism $M'_n$ for $\hat{\beps} = 4\beps + 1$ and $\hat{\bdelta} = O(n^{26} \log n \cdot \bdelta) = O(\log n / n)$ that is $(1 - 1/n^{26})$-useful for $u^{\NBP}_{\tau'_n, \cR_n}$ where $\tau'_n = \tau_n + O(\log n) = O(n^{0.9})$. Plugging this into \Cref{thm:large-dist-err} with $\cR = \cR_n, n' = n^{0.05}, b' = n^{0.95}, \zeta = \tau'_n / b' \leq O(n^{-0.05}), \beps = \hat{\beps}, \bdelta = \hat{\bdelta}, d = (\log n^{0.04})/3\eps$ (which gives $\beps' \leq \log(2n^{0.04})$ and $\bdelta' = O(\log n / n^{0.96}) = o_n(1)$ in \Cref{thm:large-dist-err}), we have
\begin{align*}
\frac{1}{n^{26}} \geq &~  \left(0.5 \cdot e^{-\log(2n^{0.04})}(1 - o_n(1))\left(1 - o_n(1)\right)\right)  \\
&  ~\quad - O(n^{-0.05}) \\
= & ~ O(n^{-0.04}) \cdot (1 - o(1)) - O(n^{-0.05}),
\end{align*} 
which is a contradiction for any sufficiently large $n$.

 \end{proof}

\section{Putting Things Together: Proof of \texorpdfstring{\Cref{thm:main-formal}}{Theorem~\ref{thm:main-formal}}}
\label{sec:proof-main-thm}

Our main theorem follows from combining the main results from the previous two sections.

\begin{proof}[Proof of \Cref{thm:main-formal}]
Let $u = u^{\VLDS}_{\tau, \cR, V}$ be as given in \Cref{thm:cdp-main}, which immediately yields the existence of an $\beps_{\CDP}$-$\CDP$ mechanism that is $(1 - o(1))$-useful.
Furthermore, since $|\cR| \geq 2^{n} / n^{o(\log n)}$, \Cref{thm:no-dp-main} implies that for any constant $\beps_{\SDP}, \balpha > 0$, there is $\bdelta_{\SDP} = 1/n^{27}$ such that no $(\beps_{\SDP}, \bdelta_{\SDP})$-$\SDP$ mechanism is $\balpha$-useful for $u^{\NBP}_{\tau,\cR}$. Finally, applying \Cref{lem:low-diameter-to-nearbypoint}, we can conclude that no $(\beps_{\SDP}, \bdelta_{\SDP})$-$\SDP$ mechanism is $\balpha$-useful for $u^{\VLDS}_{\tau, \cR, V}$. This concludes our proof.
\end{proof}%
\section{Conclusion and Discussion}
\label{sec:conclusion}

In this work, we give a first task that, under certain assumptions, admits an efficient CDP algorithm but does not admit an SDP algorithm (even inefficient ones). As mentioned in \Cref{sec:intro}, perhaps the most intriguing next direction would be to see if there are more ``natural'' tasks for which $\CDP$ algorithms can go beyond known $\SDP$ lower bounds. 

On the technical front, there are also a few interesting directions. For example, it would be interesting to see if the three assumptions in our paper can be removed, relaxed, or replaced (by perhaps more widely believed assumptions). Alternatively, we can ask the opposite question: what are the (cryptographic) assumptions necessary for separating $\CDP$ and $\SDP$? Such a question has been extensively studied in the multiparty model~\cite{HaitnerMST22,GoyalMPS13,GoyalKMPS16,HaitnerMSS19,HaitnerNOSS18}; for example, it is known that key-agreement is necessary and sufficient to get better-than-local-DP protocol for inner product in the two-party setting~\cite{HaitnerMST22}. Achieving such a result in our setting would significantly deepen our understanding of the $\CDP$-vs-$\SDP$ question in the central model.

Another possible improvement is to strengthen the hardness of the adversary. In this paper, we only consider polynomial-time adversaries. Indeed, our $\CDP$ mechanism does \emph{not} remain $\CDP$ against quasi-polynomial adversary. The reason is that we choose the hash value length to be only $o(\log^2 \lambda)$ in \Cref{as:crklhf}, so a trivial ``guess-and-check'' algorithm can break this assumption in time $\lambda^{O(\log \lambda)}$. However, as far as we are aware, there is no inherent barrier in proving a separation with $\CDP$ that holds even against, e.g., sub-exponential time adversaries. Achieving such a result (potentially under stronger or different assumptions) would definitely be interesting.

Furthermore, our task (or more precisely the utility function) is non-uniform (through the choice of $\upsilon_n$). It would also be interesting to have a uniform task.%

\subsection*{Acknowledgments}
We thank Prabhanjan Ananth for helpful discussions about differing-inputs obfuscation, and anonymous reviewers for helpful comments.

\bibliographystyle{alpha}
%\bibliography{ref}
\bibliography{main.bbl}

\appendices
\section{Comparison of various \texorpdfstring{$\diO$}{diO} assumptions}\label{apx:diO-comparison}

We review and compare the various notions of differing inputs obfuscation, showing that the notion of $\diOpcS$ (\Cref{def:diO}) is in fact weaker (or at least, no stronger) than all notions of differing inputs obfuscation studied in literature.

The definition of $\diO$ as given by \cite{BarakGIRSVY12} did not include the notion of a sampler. Informally speaking, it requires that for all efficient adversaries $A$ there is an efficient adversary $A'$ such that if $A$ can distinguish the obfuscation of a circuit $C_0$ from the obfuscation of $C_1$, then for any circuits $C_0'$ and $C_1'$ that are functionally equivalent to $C_0$ and $C_1$ respectively, $A'$ can find an input on which $C_0$ and $C_1$ disagree.

This notion is stronger than the corresponding notion involving samplers. Since most applications of differing-inputs obfuscation in literature are stated using differing-inputs samplers, we will only refer to $\diO$ notions that involve these.

\begin{definition}[Differing-Inputs Circuit Sampler~\cite{AnanthBGSZ13}]\label{def:genSamplers}
An efficient non-uniform sampling algorithm $\Sampler = \set{\Sampler_n}$ is a \emph{differing-inputs sampler} for the parameterized collection $\cC = \set{\cC_n}$ of circuits if the output of $\Sampler_n$ is distributed over $\cC_n \times \cC_n \times \bit^*$ and for every efficient non-uniform algorithm $\cA = \set{\cA_n}$, there exists a negligible function $\negl(\cdot)$ such that for all $n \in \N$:
\[
\Pr_\theta\begin{bmatrix*}[l]
    C_0(y) \ne C_1(y) :\\
    (C_0, C_1, \aux) \gets \Sampler_n(\theta),\\
    y \gets \cA_n(C_0, C_1, \aux)
    \end{bmatrix*}
    ~\le~ \negl(n).
\]
\end{definition}

\begin{description}[leftmargin=2mm]
\item[{\boldmath Plain Sampler.}] We call a differing-inputs sampler as a {\em Plain Sampler} if $\aux$ is always $\bot$.
\item [{\boldmath Public-Coin Sampler.}] We call a differing-inputs sampler as {\em Public-Coin Sampler} if $\aux$ is equal to $\theta$ (precisely \Cref{def:pcSamplers}).
\item[{\boldmath General Sampler.}] We call a differing-inputs sampler as a {\em General Sampler} whenever we want to emphasize that $\aux$ is allowed to be any function of $\theta$. In particular, plain and public-coin samplers are special cases of general samplers.
\end{description}
Note that, the more information that $\aux$ is allowed to contain, the more restricted the distribution over circuit pairs $(C_0, C_1)$ gets. In particular, any public-coin Sampler remains a differing-inputs Sampler if we set $\aux$ to be some function of $\theta$ (instead of being all of $\theta$), and similarly, any general differing-inputs Sampler can be converted to a plain-Sampler by simply setting $\aux=\bot$.

We can consider two notions of security of differing inputs obfuscators, depending on whether or not the distinguisher has access to $\aux$. Recall that the ``differing-inputs'' condition in \Cref{def:diO} was
\begin{align}
\begin{vmatrix*}[l]
\Pr_{\theta} \left [D_n(\diO(1^n, C_0)) = 1 \right] \\
- \Pr_{\theta} \left [D_n(\diO(1^n, C_1)) = 1 \right ]
\end{vmatrix*} &~\le~ \negl(n).\label{eq:no-aux}
\end{align}
On the other hand, we could consider a different notion where for any general sampler $\Sampler$, for $(C_0, C_1, \aux) \gets \Sampler_n(\theta)$, we replace the ``differing-inputs'' condition with
\begin{align}
\begin{vmatrix*}[l]
\Pr_{\theta} \left [D_n(\diO(1^n, C_0), \aux) = 1 \right] \\
- \Pr_{\theta} \left [D_n(\diO(1^n, C_1), \aux) = 1 \right ]
\end{vmatrix*} &~\le~ \negl(n).\label{eq:with-aux}
\end{align}

Depending on the type of sampler (plain or public-coin or general) and the notion of security for differing inputs obfuscators (\eqref{eq:no-aux} or \eqref{eq:with-aux}), we get various kinds of $\diO$ assumptions, which we list below.

\begin{description}[leftmargin=2mm]
\item[{\boldmath Plain $\diO$.}]
We refer to $\plaindiO$ as the notion of $\diO$ that holds only against plain samplers. Note, there is no difference here between the security notions of \eqref{eq:no-aux} and \eqref{eq:with-aux}, since $\aux = \bot$ anyway.

\item[{\boldmath Public-Coin $\diO$.}]
We refer to $\pcdiO$, as the notion of public-coin $\diO$ defined by \cite{IshaiPS15}, corresponding to the notion of $\diO$ that holds only against public-coin samplers, where the distinguisher also has access to $\aux = \theta$, as in \eqref{eq:with-aux}.

\item[{\boldmath General $\diO$.}]
We refer to $\gendiO$, as the notion of general $\diO$ defined by \cite{AnanthBGSZ13}, corresponding to the notion of $\diO$ that holds for general samplers, and where the distinguisher also has access to $\aux$, as in \eqref{eq:with-aux}.

\item[{\boldmath $\diO$ for General Samplers.}]
We define $\diOgenS$ as the notion of $\diO$ that holds only against general samplers, but where the distinguisher does not have access to $\aux$, as in \eqref{eq:no-aux}.

\item[{\boldmath $\diO$ for Public-Coin Samplers.}]
This is precisely \Cref{def:diO}, where the security of $\diO$ holds only for public-coin samplers, where the distinguisher does not have access to $\aux = \theta$, as in \eqref{eq:no-aux}.
\end{description}

\begin{figure}[t]
\centering
\begin{tikzpicture}
	\def\scle{1.7}
	\node (plaindio) at (-1*\scle,0*\scle) {$\plaindiO$};
	\node (pcdio) at (1*\scle,0*\scle) {$\pcdiO$};
	\node (gendio) at (0*\scle,1*\scle) {$\gendiO$};
	\node (diogen) at (-1*\scle,-1*\scle) {$\diOgenS$};
	\node (diopcS) at (0*\scle,-2*\scle) {$\diOpcS$};
	\path[{Stealth[length=2mm,width=2mm]}-, line width=1pt]
	(plaindio) edge (gendio)
	(pcdio) edge (gendio)
	(diopcS) edge (diogen)
	(diopcS) edge (pcdio)
        (diogen) edge[bend left=30] (plaindio)
        (plaindio) edge[bend left=30] (diogen);
\end{tikzpicture}
\caption{Comparisons between different $\diO$ assumptions, where $\mathsf{A} \to \mathsf{B}$ denotes that existence of $\mathsf{A}$ implies existence of $\mathsf{B}$, or in other words, existence of $\mathsf{A}$ is a {\em stronger} assumption than existence of $\mathsf{B}$. Existence of $\diOpcS$ (assumption used in this paper) is the weakest among all the notions.}
\label{fig:diO-ass-comp}
\end{figure}

\myparagraph{Comparison between different $\diO$ assumptions.}
The comparison between the assumptions asserting existence of each type of $\diO$ is illustrated in \Cref{fig:diO-ass-comp}, with justification for each arrow given as follows:
\begin{itemize}[leftmargin=*]
	\item Existence of $\gendiO$ implies existence of $\plaindiO$ and $\pcdiO$, since both are special cases corresponding to plain samplers and public-coin samplers respectively.
	\item To the best of knowledge, it is unknown whether the assumptions of existence of $\plaindiO$ and the existence of $\pcdiO$ are comparable or not.
	\item Existence of $\plaindiO$ implies existence of $\diOgenS$ since any general sampler can be converted to a plain sampler by simply setting $\aux=\bot$; note that the distinguisher (in the definition of $\diO$) does not have access to $\aux$ in either case.
	\item Existence of $\diOgenS$ implies existence of $\plaindiO$ and $\diOpcS$ since both are special cases corresponding to plain samplers and public-coin samplers respectively.
	\item Existence of $\pcdiO$ implies existence of $\diOpcS$, since the distinguisher in the definition of $\diOpcS$ does not have access to $\theta$ and hence is less powerful.
\end{itemize}

\noindent Finally, one may wonder, what was special about the application of $\diO$ in this paper that only required $\diOpcS$ and not $\gendiO$ or $\pcdiO$ as in prior work in cryptography. The main reason is that, in cryptographic applications, an $\aux$ is provided to adversaries to enable certain cryptographic functionality (such as by revealing some public key parameters), and thus, it is required that the $\diO$ is secure even given knowledge of this $\aux$ information. In applications of $\pcdiO$, the distinguisher typically does not have access to all of $\theta$ (such as some secret key parameters may be hidden), but security given knowledge of entire $\theta$ implies security given partial knowledge of $\theta$. In the setting of this paper, there wasn't any particular functionality that needed to be enabled, other than basic circuit evaluation, and the particular circuit samplers of interest were public-coin differing inputs samplers, which is why it suffices to only assume $\diOpcS$.%
\end{document}